\documentclass[sn-mathphys-num]{sn-jnl}

\usepackage[T1]{fontenc}
\usepackage{amssymb,amsmath,latexsym}
\usepackage{amsthm}
\usepackage{mathrsfs}
\usepackage{diagbox}
\usepackage[shortlabels]{enumitem}
\usepackage{color}
\usepackage{comment}
\usepackage{anyfontsize}
\usepackage{xcolor}
\usepackage{stackengine}
\usepackage{calc}
\usepackage{physics}
\usepackage{accents}
\usepackage{hyperref}
\usepackage{appendix}
\usepackage{colortbl}
\usepackage{booktabs}
\usepackage{tabularx}
\usepackage{tikz-cd}
\usepackage{cleveref}
\usepackage{caption}

\usepackage[most]{tcolorbox}
\usepackage{hyperref}
\hypersetup{
  colorlinks   = true,
  urlcolor     = blue,
  linkcolor    = blue,
  citecolor    = blue
}
\allowdisplaybreaks
\newcommand{\wt}{\widetilde}

\newcommand{\mb}{\mathbb}
\newcommand{\mc}{\mathcal}

 \definecolor{thmblue}{HTML}{1F4E79} \definecolor{thmblueback}{HTML}{EEF6FF} \definecolor{lemmapurple}{HTML}{5B3A8E} \definecolor{lemmapurpleback}{HTML}{F5F0FF} \definecolor{defgreen}{HTML}{2F6B4F} \definecolor{defgreenback}{HTML}{F0FFF4} \definecolor{conjred}{HTML}{9A3412} \definecolor{conjredback}{HTML}{FFF7ED} \definecolor{remarkgray}{HTML}{555555} \definecolor{remarkgrayback}{HTML}{F7F7F7} 
 \theoremstyle{plain} \newtheorem{theorem}{Theorem}[section] \newtheorem{lemma}[theorem]{Lemma}  \newtheorem{proposition}[theorem]{Proposition}  \newtheorem{conjecture}[theorem]{Conjecture}  \theoremstyle{definition} \newtheorem{definition}[theorem]{Definition}   \theoremstyle{remark}   
 \tcbset{ theoremstylebox/.style={ enhanced, breakable, boxrule=0.7pt, arc=1.2mm, outer arc=1.2mm, left=7pt, right=7pt, top=7pt, bottom=7pt, before skip=10pt, after skip=10pt, borderline west={2.5pt}{0pt}{#1}, } } 
 \tcolorboxenvironment{theorem}{ theoremstylebox=thmblue, colback=thmblueback, colframe=thmblue, } \tcolorboxenvironment{proposition}{ theoremstylebox=thmblue, colback=thmblueback, colframe=thmblue, } \tcolorboxenvironment{prop}{ theoremstylebox=thmblue, colback=thmblueback, colframe=thmblue, } \tcolorboxenvironment{corollary}{ theoremstylebox=thmblue, colback=thmblueback, colframe=thmblue, } \tcolorboxenvironment{lemma}{ theoremstylebox=lemmapurple, colback=lemmapurpleback, colframe=lemmapurple, } \tcolorboxenvironment{claim}{ theoremstylebox=lemmapurple, colback=lemmapurpleback, colframe=lemmapurple, } \tcolorboxenvironment{definition}{ theoremstylebox=defgreen, colback=defgreenback, colframe=defgreen, } \tcolorboxenvironment{assumption}{ theoremstylebox=defgreen, colback=defgreenback, colframe=defgreen, } \tcolorboxenvironment{conjecture}{ theoremstylebox=conjred, colback=conjredback, colframe=conjred, } \tcolorboxenvironment{example}{ theoremstylebox=remarkgray, colback=remarkgrayback, colframe=remarkgray, } \tcolorboxenvironment{remark}{ theoremstylebox=remarkgray, colback=remarkgrayback, colframe=remarkgray, } \tcolorboxenvironment{rem}{ theoremstylebox=remarkgray, colback=remarkgrayback, colframe=remarkgray, }

\renewcommand{\footnoterule}{%
  \kern -3pt
  \hrule width \textwidth height 0.4pt
  \kern 2.6pt
}

\newcommand{\firstpageemails}[1]{%
  \begingroup
  \renewcommand{\thefootnote}{}%
  \footnotetext{#1}%
  \endgroup
}

\DeclareMathOperator{\supp}{supp}

\newcommand{\diag}{\operatorname{diag}}

\newcommand{\Haar}{\mathrm{Haar}}

\renewcommand{\thefootnote}{\arabic{footnote}}
\allowdisplaybreaks

\newcommand{\qd}{\end{proof}\vspace{0.5ex}}
\newcommand{\prf}{\begin{proof}[\bf Proof:]}

\usepackage[normalem]{ulem}

\newcommand{\GL}{\mathrm{GL}}
\newcommand{\U}{\mathrm{U}}

\title{
\centering
The Optimal Rate Function in Covariant Quantum State Tomography}
\author[1,2]{\fnm{Arick} \sur{Grootveld}}
\author[2,3]{Alexander Maloney}
\author[1,2]{Jason Pollack} 
\author[4,5]{Peixue Wu}
\affil[1]{Dept. of Electrical Engineering \& Computer Science, Syracuse University, Syracuse, NY, USA}
\affil[2]{Institute for Quantum \& Information Sciences, Syracuse University, Syracuse, NY, USA}
\affil[3]{Dept. of Physics, Syracuse University, Syracuse, NY, USA}
\affil[4]{Dept. of Applied Mathematics, University of Waterloo, Waterloo, ON, Canada}
\affil[5]{Institute for Quantum Computing, University of Waterloo, Waterloo, ON, Canada}
\date{}

\abstract{
The problem of quantum tomography is to estimate an unknown quantum state $\rho$ from a measurement of $n$ copies of $\rho$.  One can ask which tomography protocol, i.e.\ which choice of multi-copy measurement, gives the best possible estimate of $\rho$.  To do so, we characterize tomography protocols by their \emph{rate function}, which 
governs the exponential rate at which a protocol assigns probability to a particular estimate $\sigma$ of the true state $\rho$.
This rate function is a quantum mechanical generalization of the classical relative entropy between the true state and its estimate, and depends on the choice of protocol.  
It is bounded
by the quantum relative entropy, and we show that this bound is sharp: 
for any $\rho$ and $\sigma$ we construct a family of protocols whose rate functions converge to the quantum relative entropy $D(\sigma\|\rho)$.

We consider the family of covariant tomography protocols; 
these are the basis independent state estimation schemes that assume no prior information about $\rho$ and $\sigma$. 
Keyl described a specific tomography protocol based on Schur sampling, and conjectured that among all covariant 
tomography protocols it has the largest possible rate function for all $\sigma$ and $\rho$. 
We prove this conjecture. 
The resulting rate function is an annealed version of quantum relative entropy, due to the cost of learning the eigenbasis in covariant quantum state tomography.
}
\begin{document}
\maketitle
\firstpageemails{%
\texttt{ \{aegrootv,admalone,japollac\}@syr.edu, p33wu@uwaterloo.ca}.
}

\newpage
\tableofcontents
\section{Introduction}

\subsection{Informal Overview}
The basic task of quantum state tomography is the following: given $n$ copies of an unknown quantum state $\rho$, obtain an accurate estimate for $\rho$.  To do so, one must perform a measurement (or measurements) on these $n$ copies.
Of course, in the limit $n\to \infty$ one can determine $\rho$ perfectly by, for example, measuring its matrix elements to any desired accuracy. 
But in any realistic scenario -- such as where $\rho$ is generated by a physical process which we seek to characterize -- $n$ is finite, and we can obtain only 
an approximate description of $\rho$.  
The natural question is whether there is an \emph{optimal} state estimation scheme. Or more precisely: is there a sequence of measurements -- and description of $\rho$ which one obtains from these measurements -- which gives the best possible approximation to $\rho$?  

The goal of this paper is to answer this question in the affirmative, for a certain definition of ``best possible approximation'' of the unknown state $\rho$.  
Specifically, we will characterize possible estimation schemes in terms of the asymptotic behaviour as $n\to\infty$ of the probability that a scheme makes a large error in its determination of $\rho$.  We will then 
show that an algorithm of Keyl \cite{Keyl_2006} provides the optimal approximation scheme, in the sense that it minimizes these errors.  

To understand this, let us begin by stating the problem of quantum tomography a bit more precisely.
Consider an unknown density matrix \(\rho\in\mathcal D_d\) which we are trying to approximate, where \(\mathcal D_d\) is the state space of a \(d\)-dimensional quantum system.  We are given $n$ copies of the state, described by the tensor product $\rho^{\otimes n}$. 
We wish to perform a measurement of $\rho^{\otimes n}$ in order to obtain an approximate description of $\rho$.  Of course, this approximate description will itself be a density matrix on the original $d$-dimensional quantum system; let us call this approximation $\sigma\in \mathcal D_d$.
What this all means is that the measurement will be described by a positive operator-valued measure (POVM) $M_\sigma^{(n)}$, whose output is labeled by the possible approximations
$\sigma \in \mathcal D_d$ of our unknown density matrix.  
The probability of obtaining a particular estimate $\sigma$ of our original density matrix $\rho$ from this measurement is 
\begin{equation}\label{prob}
    P_n(\sigma| \rho) = \Tr\left({M_\sigma^{(n)} \rho^{\otimes n}}\right).
\end{equation}
This is a probability density\footnote{Strictly speaking, $M_\sigma^{(n)}$ is an operator-valued Radon-Nikodym density with respect to a probability measure.
Similarly, we use density notation for $P_n(\sigma|\rho)$ and 
$I(\sigma\|\rho)$ is the corresponding large-devitation rate function. 
This and other mathematical details are relegated to section \ref{sec:general}.}
on the space of possible outputs $\sigma$, which depends on $\rho$ and on the choice of POVMs $\left\{M_\sigma^{(n)}\right\}$.  
Different algorithms for approximating $\rho$, i.e.\ different tomography protocols, simply amount to different choices of the measurements $\left\{M_\sigma^{(n)}\right\}$.

What we have sketched above is the basic mathematical formulation of quantum tomography: a tomography protocol is a POVM on the $n$-fold tensor product of our Hilbert space, whose outcomes are valued in the space $\mathcal D_d$ of density matrices of the original system.  
Each tomography protocol 
leads to a probability distribution \eqref{prob}
on the space of estimates $\sigma$ of the true state $\rho$. 
We can then compare different tomography protocols by studying the behaviour of this probability distribution $P_n(\sigma | \rho)$.  For example, it is natural to require that $P_n(\sigma | \rho)$ is maximized at the point $\sigma=\rho$, and that as $n$ increases the probability distribution should become increasingly concentrated around this point.

Much of the modern literature on quantum state tomography is formulated in terms of sample complexity, which is a measure of how rapidly the probability density $P_n(\sigma| \rho)$ concentrates around the point $\sigma=\rho$.  More precisely, given a tomography protocol one fixes an accuracy parameter \(\varepsilon>0\),
and a failure probability \(\eta>0\), and asks how large $n$ must be to guarantee that
\[
    \Pr\{\|\sigma-\rho\|_1>\varepsilon\} \equiv \int_{\|\sigma-\rho\|_1>\varepsilon} P_n(\sigma | \rho) d\sigma
    \le \eta .
\]
This line of work has led to sharp bounds for mixed-state tomography, and to
efficient algorithms which achieve nearly optimal or optimal sample complexity
\cite{Haah_2017,o2016efficient,pelecanos2025mixed,scharnhorst2025optimal}.
However, while
sample complexity is an essential benchmark, it gives only a coarse description of the error behavior of a tomography protocol. 
For example, two protocols might have the same sample complexity 
while assigning very different probabilities to particular erroneous estimates at moderate sample sizes.  Moreover, it describes only the extent to which the probability distribution for $\sigma$ concentrates around $\rho$ at large $n$, without taking into account the global structure of the probability distribution $P_n(\sigma| \rho)$.

In this paper we will characterize tomography protocols by studying the full probability distribution $P_n(\sigma| \rho)$ in the large $n$ limit. Specifically, we consider the limit
\begin{equation}\label{rate}
     I(\sigma\|\rho)\equiv -\lim_{n\to\infty} \frac{1}{n} \log P_n(\sigma| \rho),
\end{equation}
when it exists. The function $I(\sigma\|\rho)$ is known as the \emph{rate function} for a tomography protocol, and captures the full asymptotic error landscape:
as \(I(\sigma\|\rho)\) increases, the error where we mistake $\rho$ for $\sigma$
becomes exponentially less likely at large $n$.
Thus, unlike the sample complexity, it allows for asymptotic comparisons of tomography protocols which take into account large deviations of the estimator $\sigma$.  Implicitly in equation \eqref{rate}, we have assumed that the probability scales exponentially with $n$, as $P_n(\sigma|\rho)\approx e^{-n I(\sigma\| \rho)}$.  As in classical statistics, this is often (but not always) the case; this is tantamount to the assumption that $P_n(\sigma|\rho)$ satisfies a Large Deviation Principle (LDP).\footnote{
Details on the LDP are described in Section~\ref{sec:LDP}.}

The definition \eqref{rate} is familiar from the classical theory of empirical measures: if one samples a classical probability distribution many times, then the Sanov theorem says that the probability of observing an empirical distribution decays with rate given by the classical relative entropy. 


The quantum case is more subtle. The rate function
\eqref{rate} depends on the choice of tomography protocol: different POVMs may suppress different erroneous estimates at different exponential rates. Nevertheless, as we review below, it is still the case that the rate is upper bounded by the quantum relative entropy \cite[Theorem~3.3]{Keyl_2006}:
\begin{equation}\label{eq:relative-entropy-upper-intro}
I(\sigma\|\rho) \le D(\sigma\|\rho) := {\rm Tr}\,\sigma\left(\log \sigma - \log \rho\right).
\end{equation}
The bound \eqref{eq:relative-entropy-upper-intro} raises a natural question: is the relative entropy actually achievable as a rate function for quantum tomography?  Our first result answers this question in a pointwise sense. We will define an appropriate set $\mathcal E$ of ``good'' rate functions.\footnote{Specifically, $ \mathcal E$ is the set of rate functions for tomography protocols which are consistent with  LDP, and which obey the unique-zero property that $I(\sigma|\rho)=0$ only if $\sigma=\rho$; this means that errors become exponentially unlikely in the large $n$ limit.  See Section \ref{sec:LDP} for more details.}
Then the relative entropy upper bound is sharp at every pair of states:
\begin{theorem}
\label{thm:relative-entropy-envelope-intro}
For all \(\sigma,\rho\in\mathcal D_d\),
\[
    \sup_{I\in \mc E} I(\sigma\|\rho)=D(\sigma\|\rho).
\]
\end{theorem}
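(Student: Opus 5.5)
The inequality $\sup_{I\in\mathcal E} I(\sigma\|\rho)\le D(\sigma\|\rho)$ is the bound \eqref{eq:relative-entropy-upper-intro} (Keyl, Theorem~3.3), valid for every protocol. So it suffices to show that for fixed $\sigma,\rho$ and every $\varepsilon>0$ there is a protocol with rate function $I_\varepsilon\in\mathcal E$ satisfying $I_\varepsilon(\sigma\|\rho)\ge D(\sigma\|\rho)-\varepsilon$. If $\operatorname{supp}\sigma\not\subseteq\operatorname{supp}\rho$, then $D=\infty$, and we will need rates tending to $\infty$ at this pair; the same construction handles this case by truncation. The guiding idea is that non-covariant protocols are allowed to "know" the eigenbasis of $\sigma$. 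Measuring in that basis reduces the problem to a classical one governed by Sanov's theorem. Blocking $k$ copies and applying the pinching inequality then recovers the full quantum relative entropy up to $O(\log k/k)$.

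\emph{Construction.} Fix $k$. Let $\mathcal P_k$ be the pinching onto the eigenspaces of $\sigma^{\otimes k}$, and fix an orthonormal eigenbasis $\{e_x\}$ of $\sigma^{\otimes k}$. This basis is chosen so that it simultaneously diagonalizes $\mathcal P_k(\rho^{\otimes k})$; this is possible because the pinched operator commutes with $\sigma^{\otimes k}$. Group the $n=mk$ copies into $m$ blocks and measure each block in $\{e_x\}$. This yields i.i.d.\ samples from $q_\rho(x)=\langle e_x|\rho^{\otimes k}|e_x\rangle$. By Sanov's theorem the empirical distribution $\hat q$ obeys an LDP with rate $D(\hat q\|q_\rho)$ per block. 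We output an estimate $\hat\sigma=\Phi(\hat q)$ via a continuous map $\Phi$. We choose $\Phi$ so that:
\begin{enumerate}[(i)]
\item the preimage of $\sigma$ is essentially the point $p_\sigma(x)=\langle e_x|\sigma^{\otimes k}|e_x\rangle$;
\item $\Phi$ is a consistent estimator in general, so that $q_\rho\mapsto\rho$.
\end{enumerate}
Condition (ii) holds, for example, if $\Phi$ decodes the $k$-copy marginal statistics when the basis is informationally complete. Otherwise we augment with a vanishing fraction of blocks on which an informationally complete measurement is made, and let $\Phi$ combine both.

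\emph{Rate at the pair.} By the contraction principle the rate at $\sigma$ is at least $\frac1k D(p_\sigma\|q_\rho)$. Since $\{e_x\}$ diagonalizes both $\sigma^{\otimes k}$ and $\mathcal P_k(\rho^{\otimes k})$, this quantity equals $\frac1k D(\sigma^{\otimes k}\|\mathcal P_k(\rho^{\otimes k}))$. Hayashi's pinching bound gives
\[
D\bigl(\sigma^{\otimes k}\|\mathcal P_k(\rho^{\otimes k})\bigr)\ge kD(\sigma\|\rho)-\log v_k,
\]
where $v_k\le (k+1)^{d}$ is the number of distinct eigenvalues. Hence the rate is at least $D(\sigma\|\rho)-d\log(k+1)/k$, which tends to $D(\sigma\|\rho)$ as $k\to\infty$. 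In the infinite case, $p_\sigma$ charges some $x$ with $q_\rho(x)=0$, so the rate is infinite, or arbitrarily large after mixing in a small amount of noise.

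\emph{Main obstacle.} The hard step is checking that each $I_\varepsilon$ genuinely lies in $\mathcal E$. It must be a good rate function satisfying an LDP for every true state $\rho'$, not just the fixed $\rho$, and it must have a unique zero at $\sigma'=\rho'$. Three issues arise:
\begin{enumerate}[(i)]
\item A block-measured protocol defines $P_n$ only for $n\in k\mathbb N$; we handle other $n$ by discarding at most $k-1$ copies, which does not change exponential rates.
\item Goodness, meaning compact level sets, follows from Sanov's theorem plus continuity of $\Phi$ via the contraction principle.
\item The unique zero requires $\Phi(q_{\rho'})=\rho'$ for all $\rho'$ and injectivity on the relevant statistics.
\end{enumerate}
I would secure the unique zero by the augmentation step. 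We spend a fraction $\delta$ of blocks on a fixed informationally complete measurement, and define $\Phi$ to output the state determined by those statistics, except near the target profile, where the output is shifted toward $\sigma$. Then we verify two things. First, rates are only rescaled by $(1-\delta)$ at the target pair. Second, zeros occur exactly at the truth. This tuning of $\Phi$, continuous yet sharp enough to keep the $D$-rate at $(\sigma,\rho)$, is where I expect the technical work to lie.
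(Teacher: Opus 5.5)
\textbf{The gap: the pinching bound is applied in the wrong direction.} Your key inequality
$D(\sigma^{\otimes k}\|\mathcal P_k(\rho^{\otimes k}))\ge kD(\sigma\|\rho)-\log v_k$ is false. Hayashi's bound $D(\mathcal P_B(A)\|B)\ge D(A\|B)-\log v(B)$ pinches the \emph{first} argument in the eigenspaces of the \emph{second}. You instead pinch the second argument $\rho^{\otimes k}$ in the eigenspaces of the first.

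A concrete counterexample: take $\sigma=|0\rangle\langle 0|$. Then $v_k=2$, $p_\sigma$ is the point mass at $|0^k\rangle$, and $q_\rho(|0^k\rangle)=\langle 0|\rho|0\rangle^k$. Hence $\frac1k D(p_\sigma\|q_\rho)=-\log\langle 0|\rho|0\rangle$ for every $k$. By strict concavity of $\log$, $D(\sigma\|\rho)=-\langle 0|\log\rho|0\rangle$ is strictly larger unless $|0\rangle$ is an eigenvector of $\rho$. For $\rho=|+\rangle\langle +|$ you get $\log 2$ while $D(\sigma\|\rho)=+\infty$, so your infinite-case claim fails as well.

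In general, $\frac1k D(\sigma^{\otimes k}\|\mathcal P_{\sigma^{\otimes k}}(\rho^{\otimes k}))\to S^{\mathrm{Keyl}}(\sigma\|\rho)$; this is the reverse-pinching limit recalled in the related-work discussion. By Proposition~\ref{prop:Keyl-vs-relative-entropy}, this limit is strictly below $D(\sigma\|\rho)$ whenever $[\sigma,\rho]\neq 0$ and $D(\sigma\|\rho)<\infty$.

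No choice of $\Phi$ rescues this. Let $r_\sigma$ denote the statistics of $\sigma$ on your informationally complete blocks. Consistency forces $\Phi(p_\sigma,r_\sigma)=\sigma$. By data processing on the augmentation blocks, your rate at $(\sigma,\rho)$ is therefore at most $(1-\delta)\frac1k D(p_\sigma\|q_\rho)+\delta D(\sigma\|\rho)$. The culprit is your guiding idea: learning in the eigenbasis of the \emph{estimate} $\sigma$ is exactly the annealed mechanism that produces Keyl's rate.

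\textbf{The repair and how it relates to the paper.} The pointwise statement lets you swap roles. Measure each $k$-block in an eigenbasis of $\rho^{\otimes k}$ that also diagonalizes $\mathcal P_{\rho^{\otimes k}}(\sigma^{\otimes k})$. Then
$D(p_\sigma\|q_\rho)=D(\mathcal P_{\rho^{\otimes k}}(\sigma^{\otimes k})\|\rho^{\otimes k})\ge kD(\sigma\|\rho)-\log v(\rho^{\otimes k})$, with $v(\rho^{\otimes k})\le (k+1)^d$. If $\supp\sigma\not\subseteq\supp\rho$, then $\ker\rho^{\otimes k}$ is one of the pinching blocks and carries positive $\sigma^{\otimes k}$-weight, so the exponent is infinite.

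This is Hayashi's Stein test, and with it your scheme becomes a variant of the paper's proof. The paper takes any Stein test $M_\ell$ for null $\sigma_0$ versus alternative $\rho_0$, but keeps only the binary success frequency $\bar Z_n$. It runs a baseline consistent unique-zero protocol (e.g.\ Keyl's) on a fraction $\gamma$ of the copies. It outputs the baseline estimate $\tau$ if $|\bar Z_n-p_\tau|\le\eta$, and a fixed $\omega_*\neq\sigma_0$ otherwise.

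That post-processing is deliberately discontinuous, and its LDP comes from the measurable contraction principle (Theorem~\ref{thm:contraction-LDP}). This settles the tuning of $\Phi$ that you leave open. For a continuous $\Phi$ that tuning is a real constraint: $\Phi$ must be the identity on the whole manifold of true statistics, while every other preimage of $\sigma$ must be costly.

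\textbf{Minor point on the converse.} Keyl's Theorem~3.3 assumes the unique-zero property. For all of $\mathcal E$ you need the short Stein-converse argument. Consistency at $\sigma$ makes $E_n(F_\varepsilon)$ a test with $\Tr(E_n(F_\varepsilon)\sigma^{\otimes n})\to 1$. Then combine the closed-set LDP bound with lower semicontinuity.
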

This theorem resolves the pointwise relative entropy envelope predicted in \cite{Keyl_2006}. 
It is important, however, to emphasize the pointwise nature of this result:
the protocol used to approach $D(\sigma\|\rho)$ will depend on the pair $(\sigma,\rho)$. 
This means that a protocol that performs well for one pair $(\sigma,\rho)$ may perform poorly for other pairs $(\sigma',\rho')$.


Our goal is to understand whether there is a protocol which is optimal in a sense that is {\it independent} of the choice of $\rho$ and $\sigma$.
In other words, the ``best'' protocol should maximize $I(\sigma\|\rho)$ without assuming any information about the unknown state $\rho$. Such a protocol should be basis-independent.  We will therefore focus our attention on tomography protocols which are \emph{covariant}, in the sense that a change of basis for $\rho$ will lead to the same change of basis for our estimate $\sigma$. More precisely, covariant protocols transform equivariantly with respect to the $\U(d)$ unitary rotations of the original $d$-state quantum system: for any unitary $U$, we have
\begin{equation}
M^{(n)}_{U\sigma U^\dagger } = U^{\otimes n} M^{(n)}_\sigma (U^\dagger)^{\otimes n}\,.
\end{equation}
We now have a clear mathematical goal: optimize the rate function within the space of covariant tomography protocols.
In doing so, it is natural to decompose the POVMs with respect to both the $\U(d)$ rotations of the $d$-state system, and the $S_n$ symmetry that permutes the $n$ samples.  
The resulting representation theory is straightforward, due to Schur-Weyl duality \cite{schur1901klasse,weyl1946classical}.

In studying this question, we will be able to answer a question first posed by Keyl \cite{Keyl_2006}, who initiated the large-deviation approach to quantum state tomography based on the rate function \eqref{rate}.
Keyl proposed a tomography protocol, using Schur--Weyl duality, in which $\rho$ is estimated through a two-step measurement procedure. First, we estimate the spectrum of $\rho$ by weak Schur sampling, i.e.\ by using the measurement that projects $\rho^{\otimes n}$ onto an irreducible representation of the $S_n$ symmetry that permutes the $n$ samples.  By Schur-Weyl duality, this is also an irreducible representation of $\U(d)$.  So in the second step we estimate the eigenbasis through a covariant measurement built from the
highest-weight vector in this $\U(d)$ representation.  This protocol is covariant, in the sense described above.

Keyl computed the rate function of this protocol. 
Writing 
\begin{equation}\label{basis}
    \sigma=U\diag(x)U^\dagger,
\end{equation}
where the spectrum $x$ is ordered from largest to smallest, the rate function is~\cite[Theorem 3.2]{Keyl_2006}:
\begin{equation}\label{eq:Keyl-rate-intro}
    S^{\mathrm{Keyl}}(\sigma\|\rho)
    =-H(x) - \sum_{k=1}^d (x_k - x_{k+1}) \log \Delta_k(U^\dagger \rho U),\quad x_{d+1} \equiv 0,
\end{equation}
where \(H(x)\) is the Shannon entropy of \(x\), and $\Delta_k(\cdot)$ is the determinant of the leading \(k\times k\) principal minor. 
This reduces to the relative entropy
\(D(\sigma\|\rho)\) when \(\sigma\) and
\(\rho\) commute, but can be smaller otherwise.

Our main result is that Keyl's algorithm is optimal among all covariant quantum tomography protocols.  Specifically, we prove 
\begin{theorem}\label{thm:main-introduction}
If a covariant tomography protocol is consistent, i.e., if for any unknown state $\rho$ and any $\varepsilon>0$, 
\begin{equation}\label{cons}
     \lim_{n\to \infty}\Pr\{\sigma \in N_\varepsilon(\rho)\} = 1,\quad N_\varepsilon(\rho):= \{\tau \in \mc D_d: \|\tau - \rho\|_1 < \varepsilon\},
\end{equation}
and if it satisfies a large deviation principle with rate function $I(\cdot\|\cdot)$, then its rate function is pointwise upper bounded by Keyl's rate:
\[
    I(\sigma\|\rho)
    \le
    S^{\mathrm{Keyl}}(\sigma\|\rho),
\]
for all quantum states $\sigma,\rho$.
\end{theorem}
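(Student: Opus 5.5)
Our approach is to reduce the problem, via covariance and Schur--Weyl duality, to a fixed irreducible block, and then to compare the rate of an arbitrary covariant POVM with Keyl's highest-weight POVM using a change-of-measure (tilting) argument of Keyl's type.

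\textbf{Step 1: structure of covariant POVMs.} By Schur--Weyl duality, $(\mathbb C^d)^{\otimes n}=\bigoplus_\lambda \mathcal H_\lambda\otimes \mathcal K_\lambda$, and $\rho^{\otimes n}$ acts as $\bigoplus_\lambda \pi_\lambda(\rho)\otimes \mathbb 1$. Averaging over $S_n$ does not change the outcome statistics on $\rho^{\otimes n}$, so we may assume $M^{(n)}_\sigma$ commutes with the permutation action. Covariance then gives $M^{(n)}_{U\sigma U^\dagger}=\bigoplus_\lambda \pi_\lambda(U)A_\lambda(\sigma)\pi_\lambda(U)^\dagger\otimes\mathbb 1$. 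Here $A_\lambda(\sigma)$ is determined by $A_\lambda(\operatorname{diag} x)$, which must commute with $\pi_\lambda$ of the stabilizer torus when $x$ is nondegenerate.

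\textbf{Step 2: reduction to the one-point probability.} Write $P_n(\sigma|\rho)=\sum_\lambda \Tr[\pi_\lambda(U^\dagger\rho U)\,A_\lambda(\operatorname{diag}x)]\dim\mathcal K_\lambda$. We first invoke the consistency hypothesis together with the LDP lower bound near $\sigma$. Using a nonzero test state $\rho'$ in a neighbourhood argument, we show that only $\lambda$ with $\lambda/n\approx x$ can carry non-negligible weight for outcomes near $\sigma$. The alternative, $\lambda/n$ far from $x$, would force exponentially many outcomes near $\sigma$ when the true state has spectrum $\lambda/n$, contradicting consistency at $\rho'=U\operatorname{diag}(\lambda/n)U^\dagger$ combined with the lower bound. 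This localises the spectrum at the rate $-H(x)$ plus the $\dim\mathcal K_\lambda$ and $\dim\mathcal H_\lambda$ counting terms, as in weak Schur sampling.

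\textbf{Step 3: the key inequality (main obstacle).} It remains to show, for $\lambda\approx nx$ and any positive $A$ with the normalisation forced by integrating over $U$, that
\[
\Tr[\pi_\lambda(\tau)A]\ \ge\ e^{-o(n)}\,\Tr[\pi_\lambda(\tau)|\lambda\rangle\langle\lambda|]\cdot(\text{normalisation})\quad\text{near }\tau=U^\dagger\rho U,
\]
since $\langle\lambda|\pi_\lambda(\tau)|\lambda\rangle=\prod_k\Delta_k(\tau)^{\lambda_k-\lambda_{k+1}}$ is exactly what produces the Keyl rate. This inequality is the delicate point. Pointwise it is false: $A$ could be concentrated on a lower weight vector. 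We therefore use consistency again. Integrating the block POVM over a small neighbourhood of $\sigma$ (a neighbourhood of $U$) gives an operator $B$. Consistency at $\rho=\sigma$ forces $\langle\lambda|B|\lambda\rangle$, suitably normalised, to be at least $e^{-o(n)}$, because the state $\pi_\lambda(\operatorname{diag}x)$ is exponentially concentrated near the highest weight vector. Conversely, $\pi_\lambda(\tau)$ for $\tau$ close to $\rho$ dominates $e^{-n\cdot(\text{Keyl})}|\lambda\rangle\langle\lambda|$ modulo corrections controlled by the neighbourhood size. We would attempt this via the Gelfand--Tsetlin (Borel) decomposition $\tau=L D L^\dagger$, so that $\pi_\lambda(\tau)=\pi_\lambda(L)\pi_\lambda(D)\pi_\lambda(L)^\dagger$. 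The factor $\pi_\lambda(D)$ acts on $|\lambda\rangle$ by $\prod\Delta_k^{\lambda_k-\lambda_{k+1}}$, and $\pi_\lambda(L)$ fixes $|\lambda\rangle$ up to lower-triangular mixing, which gives an operator inequality $\pi_\lambda(\tau)\ge c\,\pi_\lambda(D)_{\text{top}}$.

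\textbf{Step 4: conclusion.} Combining these bounds, we take $\liminf\frac1n\log$ of the neighbourhood probability, then shrink the neighbourhood and apply the LDP upper bound, which gives $I(\sigma\|\rho)\le S^{\mathrm{Keyl}}(\sigma\|\rho)$. For degenerate $x$ and boundary states, we would argue by lower semicontinuity of $I$ and continuity of the Keyl expression.
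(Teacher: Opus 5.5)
Your overall structure matches the paper's: the Schur--Weyl normal form, a block $\lambda^{(n)}/n\to x$ selected by consistency at $\operatorname{diag}(x)$ together with the polynomial number of diagrams, and a lower bound on $\widehat\mu_n^\rho(N_\varepsilon(\sigma))$ fed into the LDP. Step 3, however, has a genuine gap.

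You route the comparison through the highest-weight vector, and the claim that consistency forces $\langle\lambda|B|\lambda\rangle\ge e^{-o(n)}$ does not follow from the reason you give. Consistency only yields $\Tr(\pi_\lambda(\operatorname{diag}x)B)\ge e^{-nH(x)-o(n)}$ with $\Tr B\le 1$. The factor $x^\alpha$ agrees with $x^\lambda$ to exponential order for every weight $\alpha\prec\lambda$ with $x^\alpha>0$ and $\alpha/n$ near $x$. So $\pi_\lambda(\operatorname{diag}x)$ is ``concentrated near the top'' only at the level of weights, and these constraints cannot single out $|\lambda\rangle$. For example, when $x_2>0$ they are all satisfied by the normalized projector onto $\mathcal Q_{\lambda,\lambda-e_1+e_2}$, which is torus invariant and orthogonal to $|\lambda\rangle$. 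Whether the rotation-averaging built into $B$ nevertheless forces $e^{-o(n)}$ overlap with $|\lambda\rangle$ is a separate quantitative statement (about near-identity coherent states on near-top weight spaces) that you would have to prove.

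The operator inequality at the end of Step 3 is also false as a global inequality. For $d=2$ and $\lambda=(n,0)$, the best constant in $\pi_\lambda(\tau)\ge c\,\Theta(\tau)^\lambda|\lambda\rangle\langle\lambda|$ is $c=\bigl(\det\tau/(\tau_{11}\tau_{22})\bigr)^n$. This is exponentially small whenever $\tau_{12}\neq 0$, and no $c>0$ works for pure $\tau$ with $\tau_{12}\neq0$, so it cannot produce $S^{\mathrm{Keyl}}$. The Borel decomposition gives only the compression identity $\langle\lambda|\pi_\lambda(\tau)|\lambda\rangle=\Theta(\tau)^\lambda$. That identity can be paired with $B$ only if $B$ has no cross terms between weight spaces.

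The paper closes Step 3 without singling out $|\lambda\rangle$, in three steps.

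(i) The neighbourhood-averaged seed $B=K^\varepsilon_{\lambda,x}$ is torus invariant for every $x$, because $N_\varepsilon(\operatorname{diag}x)$ is invariant under conjugation by diagonal unitaries. Hence $B=\sum_\alpha P_{\lambda,\alpha}BP_{\lambda,\alpha}$. Note also that the torus stabilizes $\operatorname{diag}x$ whether or not $x$ is degenerate, so your separate degenerate case in Step 4 is unnecessary.

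(ii) Using $\lambda/n\to x$, Schur concavity of $H$ and Pinsker's inequality give $x^\alpha\le e^{-nH(x)-nc_\delta}$ for $\alpha\prec\lambda$ with $\|\alpha/n-x\|_1\ge\delta$. The consistency bound therefore forces the weights with $x^\alpha>0$ and $\|\alpha/n-x\|_1<\delta$ to carry total $B$-mass at least $e^{-n\omega(\delta)-o(n)}$, where $\omega(\delta)\to 0$.

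(iii) Your triangular-factorization argument works for every weight, not only the top one. Write $\tau=L^\dagger\operatorname{diag}(\Theta(\tau))L$ with $L$ unipotent upper triangular. Then $\pi_\lambda(L)v\in v+\bigoplus_{\beta\succ\alpha}\mathcal Q_{\lambda,\beta}$ for $v\in\mathcal Q_{\lambda,\alpha}$, which gives $P_{\lambda,\alpha}\pi_\lambda(\tau)P_{\lambda,\alpha}\ge\Theta(\tau)^\alpha P_{\lambda,\alpha}$. Moreover $\Theta(\tau)^\alpha\ge e^{n\sum_k x_k\log\Theta_k(\tau)-n\omega_\tau(\delta)}$ for the weights in (ii), whenever $L_{\mathrm{Keyl}}(x,\tau)<\infty$.

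By (i), $\Tr(\pi_\lambda(\tau)B)=\sum_\alpha\Tr(P_{\lambda,\alpha}\pi_\lambda(\tau)P_{\lambda,\alpha}B)$. Then (ii) and (iii), followed by $\delta\to0$, give $\limsup_n -\frac1n\log\Tr(\pi_\lambda(\tau)B)\le L_{\mathrm{Keyl}}(x,\tau)$, which is exactly the input your Step 4 needs.

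Separately, your Step 2 argues an upper bound on far blocks, which is never needed. Only the existence of one good block near $nx$ matters, and you do reach that in Step 3.
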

The only non-trivial assumption in this theorem is the consistency requirement, equation \eqref{cons}.  This is simply the requirement that our tomography protocol ``works,'' in the sense that the probability measure for the estimator $\sigma$ concentrates around $\rho$ as $n\to\infty$.  In other words, it is the requirement that the sample complexity is finite. 
Since Keyl's original protocol achieves the rate
\(S^{\mathrm{Keyl}}\), this theorem proves that Keyl's rate is the optimal large-deviation exponent among consistent covariant tomography protocols. 

This also gives a precise large-deviation formulation of the optimality suggested in recent work on debiased versions of Keyl's algorithm \cite{pelecanos2025debiased}, where the authors ask in what sense Keyl's entangled measurement should be regarded as optimal for full state tomography.

Before describing the proof of this theorem, let us comment on the physical intuition behind this result.
First, we note that $S^{\mathrm{Keyl}}(\sigma \| \rho) \le D(\sigma\|\rho)$, and that this inequality is saturated only when $\sigma$ and $\rho$ commute (as we show in Proposition \ref{prop:Keyl-vs-relative-entropy}). 
Intuitively, this means that the problem of quantum state-estimation is more difficult than its classical counterpart.  The reason is that in the quantum case one must determine not just the eigenvalues of $\rho$, but also its eigenbasis.  It is the difficulty of finding this basis which slows down the rate at which any covariant tomography protocol can determine a state, from $D(\sigma\|\rho)$ to $S^{\rm{Keyl}}(\sigma\|\rho)$.  

In fact, this obstruction is already visible in our formula  \eqref{eq:Keyl-rate-intro} for $S^{\rm{Keyl}}(\sigma\|\rho)$.
To see this, consider the basis where $\sigma$ is diagonal, with eigevalues $x_1\ge x_2\ge \dots \ge x_d$ and corresponding eigenstates $\left\{|i\rangle\right\}_{i=1}^d$.  It is easy to see that 
$\sigma = \sum_{k=1}^d \left(x_k - x_{k+1}\right) \Pi_k
$
where $\Pi_k \equiv \sum_{i=1}^k |i\rangle\langle i|$ is the projector on to the first $k$ eigenstates.  
We can then compare the usual relative entropy
\begin{equation}\label{irel}
D(\sigma\|\rho) = -H(x) - \sum_{k=1}^d \left(x_k - x_{k+1}\right) {\rm Tr}\,\left(\Pi_k \log \rho\, \Pi_k\right)
\end{equation}
to Keyl's rate function
\begin{equation}\label{ikeyl}
S^{\rm{Keyl}}(\sigma\|\rho) = -H(x) - \sum_{k=1}^d \left(x_k - x_{k+1}\right) {\rm Tr}\,\log \left(\Pi_k\, \rho\, \Pi_k\right)
\end{equation}
The only difference between the two is whether the projection $\Pi_k$ is carried out inside or outside the logarithm.  
This distinction can be interpreted as the additional cost of learning the eigenbasis of $\rho$.

This phenomenon is familiar from statistical physics, where one studies the thermodynamic properties of statistical averages.  Quantities like \eqref{irel}, which involve the  average of a logarithm, are known as \emph{quenched} averages.
Quantities like \eqref{ikeyl} involve the logarithm of an average, and are known as \emph{annealed} averages.
In the present case, the ``averaging'' under consideration is the projection onto the most likely $k$-dimensional eigenspace of $\sigma$.\footnote{To make the analogy even sharper, we can regard $H=-\log \rho$ as a ``Hamlitonian'' which drives the learning process, and consider the Slater determinant state $|\Omega_k\rangle\equiv |1\rangle\wedge\dots\wedge|k\rangle\in \Lambda^k\mathbb C^d$ in the $k$-fold exterior product of our Hilbert space.  Then  the traces appearing in \eqref{irel} and \eqref{ikeyl} are the quenched  and annealed free energies, $\langle \Omega_k| \Lambda^k \log \rho |\Omega_k\rangle$ and $\log \langle \Omega_k| \Lambda^k \rho |\Omega_k\rangle $, respectively.  Slater determinant states are labeled by $k$-dimensional subspaces (i.e. elements of the Grassmanian $\mathrm{Gr}_k({\mathbb C}^d)$), so we may interpret these terms as the cost associated with determining successive rank $k$ eigenspaces of $\rho$. }
Quenched averages arise when the  quantities being averaged over do not interact with the dynamics of the system being studied.  Annealed averages arise when the quantities being averaged evolve along with the dynamics of the system. In our analogy with statistical mechanics, the ``dynamics'' is that of the learning process according to a topographic protocol.  This perspective suggests why the optimal rate among covariant protocols is $S^{\rm{Keyl}}(\sigma\|\rho)$ rather than $D(\sigma\|\rho)$: a covariant protocol requires one to learn the eigenspaces at the same time that one learns the spectrum, resulting in an annealed average rather than a quenched one.

We now briefly sketch the main ideas behind the proof of Theorem \ref{thm:relative-entropy-envelope-intro} and Theorem \ref{thm:main-introduction}.
\subsection{Outline of Proof}
\textbf{Sketch of proof of Theorem~\ref{thm:relative-entropy-envelope-intro}.}

The converse part is a simple application of quantum Stein lemma \cite{hiai1991proper,ogawa2000strong}. For achievability, fix a pair \((\sigma,\rho)\) and an exponent \(a<D(\sigma\|\rho)\).  We split the samples into two independent parts. A small fraction is used for a standard consistent tomography protocol, ensuring the ``good'' property for the rate function. The remaining samples are divided into blocks and measured with a Stein test tailored to \((\sigma,\rho)\), whose empirical success frequency carries an exponent larger than \(a\), for example \cite{Hayashi_2001}.  The final estimator keeps the tomographic estimate when this empirical frequency is compatible with it, and otherwise outputs a fixed state \(\omega_*\). This produces a non-covariant tomography protocol whose
rate at the chosen pair exceeds \(a\), see Figure~\ref{fig:noncovariant} for illustration.

\begin{figure}[!t]
    \centering
    \includegraphics[width=1.03\linewidth]{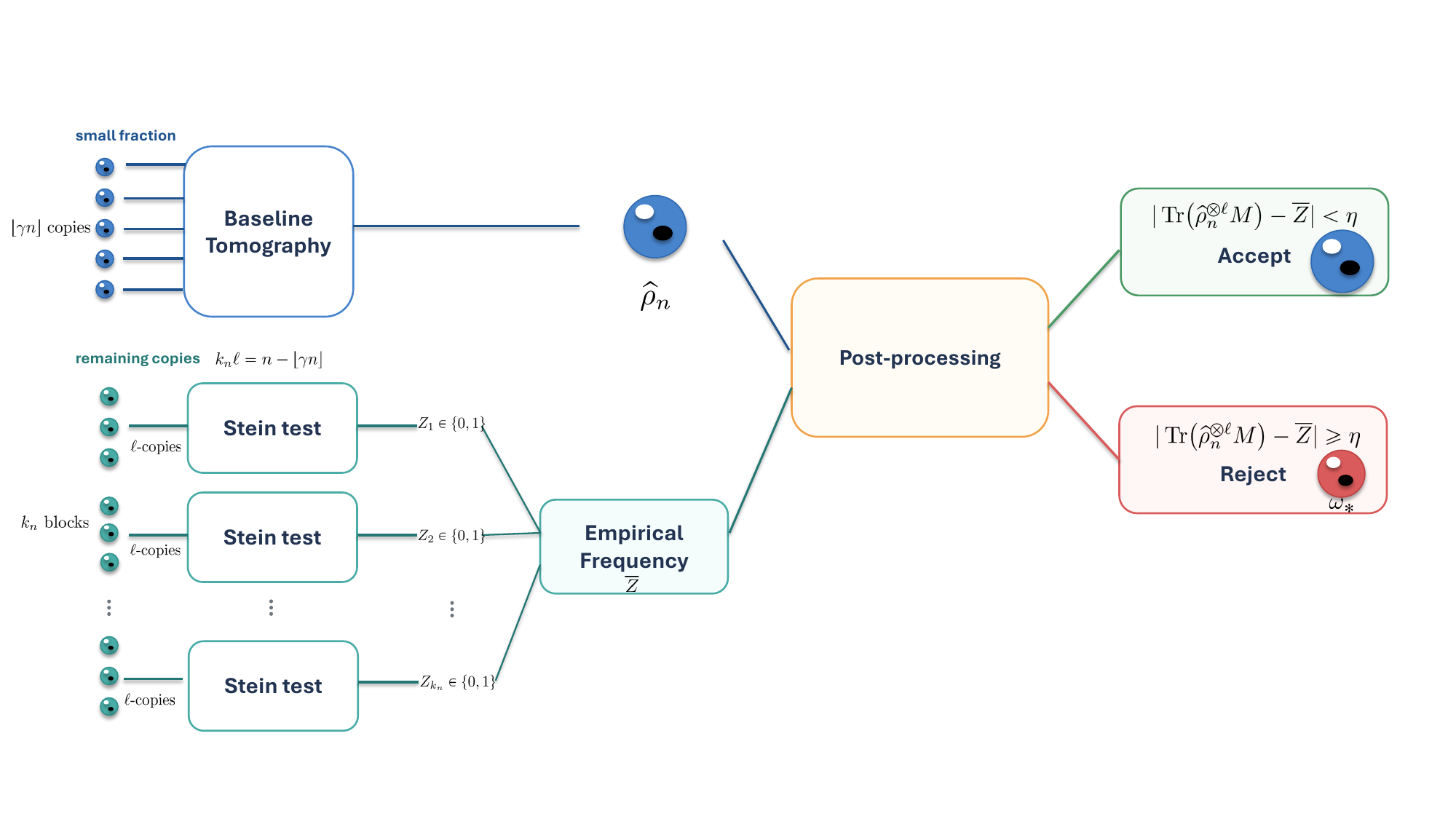}
    \caption{Non-covariant protocol almost achieving the relative entropy as its rate function}
    \label{fig:noncovariant}
\end{figure}

The only technical point is that the final post-processing map has a threshold discontinuity, so the usual contraction principle does not apply directly. We instead use the measurable contraction principle of \cite{Mariani2018}, reviewed in Appendix~\ref{app:LDP}.

\noindent \textbf{Sketch of proof of Theorem~\ref{thm:main-introduction}.}
The proof is divided into three steps:

\noindent \textbf{Step I: A normal form of a general covariant tomography protocol}

Every tomography protocol is i.i.d.-statistically equivalent to one which is invariant under permutations of the $n$ samples. This allows us to characterize the POVM in the Schur basis. A covariant protocol is a POVM indexed by a spectrum $x$ and unitary $U$ for our estimator $\sigma$, as in \eqref{basis}. In the Schur basis, the POVM takes the form given in Figure~\ref{fig:schur-block-diagonal}, where $\lambda^k$ labels different Young diagrams with $n$ boxes and $d$ rows. This characterization is given in Theorem~\ref{thm:characterization-covariant}.
\begin{figure}[!t]
    \centering
    \includegraphics[width=0.99\linewidth]{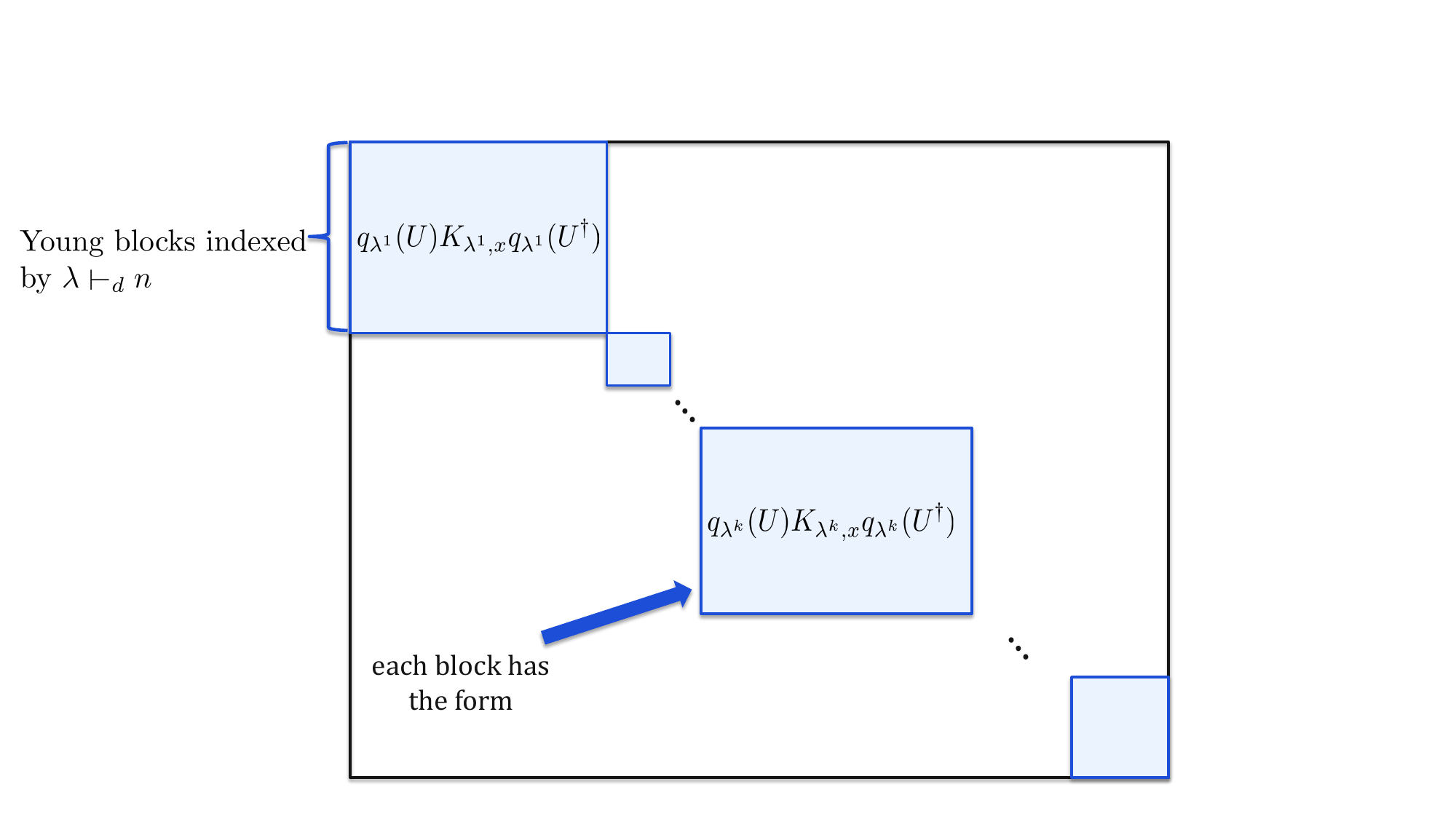}
    \caption{Block diagonal form of a covariant POVM density in the Schur basis.}
    \label{fig:schur-block-diagonal}
\end{figure}









A general tomography protocol is the following:
\begin{tcolorbox}[
    colback=white,
    colframe=black,
    boxrule=0.8pt,
    arc=1pt,
    left=6pt,
    right=6pt,
    top=6pt,
    bottom=6pt
]
Given \(n\)-copies of \(\rho\):
\begin{enumerate}
    \item Apply the Schur transform to write \(\rho^{\otimes n}\) in the Schur basis.
    \item Perform the POVM of Figure~\ref{fig:schur-block-diagonal}, indexed by spectrum \(x\) and unitary \(U\).
    \item Collect the measurement outcomes \(\widehat{x}\) and \(\widehat{U}\), and output
    \[
        \sigma
        =
        \widehat{U}\operatorname{diag}(\widehat{x})\widehat{U}^{\dagger}.
    \]
\end{enumerate}
\end{tcolorbox}
A key point is that, since we are studying covariant protocols, the output gives separate estimates for the spectrum $\widehat{x}$ and the eigenbasis $\widehat{U}$.

\medskip
\noindent
\textbf{Step II: A quantum analogue of the correct empirical type.}

The second step uses consistency. Classically, if the true distribution is \(x\), then a consistent estimator must place almost all of its mass near the empirical type \(x\). In the quantum setting, the Young diagram \(\lambda\) plays the role of the empirical type. Given a state \(\diag(x)^{\otimes n}\), Schur sampling results in a measure concentrated on Young diagrams whose
normalized shape \(\lambda/n\) is close to \(x\). 

Therefore, if the tomography protocol is consistent, then when the true state is \(\diag(x)\) most of the probability 
will come from Young diagrams satisfying \(\lambda/n\approx x\). Since there are only
polynomially many Young diagrams, at least one such Schur block must contribute with the correct exponential order. This is the content of Lemma~\ref{lem:consistent-implication}.

\medskip
\noindent
\textbf{Step III: Weight analysis inside the selected block.}

The final step analyzes the selected Schur block more carefully. Each Schur
module decomposes into weight spaces for the diagonal torus $\mathrm{T}(d)\subseteq \U(d)$. Because the
localized neighborhood of \(\diag(x)\) is invariant under the torus, the
relevant localized operator can be assumed to be block diagonal in these weight
spaces.

Consistency forces a non-negligible contribution from weights whose normalized
weight is close to \(x\). A principal-minor lower bound, Lemma~\ref{lem:priciple-minor-lower-bound}, then compares the value of the same block on an arbitrary state \(\rho\) with its value on \(\diag(x)\). As shown in Lemma~\ref{lem:robust-Keyl-comparison-torus-invariant}, this comparison is exactly where Keyl's expression---involving logarithms of principal minors---appears.

Combining these ingredients gives a lower bound on the probability of 
obtaining an estimate 
$\sigma$ when the true state is \(\rho\). Equivalently, it gives the upper bound
\[
    I(\sigma\|\rho)
    \le
    S^{\mathrm{Keyl}}(\sigma\|\rho),
\]
for the rate function of any consistent covariant protocol satisfying a Large Deviation Principle. Since Keyl's protocol achieves \(S^{\mathrm{Keyl}}\), this proves the optimality of Keyl's rate among covariant tomography protocols.

\subsection{Related work and discussion}



The rate $S^{\mathrm{Keyl}}$ has appeared in several other contexts, often
under a different name.  It is the same quantity as the reverse quantum relative entropy $D_R(\sigma\|\rho):=\lim_{\alpha\to1}D_{RS,\alpha}(\sigma\|\rho)$, where \(D_{RS,\alpha}\) is the reverse sandwiched Rényi divergence introduced in \cite[equation (9)]{audenaert2015alpha}. With this
notation,
\[
    S^{\mathrm{Keyl}}(\sigma\|\rho)=D_R(\sigma\|\rho).
\]
Recent work of Hayashi and Fang \cite{hayashi2026operational} adopts the name reverse quantum relative entropy for this limit and gives it an operational interpretation in composite quantum hypothesis testing. Fix a basis in which the null state \(\rho\) is diagonal, and let \(G\) be the diagonal torus on the corresponding basis. They consider the composite alternative
\[
    \mathscr A_n=\{\rho^{\otimes n}\},\quad\mathscr B_n
    =\left\{
        g^{\otimes n}\sigma^{\otimes n}(g^\dagger)^{\otimes n}
        : g\in G\right\}.
\]
In the Stein limit, Corollary~26 in \cite{hayashi2026operational} shows that the optimal Type-II error exponent is given by $D_R(\rho\|\sigma)=S^{\mathrm{Keyl}}(\rho\|\sigma)$.

Keyl's rate appears naturally through pinching methods in quantum hypothesis testing, including the test achieving the optimal rate in Hayashi and Fang's work. Let
\(\mathcal P_{\rho^{\otimes n}}\) denote the pinching map onto the spectral
projections of \(\rho^{\otimes n}\).  If one pinches the alternative state
\(\sigma^{\otimes n}\) in the eigenbasis of the null state
\(\rho^{\otimes n}\), then
\[
    \lim_{n\to\infty}\frac1n D\left(\rho^{\otimes n}\,\middle\|\mathcal P_{\rho^{\otimes n}}(\sigma^{\otimes n})\right)=D_R(\rho\|\sigma)=S^{\mathrm{Keyl}}(\rho\|\sigma).
\]
This ``right-pinched'' or reverse-pinched relative entropy is the
\(\alpha\to1\) endpoint of the right-pinched Rényi divergences studied in \cite{lipka2024quantum}. In contrast, the standard pinching used in proofs of quantum Stein's lemma~\cite{hiai1991proper} pinches the first argument in the eigenbasis of the second and recovers the usual Umegaki relative entropy \(D(\rho\|\sigma)\). The same reverse-pinching mechanism appears in hypothesis testing with inconclusive outcomes. Ji and Regula~\cite{ji2025beyond} show that, for three-outcome tests with an abstention outcome, reverse pinching yields a test with conclusive probability tending to one while achieving the Umegaki exponent in one type of error and the Keyl exponent in the other.

An additional interpretation of Keyl's rate was given in \cite{notzel2015class} as the asymptotic rate function for a proposed quantum method of types. Keyl's rate also appeared, although implicitly in \cite{hayashi2025another}. There, Hayashi proposed a quantum empirical distribution, achieving Keyl's rate function in a version of Sanov's theorem for hypothesis testing. Keyl's tomography algorithm was generalized to arbitrary compact Lie groups via moment map estimation in \cite{botero2021large}, resulting in a variation of Keyl's rate function. 

\textbf{Open problems.} We end this subsection by recording several open directions suggested by the appearance of
\(S^{\mathrm{Keyl}}\) in tomography and hypothesis testing.

First, our covariant optimality theorem leaves open a regularity version of Keyl's original optimality question. Keyl conjectured that the lower semicontinuous envelope of admissible tomographic rate functions should be given by \(S^{\mathrm{Keyl}}\). Theorem~\ref{thm:relative-entropy-envelope-intro} shows that, without such regularity, the relative entropy \(D(\sigma\|\rho)\) is attainable pointwise by protocols tailored to \((\sigma,\rho)\). Moreover, Theorem~\ref{thm:main-introduction} shows that covariance forces the smaller bound \(S^{\mathrm{Keyl}}\).  It remains open whether joint
lower semicontinuity alone, without assuming covariance of the protocol, is already enough to imply
\[
    I(\sigma\|\rho)
    \le
    S^{\mathrm{Keyl}}(\sigma\|\rho)
    \qquad
    \forall \sigma,\rho\in\mathcal D_d .
\]

Second, it would be interesting to understand whether Keyl's rate admits a broader characterization through composite quantum hypothesis testing. In  simple binary hypothesis testing between two fixed i.i.d. states, the optimal Stein exponent is the Umegaki relative entropy. However, composite hypothesis testing is substantially more subtle: the optimal exponent can be governed by regularized entropic quantities and can be
strictly smaller than the naive worst-pairwise relative entropy
\cite{mosonyi2020error,berta2021composite,frenkel2025errorbounds,lami2025generalised}. This suggests the following problem: characterize the composite hypothesis testing problems whose optimal Stein exponent can achieve or be bounded by \(S^{\mathrm{Keyl}}\). 

Third, the relation between large-deviation rates and the usual sample complexity of tomography remains to be clarified. Sample complexity asks for
finite-\(n\) guarantees of the form
\[
    \Pr\{\|\sigma-\rho\|_1>\varepsilon\}\le\delta,
\]
whereas an LDP controls the exponential rate of this probability for fixed
\(\varepsilon\) and fixed dimension at sufficiently large $n$:
\[
    \Pr\{\|\sigma-\rho\|_1>\varepsilon\}
    \approx \exp\left\{-nI(\sigma\|\rho)\right\}.
\]
Thus, a pointwise larger rate function gives better asymptotic concentration, but this does not immediately imply better sample complexity uniformly in \(d\), rank, \(\varepsilon\), and \(\delta\), because subexponential prefactors, and
local small-\(\varepsilon\) behavior can matter. Recent works have established essentially optimal sample-complexity bounds for mixed-state tomography \cite{Haah_2017,o2016efficient,pelecanos2025mixed,hu2026sample}.
A natural problem is to determine how these finite-sample bounds are related to the rate functions. For example, does the
second-order expansion of \(I(\sigma\|\rho)\) near the diagonal
\(\sigma=\rho\) recover the optimal \(d\)- and rank-dependence of tomography? Conversely, can one turn a globally stronger LDP rate into uniform non-asymptotic sample-complexity improvements?

The remainder of the paper is organized as follows.  
Section~\ref{sec:prelim} recalls the representation-theoretic
background, including Schur--Weyl duality, weight decompositions, Schur polynomials, and the estimates used throughout the paper. 
Section~\ref{sec:general} develops the general POVM framework for tomography protocols and proves the structural characterization of covariant protocols. The large deviation principle and Keyl's conjecture are reviewed in Section~\ref{sec:LDP}. The proof of Theorem~\ref{thm:relative-entropy-envelope-intro} and Theorem~\ref{thm:main-introduction} are provided in Section~\ref{sec:relative-entropy} and Section~\ref{sec:main} respectively. Appendix~\ref{section:Appendix_KeylsRateProperties} collects a few properties of Keyl's rate function and Appendix~\ref{app:LDP} reviews a general contraction principle for LDP.

\subsection*{Acknowledgments}
 We thank B. Chen, V. Gandikota, J. Haah, P. Hayden, J. Wang, and H. Yang for helpful conversations.  The work of AM is supported in part by the Simons Foundation Grant No. 12574.  This work was  performed in part at the Aspen Center for Physics, which is supported by National Science Foundation grant PHY-2210452.



\subsection*{Notation}

Throughout the paper, \(d\in\mb N\) denotes the dimension of the underlying
complex Hilbert space \(\mb C^d\).

\begin{itemize}
    \item We denote by
    \[
        \mc D_d
        :=
        \{\rho\in M_d(\mb C): \rho\ge 0,\ \Tr(\rho)=1\}
    \]
    the set of quantum states on \(\mb C^d\). We also denote by
    \(\U(d)\) the group of unitary operators on \(\mb C^d\), by \(\GL(d)\) the group of invertible linear operators on \(\mb C^d\) and by $\mc L(\mb C^d)$ the set of all linear operators.

    \item Let
    \[
        \Delta_d
        :=
        \left\{
            x=(x_1,\ldots,x_d)\in\mb R^d:
            x_i\ge 0,\ \sum_{i=1}^d x_i=1
        \right\}
    \]
    be the probability simplex, and let
    \[
        \Delta_d^\downarrow
        :=
        \left\{
            x\in\Delta_d:
            x_1\ge x_2\ge \cdots \ge x_d
        \right\}
    \]
    be the set of probability vectors arranged in decreasing order. For any
    vector \(\alpha\in\mb R^d\), we write \(\alpha^\downarrow\) for the vector
    obtained by rearranging the entries of \(\alpha\) in decreasing order.

    \item For \(\alpha,\lambda\in\mb R^d\) with
    \(\sum_{i=1}^d \alpha_i=\sum_{i=1}^d \lambda_i\), we say that
    \(\alpha\) is majorized by \(\lambda\), denoted by
    \(\alpha\prec\lambda\), if
    \[
        \sum_{i=1}^k \alpha_i^\downarrow
        \le
        \sum_{i=1}^k \lambda_i^\downarrow,
        \qquad 1\le k\le d-1.
    \]

    \item For any matrix $A$, denote $\Delta_k(A)$ as the determinant of the leading \(k\times k\) principal minor of $A$: suppose $A = \sum_{i,j=1}^d a_{ij} |i\rangle \langle j|$, denote $A_{k\times k} = \sum_{i,j=1}^k a_{ij} |i\rangle \langle j|$ and 
\begin{equation}\label{eq:determinant-principle-minor}
    \Delta_k(A) = \det A_{k\times k}.
\end{equation}
\end{itemize}

\section{Preliminaries}\label{sec:prelim}
In this section, we gather the necessary representation-theoretic background. Standard references on representation theory include \cite{FultonHarris1991,Sagan2001,GoodmanWallach2009}. From the perspective of quantum information theory, see \cite{Christandl2006BipartiteStates,Leditzky2025RepTheoryQI,Harrow2005SchurTransform,hayashi2017, ODonnell2021}.
\subsection{Schur--Weyl duality}
Let \(d,n\in\mathbb N\), and write \([d]=\{1,\ldots,d\}\). 
We work with the \(n\)-partite Hilbert space
\[
    \mathcal H=(\mathbb C^d)^{\otimes n},
\]
equipped with its computational basis
\begin{equation}\label{eq:computational-basis}
    \bigl\{|i_1 i_2\cdots i_n\rangle : i_k\in [d]\bigr\}.
\end{equation}
Thus the notation \((\mathbb C^d)^{\otimes n}\) fixes both the tensor-product
structure and the computational bases. Let
\begin{equation}\label{eq:Young-diagram}
        \Lambda_{d,n}
    := \{\lambda=(\lambda_1,\ldots,\lambda_d)\in \mathbb Z_{\ge 0}^d:
    \lambda_1\ge \cdots \ge \lambda_d,\ 
    \sum_{i=1}^d \lambda_i=n\},
\end{equation}
which is the set of Young diagrams with \(n\) boxes and at most \(d\) rows. We denote $\lambda \vdash_d n$ as an element in $\Lambda_{d,n}$ and \(\ell(\lambda)\) as the number of nonzero
elements of \(\lambda\).

For each \(\lambda \vdash_d n\), let
\((\mathcal P_\lambda,p_\lambda)\) denote the irreducible \(S_n\)-representation
indexed by \(\lambda\), where \(\mathcal P_\lambda\) is the Specht module. Let
\((\mathcal Q_\lambda,q_\lambda)\) denote the irreducible polynomial
\(\GL(d)\)-representation\footnote{Alternatively, $U(d)$ representation.} of highest weight \(\lambda\), where
\(\mathcal Q_\lambda\) is the Schur module. Schur--Weyl duality decomposes the
tensor space as
\[
    \mathcal H
    =
    \bigoplus_{\lambda \vdash_d n}
    \mathcal P_\lambda\otimes \mathcal Q_\lambda .
\]
After choosing orthonormal bases of $\mc P_\lambda$ and $\mc Q_\lambda$, this
decomposition is implemented by a unitary change of coordinates, the
\textit{Schur transform},
\begin{equation}\label{eq:schur-transform-space}
    U_{\mathrm{Schur}}:
    (\mb C^d)^{\otimes n}
    \longrightarrow
    \bigoplus_{\lambda \vdash_d n}
    \mathcal P_\lambda\otimes \mathcal Q_\lambda .
\end{equation}
The corresponding orthonormal basis of $\bigoplus_{\lambda \vdash_d n}
    \mathcal P_\lambda\otimes \mathcal Q_\lambda $ is called a
\textit{Schur basis}.

The symmetric group \(S_n\) acts on \((\mb C^d)^{\otimes n}\) by permuting tensor factors:
\begin{equation}\label{eq:Sn-action}
    P(\pi)|i_1\cdots i_n\rangle
    :=
    |i_{\pi^{-1}(1)}\cdots i_{\pi^{-1}(n)}\rangle,
    \qquad \pi\in S_n.
\end{equation}
For an $X \in \mathrm{GL}(d)$, $X^{\otimes n}$ defines an action on  on \((\mb C^d)^{\otimes n}\). 
These two
actions commute. In the Schur basis, we have 
\begin{equation}\label{eq:schur-combined-action}
    U_{\mathrm{Schur}}
    \bigl(P(\pi)X^{\otimes n}\bigr)
    U_{\mathrm{Schur}}^\dagger
    =
    \bigoplus_{\lambda \vdash_d n}
    p_\lambda(\pi)\otimes q_\lambda(X).
\end{equation}
If one represents the direct sum using an explicit block-label register, then
\eqref{eq:schur-combined-action} may equivalently be written as
\[
    U_{\mathrm{Schur}}
    \bigl(P(\pi)X^{\otimes n}\bigr)
    U_{\mathrm{Schur}}^\dagger
    =
    \sum_{\lambda \vdash_d n}
    |\lambda\rangle\langle\lambda|
    \otimes p_\lambda(\pi)\otimes q_\lambda(X).
\]
Since \(q_\lambda\) is a polynomial representation, it extends uniquely from
\(\mathrm{GL}(d)\) to all \(X\in \mc L(\mathbb C^d)\). Via Schur's lemma, if $X^n \in \mc L((\mb C^d)^{\otimes n})$ is permutation invariant, i.e.,
\[
P(\pi) X^n  P(\pi)^\dagger= X^n,\quad \forall \pi \in S_n,
\]
then for any $\lambda \vdash_d n$, there exists $X^n_\lambda \in \mc L(\mc Q_\lambda)$ such that 
\begin{equation}\label{eq:permutation-invariant-general}
    U_{\mathrm{Schur}} X^n\, U_{\mathrm{Schur}}^\dagger = 
    \sum_{\lambda \vdash_d n}
    |\lambda\rangle\langle\lambda|
    \otimes \textbf{1}_{\mc P_\lambda}\otimes X^n_\lambda.
\end{equation}

\subsection{Weights of representations of general linear groups}
For \(\alpha\in\mathbb Z_{\ge 0}^d\) with \(|\alpha|=n\), we use the conventions \(0^0=1\) and \(0^m=0\) for \(m>0\) and denote the monomial
\[
t^\alpha:= \prod_{k=1}^d t_k^{\alpha_k},\quad t \in \mb C^d.
\]
Define the \(\alpha\)-weight space of \(\mathcal Q_\lambda\) by
\begin{equation}\label{eq:weight-space-definition}
        \mathcal Q_{\lambda,\alpha}
    :=
    \left\{
        v\in\mathcal Q_\lambda:
        q_\lambda(\diag(t))v=t^\alpha v
        \ \text{for all } t \in(\mathbb C \backslash \{0\})^d
    \right\}. 
\end{equation}
Equivalently, it is enough to test this identity on the compact diagonal torus
\[
    \mathrm{T}(d)
    :=
    \{\diag(z_1,\ldots,z_d): |z_i|=1\}.
\]
The weight-space decomposition is
\[
    \mathcal Q_\lambda
    =
    \bigoplus_{\alpha\in \mathrm{Wt}_d(\lambda)}
    \mathcal Q_{\lambda,\alpha},
\]
where the weights of a Young diagram $\lambda$ are given by 
\[
    \mathrm{Wt}_d(\lambda)
    :=
    \{\alpha\in\mathbb Z_{\ge 0}^d: |\alpha|=n,\ 
      \mathcal Q_{\lambda,\alpha}\neq 0\} =  \{\alpha\in\mathbb Z_{\ge 0}^d: |\alpha|=n,\ 
      \alpha \prec \lambda\}.
\]
The multiplicity of the weight \(\alpha\) is $\dim \mathcal Q_{\lambda,\alpha}$, and is known as a \textit{Kostka number}.

Let \(P_{\lambda,\alpha}\) denote the orthogonal projection from
\(\mathcal Q_\lambda\) onto \(\mathcal Q_{\lambda,\alpha}\). The following well-known facts will be used repeatedly:

\begin{enumerate}
    \item \textbf{Diagonal action.}
    For \(\diag(t)=\diag(t_1,\ldots,t_d)\) with \(t_i>0\),
    \begin{equation}\label{eq:diagonal-decomposition}
        q_\lambda(\diag(t))
        =
        \sum_{\alpha\in\mathrm{Wt}_d(\lambda)}
        t^\alpha P_{\lambda,\alpha}.
    \end{equation}
    Since \(q_\lambda\) is polynomial, the same identity extends by continuity
    to \(t_i\ge 0\).
    \medskip
    \item \textbf{Fourier formula for weight projections.}
    For \(\alpha\in\mathrm{Wt}_d(\lambda)\), and $U(\theta) = \diag(e^{i\theta_1},\ldots,e^{i\theta_d})$,
    \begin{equation}\label{eq:torus-Fourier}
    P_{\lambda,\alpha}
        =
        \int_{[0,2\pi]^d}
        e^{-i\alpha\cdot\theta}
        q_\lambda(U(\theta))
        \frac{d\theta}{(2\pi)^d}.
\end{equation}
    If \(\alpha\notin\mathrm{Wt}_d(\lambda)\), the same integral is \(0\).
    \medskip
    \item \textbf{Torus twirling kills off-diagonal weight blocks.}
    For \(K_\lambda\in\mc L(\mathcal Q_\lambda)\), and $U(\theta) = \diag(e^{i\theta_1},\ldots,e^{i\theta_d})$,
    \begin{equation}\label{eq:twirl-torus}
   \int_{[0,2\pi]^d}
        q_\lambda(U(\theta))K_\lambda\,q_\lambda(U(\theta))^\dagger
        \frac{d\theta}{(2\pi)^d}
        =
        \sum_{\alpha\in\mathrm{Wt}_d(\lambda)}
        P_{\lambda,\alpha}K_\lambda P_{\lambda,\alpha}.
\end{equation}
    \item \textbf{Upper triangular matrices can only increase weights.}
    Let $L$ be an upper triangular matrix, then
    \begin{equation}\label{eq:triangular-weight}
     q_\lambda(L)\mathcal Q_{\lambda,\alpha}
        \subseteq \bigoplus_{\substack{\beta\in\mathrm{Wt}_d(\lambda)\\ \alpha \prec \beta}}
        \mathcal Q_{\lambda,\beta}.
\end{equation}
    To see the direction of the order, note that the elementary matrix $|i\rangle \langle j |,\ i < j$ changes a weight \(\alpha\) to \(\alpha+e_i-e_j\), which is larger in dominance order. Upper triangular matrices are generated by the diagonal torus and such raising operations.
\end{enumerate}
We write \(\mu_{\Haar}\) for the normalized Haar probability measure on the unitary group \(\U(d)\), namely the unique Borel probability measure satisfying $ \mu_{\Haar}(VU)=\mu_{\Haar}(U)=\mu_{\Haar}(UV)$ for fixed $U\in \U(d)$ and all \(V\in \U(d)\). The following property will be used repeatedly.
\medskip
\begin{lemma}[Schur orthogonality]\label{lemma:Schur-orthogonality}
Fix $\lambda \vdash_d n$. For every \(K_\lambda\in \mc L(\mathcal Q_\lambda)\),
\begin{equation}
    \int_{\mathcal U(d)}
    q_\lambda(U)\,K_\lambda\,q_\lambda(U)^\dagger\, d\mu_{\Haar}(U)
    =
    \frac{\Tr(K_\lambda)}{\dim \mathcal Q_\lambda}\,
    \mathbf 1_{\mathcal Q_\lambda}.
\end{equation}
\end{lemma}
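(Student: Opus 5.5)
The plan is the standard Schur's lemma argument. Define
\[
    T := \int_{\U(d)} q_\lambda(U)\,K_\lambda\,q_\lambda(U)^\dagger\, d\mu_{\Haar}(U)\in\mc L(\mathcal Q_\lambda).
\]
The first step is to show that \(T\) commutes with the representation. For any fixed \(V\in\U(d)\), the homomorphism property \(q_\lambda(V)q_\lambda(U)=q_\lambda(VU)\) gives
\[
    q_\lambda(V)\,T\,q_\lambda(V)^\dagger=\int q_\lambda(VU)K_\lambda q_\lambda(VU)^\dagger\,d\mu_{\Haar}(U).
\]
Left invariance of \(\mu_{\Haar}\) then lets us substitute \(U\mapsto V^{-1}U\), so this integral equals \(T\). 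Here we use that \(q_\lambda\) restricted to \(\U(d)\) is unitary in a suitable inner product on \(\mathcal Q_\lambda\), so that \(q_\lambda(V)^\dagger=q_\lambda(V^{-1})\). We take the Schur basis to be orthonormal for such a \(\U(d)\)-invariant inner product, which exists by averaging any inner product over the compact group. Consequently \(T q_\lambda(V)=q_\lambda(V)T\) for all \(V\in\U(d)\).

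The second step is to identify \(T\) as a scalar. The restriction of \(q_\lambda\) to \(\U(d)\) is irreducible, since \(\U(d)\) is Zariski dense in \(\GL(d)\) and \(q_\lambda\) is polynomial, so any \(\U(d)\)-invariant subspace is \(\GL(d)\)-invariant. Schur's lemma over \(\mb C\) therefore gives \(T=c\,\mathbf 1_{\mathcal Q_\lambda}\).

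The third step computes \(c\) by taking traces. Linearity of the integral and cyclicity of the trace give
\[
    \Tr T=\int\Tr\bigl(q_\lambda(U)^\dagger q_\lambda(U)K_\lambda\bigr)\,d\mu_{\Haar}=\Tr K_\lambda,
\]
using unitarity and the fact that \(\mu_{\Haar}\) is a probability measure. Hence \(c=\Tr(K_\lambda)/\dim\mathcal Q_\lambda\).

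The only point needing care is the irreducibility of the restriction to \(\U(d)\), together with the unitarity convention for \(q_\lambda\). I would handle this by citing the standard fact, as in \cite{FultonHarris1991}, or by the density argument: a polynomial identity holding on \(\U(d)\) holds on \(\GL(d)\), since \(\U(d)\) is a real form of \(\GL(d)\).
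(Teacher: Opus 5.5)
Your proof is correct. It is the standard Schur's-lemma argument: invariance of $T$ under conjugation by left-invariance of Haar measure, irreducibility of $q_\lambda|_{\U(d)}$ by Zariski density, and the trace computation. The paper gives no proof at all and treats the lemma as a standard fact from the cited representation-theory references. One simplification is available. You need not produce an invariant inner product by averaging, because in the paper's conventions $q_\lambda(U)$ is already unitary for $U\in\U(d)$: by \eqref{eq:schur-combined-action}, $U_{\mathrm{Schur}}U^{\otimes n}U_{\mathrm{Schur}}^\dagger=\bigoplus_\lambda \mathbf 1_{\mathcal P_\lambda}\otimes q_\lambda(U)$ is a product of unitaries. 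Moreover, the inner product that defines $\dagger$ in the statement is fixed by the orthonormal Schur basis, so it cannot be re-chosen.
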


\subsection{Schur polynomials and concentration of Young diagrams}

Since \(q_\lambda\) is a polynomial representation, the map
\(X\mapsto q_\lambda(X)\) extends from \(\GL(d)\) to all
\(X\in \mc L(\mathbb C^d)\). We define the Schur polynomial, or Schur character,
by
\begin{equation}
    s_\lambda(X):=\Tr q_\lambda(X).
\end{equation}
When \(X\ge 0\) has eigenvalues $r = (r_1,\cdots,r_d)$ with $r_1\ge r_2\ge\cdots\ge r_d\ge 0$, we also write
\[
    s_\lambda(\diag(r))=s_\lambda(X).
\]

We will use the following standard estimates repeatedly.
\medskip
\begin{lemma}[Representation-theoretic estimates]\label{lemma:rep-data}
For \(\lambda\vdash_d n\), let \(\bar\lambda:=\lambda/n\). Then
\begin{equation}\label{eq:dimension-estimates}
    \dim \mathcal Q_\lambda
    \le
    (n+1)^{d(d-1)/2},
\end{equation}
and denote the entropy $H(p):=-\sum_{i=1}^d p_i\log p_i,\ p \in \Delta_d$, we have
\begin{equation}\label{eq:specht-dimension-estimates}
    (n+d)^{-d(d+1)/2} e^{nH(\bar\lambda)}
    \le
    \dim \mathcal P_\lambda
    \le
    e^{nH(\bar\lambda)}.
\end{equation}
Moreover, if \(X\ge 0\) has eigenvalues
$r = (r_1,\cdots,r_d)$ with $r_1\ge r_2\ge\cdots\ge r_d\ge 0$, then
\begin{equation}\label{eq:schur-polynomial-upper-lower}
    r^\lambda
    \le
    s_\lambda(X)
    \le
    \dim\mathcal Q_\lambda\,
    r^\lambda.
\end{equation}
\end{lemma}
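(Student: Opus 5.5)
The three estimates are independent, so I would prove them one at a time, starting with the two dimension bounds.

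For \eqref{eq:dimension-estimates}, the Weyl dimension formula gives
\[
\dim\mathcal Q_\lambda=\prod_{1\le i<j\le d}\frac{\lambda_i-\lambda_j+j-i}{j-i}.
\]
Each factor satisfies \((\lambda_i-\lambda_j+j-i)/(j-i)\le \lambda_i-\lambda_j+1\le n+1\), because \(1+a/b\le 1+a\) for integers \(a\ge0\), \(b\ge1\). There are \(d(d-1)/2\) factors, which gives \((n+1)^{d(d-1)/2}\).

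For the upper bound in \eqref{eq:specht-dimension-estimates}, I would use Schur--Weyl duality evaluated at \(X=\diag(\bar\lambda)\):
\[
1=\Tr\bigl(\diag(\bar\lambda)\bigr)^n=\sum_{\mu}\dim\mathcal P_\mu\, s_\mu(\bar\lambda)\ge \dim\mathcal P_\lambda\, s_\lambda(\bar\lambda).
\]
All monomials in the Schur polynomial are nonnegative and the highest weight \(\lambda\) appears with multiplicity one, so \(s_\lambda(\bar\lambda)\ge\bar\lambda^\lambda=e^{-nH(\bar\lambda)}\). Hence \(\dim\mathcal P_\lambda\le e^{nH(\bar\lambda)}\).

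For the lower bound, I would start from the hook-length formula, or more directly from the determinantal Frobenius formula
\[
\dim\mathcal P_\lambda=\frac{n!}{\prod_i(\lambda_i+d-i)!}\prod_{i<j}(\lambda_i-\lambda_j+j-i).
\]
The Vandermonde product is at least \(1\). Next I bound \((\lambda_i+d-i)!\le(\lambda_i)!\,(n+d)^{d-i}\), which costs a factor of at most \((n+d)^{d(d-1)/2}\). The remaining multinomial coefficient satisfies \(n!/\prod_i\lambda_i!\ge (n+1)^{-d}e^{nH(\bar\lambda)}\) by the standard method-of-types bound. Combining, the total loss is at most \((n+d)^{d(d-1)/2+d}=(n+d)^{d(d+1)/2}\), which is the stated constant.

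For \eqref{eq:schur-polynomial-upper-lower}, by unitary invariance of the character I may take \(X=\diag(r)\). Then \eqref{eq:diagonal-decomposition} gives
\[
s_\lambda(X)=\sum_{\alpha\in\mathrm{Wt}_d(\lambda)}\dim\mathcal Q_{\lambda,\alpha}\, r^\alpha .
\]
\begin{itemize}
\item \emph{Lower bound.} Keep only the term \(\alpha=\lambda\), which has multiplicity one; this gives \(s_\lambda(X)\ge r^\lambda\).
\item \emph{Upper bound.} It suffices to show \(r^\alpha\le r^\lambda\) whenever \(\alpha\prec\lambda\) and \(r\) is decreasing. Then summing the multiplicities gives at most \(\dim\mathcal Q_\lambda\, r^\lambda\).
\end{itemize}

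To prove \(r^\alpha\le r^\lambda\), I first reduce to \(r_d>0\) by continuity, since the case with a zero entry is the limit. Next, permuting \(\alpha\) to \(\alpha^\downarrow\) only increases \(r^\alpha\), by the rearrangement inequality for \(\sum\alpha_i\log r_i\). Then I would use Abel summation:
\[
\sum_i\alpha_i\log r_i=\sum_{k}(\log r_k-\log r_{k+1})\sum_{i\le k}\alpha_i+\log r_d\cdot n.
\]
The coefficients \(\log r_k-\log r_{k+1}\) are nonnegative, and the partial sums \(\sum_{i\le k}\alpha^\downarrow_i\) are dominated by \(\sum_{i\le k}\lambda_i\). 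Hence \(\sum_i\alpha^\downarrow_i\log r_i\le\sum_i\lambda_i\log r_i\), as required.

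I expect the main obstacle to be the exact polynomial constant in the Specht lower bound. The remaining steps are routine.
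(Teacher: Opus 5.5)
Your proposal is correct in all four parts. The paper gives no proof of this lemma and simply records it as a set of standard estimates; your argument supplies exactly the standard derivations: Weyl's dimension formula, the Frobenius determinantal formula combined with the method-of-types bound, and the weight decomposition \eqref{eq:diagonal-decomposition} together with the majorization/Abel-summation inequality $r^\alpha\le r^\lambda$. The polynomial constant you flagged as the main obstacle is already handled by your bookkeeping, since $(n+d)^{-d(d-1)/2}(n+1)^{-d}\ge (n+d)^{-d(d+1)/2}$.
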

Let \(\rho\in\mathcal D_d\). In Schur basis, we have
\[
    U_{\mathrm{Schur}}\rho^{\otimes n}\,U_{\mathrm{Schur}}^\dagger
    =
    \sum_{\lambda\vdash_d n} |\lambda\rangle \langle \lambda|\otimes 
    \mathbf 1_{\mathcal P_\lambda}\otimes q_\lambda(\rho).
\]
Denote the projection onto the \(\lambda\)-block as
\[
    \Pi_\lambda
    =
    |\lambda\rangle\langle\lambda|
    \otimes
    \mathbf 1_{\mathcal P_\lambda}
    \otimes
    \mathbf 1_{\mathcal Q_\lambda}.
\]
Performing the measurements $\{\wt \Pi_\lambda =U_{\mathrm{Schur}}^\dagger\Pi_\lambda\,U_{\mathrm{Schur}}\}_{\lambda \vdash_d n}$ defines a probability distribution on
\(\Lambda_{d,n}\):
\begin{equation}\label{eq:probability-Young}
    \mathbb P_{n,\rho}(\lambda)
    :=
    \Tr(\wt \Pi_\lambda\rho^{\otimes n})
    =
    \dim\mathcal P_\lambda\, s_\lambda(\rho).
\end{equation}
For \(p,x\in\Delta_d\), define the relative entropy
$D(p\|x)
    :=
    \sum_{i=1}^d p_i\log\frac{p_i}{x_i}$, with the convention that \(D(p\|x)=+\infty\) if
\(p_i>0\) for some \(i\) with \(x_i=0\). A well-known Keyl--Werner spectrum-estimation bound \cite{KeylWerner2001, HayashiMatsumoto2002}, see also \cite[Theorem~1 and Corollary~1]{ChristandlMitchison2006}, is given by
\begin{equation}\label{eq:keyl-werner-single}
    \mathbb P_{n,\rho}(\lambda)
    \le
    (n+1)^{d(d-1)/2}
    \exp\!\left(
        -nD\!\left(\overline{\lambda} \|x\right)
    \right).
\end{equation}
As an application, we have the following concentration property:
\medskip
\begin{lemma}\label{lemma:Schur-concentration}
    For any state $\rho \in \mc D_d$ with $x = \mathrm{spec}^\downarrow(\rho)$ and $\delta> 0$, denote 
    \begin{equation}
        G_{n,\delta}(x):= \{\lambda: \|\lambda/n  -x\|_1 < \delta, \ \ell(\lambda) \le \rank(x)\}.
    \end{equation}
Then 
    \begin{equation}
       \lim_{n\to \infty} \mb P_{n,\rho}(G_{n,\delta}(x)) = 1.
    \end{equation}
\end{lemma}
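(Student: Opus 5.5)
The plan is to split the complement of $G_{n,\delta}(x)$ into two parts and show that each carries vanishing probability. Set $r=\rank(x)$. The first part consists of diagrams with $\ell(\lambda)>r$. The second consists of diagrams with $\ell(\lambda)\le r$ but $\|\lambda/n-x\|_1\ge\delta$.

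For the first part, the exact formula \eqref{eq:probability-Young} does most of the work. We have $\mathbb P_{n,\rho}(\lambda)=\dim\mathcal P_\lambda\, s_\lambda(\rho)$. By the upper bound in \eqref{eq:schur-polynomial-upper-lower}, $s_\lambda(\rho)\le \dim\mathcal Q_\lambda\, x^\lambda$. If $\ell(\lambda)>r$, then $\lambda_{r+1}>0$ while $x_{r+1}=0$. With the convention $0^m=0$ for $m>0$, this gives $x^\lambda=0$. Hence every such $\lambda$ has probability exactly zero, and nothing needs to be estimated here. (Equivalently, $D(\bar\lambda\|x)=+\infty$ in \eqref{eq:keyl-werner-single}.)

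For the second part, I would use the Keyl--Werner bound \eqref{eq:keyl-werner-single}, which gives $\mathbb P_{n,\rho}(\lambda)\le (n+1)^{d(d-1)/2}e^{-nD(\bar\lambda\|x)}$. By Pinsker's inequality,
\[
D(\bar\lambda\|x)\ge \tfrac12\|\bar\lambda-x\|_1^2\ge \delta^2/2
\]
whenever $\|\bar\lambda-x\|_1\ge\delta$. The bound holds trivially if $D=+\infty$. The number of diagrams satisfies $|\Lambda_{d,n}|\le (n+1)^d$. A union bound then gives
\[
\mathbb P_{n,\rho}\bigl(G_{n,\delta}(x)^c\bigr)\le (n+1)^{d+d(d-1)/2}e^{-n\delta^2/2}\longrightarrow 0 .
\]
Since the probabilities of all diagrams sum to one, it follows that $\mathbb P_{n,\rho}(G_{n,\delta}(x))\to1$.

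I do not expect a real obstacle. The only delicate point is the rank condition: we must check that the Keyl--Werner bound, or alternatively the Schur-polynomial bound, correctly forces zero probability on diagrams with too many rows. This follows from the convention $0^m=0$ for the monomial $x^\lambda$.
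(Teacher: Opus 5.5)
Your proof is correct and takes the route the paper has in mind. The paper gives no separate proof and presents the lemma as a direct application of the Keyl--Werner bound \eqref{eq:keyl-werner-single}, which is exactly your combination of that bound with Pinsker's inequality and a union bound over polynomially many Young diagrams. Your treatment of the rank condition is also right: when $\ell(\lambda)$ exceeds the rank of $x$, you get $x^\lambda=0$, equivalently $D(\bar\lambda\|x)=+\infty$, so the restriction $\ell(\lambda)\le\mathrm{rank}(x)$ costs nothing.
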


\section{General framework for quantum state tomography}\label{sec:general}
Let $\mathcal H_n=(\mathbb C^d)^{\otimes n}$ be the $n$-fold tensor product Hilbert space, equipped with its computational basis. Given $n$ copies of an unknown state $\rho$, a tomography protocol produces an estimator of $\rho$. 
Thus a tomography protocol is naturally a POVM whose outcome space is the state space \(\mathcal D_d\). 

To allow estimators with a continuum of possible values, we use Borel subsets of \(\mathcal D_d\), denoted as $\mathfrak B(\mathcal D_d)$. 
\medskip
\begin{definition}\label{def:tomography-general}
    For any $n \ge 1$, a quantum state tomography protocol with output space \(\mathcal D_d\), is a $\sigma$-additive positive-operator-valued measure
    \begin{equation}
        E_n:\mathfrak B(\mathcal D_d)\to \mc L(\mathcal H_n)_+,\quad E_n(\mathcal D_d)=\mathbf 1_{\mathcal H_n},\quad E_n(\emptyset) = 0.
    \end{equation}
\end{definition}
Given a protocol \(\{E_n\}_{n\ge 1}\) and \(n\) copies of an unknown state
\(\rho\), the distribution of the estimator is determined by the Born rule:
\begin{equation}\label{eq:distribution-estimator}
    \widehat\mu^\rho_n(A):=\Tr\bigl(E_n(A)\rho^{\otimes n}\bigr).
\end{equation}
It is often convenient to write the POVM \(E_n\) in a density form. Since \(\mathcal H_n\) is finite-dimensional, every such POVM admits a density with respect to some probability measure on \(\mathcal D_d\). Indeed, one may
take
\begin{equation}\label{eq:prob-measure-generic}
    \mu_n(A):=\frac{1}{\dim\mathcal H_n}\Tr E_n(A).
\end{equation}
Now, if \(\mu_n(A)=0\) 
then \(E_n(A)=0\). Hence \(E_n\) is absolutely continuous with respect to \(\mu_n\).
Applying the scalar Radon--Nikodym theorem to the matrix entries of \(E_n\)
gives a measurable positive-operator-valued density
\[
    \sigma\in\mathcal D_d
    \longmapsto
    M_\sigma^{(n)}\in \mc L(\mathcal H_n)_+
\]
such that
\[
    E_n(A)=\int_A M_\sigma^{(n)}\,d\mu_n(\sigma),
    \qquad
    \int_{\mathcal D_d}M_\sigma^{(n)}\,d\mu_n(\sigma)
    =
    \mathbf 1_{\mathcal H_n}.
\]
We call $M_\sigma^{(n)}$ a POVM \textit{density}. With this representation, the probability density of obtaining the estimate \(\sigma\), when the true state is \(\rho\), is
\[
    P_n^M(\sigma|\rho)
    :=
    \Tr\bigl(M_\sigma^{(n)}\rho^{\otimes n}\bigr),
\]
and the probability that the estimator belongs to a subset \(A\subseteq\mathcal D_d\)
is
\[
    \widehat \mu^\rho_n(A)
    =
    \int_A P_n^M(\sigma|\rho)\,d\mu_n(\sigma).
\]
What this means is that, rather than starting from Definition~\ref{def:tomography-general}, we can alternatively characterize a general quantum state tomography protocol as a sequence of measurable families of positive operators \[
\sigma\in\mathcal D_d
    \longmapsto
    M_\sigma^{(n)}\in \mc L(\mathcal H_n)_+
\]
equipped with a sequence of probability measures $\mu_n$ on $\mc D_d$, and subject to the usual constraint
$\int_{\mathcal D_d}M_\sigma^{(n)}\,d\mu_n(\sigma)
    =
    \mathbf 1_{\mathcal H_n}$.
We denote such a tomography protocol as 
\begin{equation}
    \mathfrak T_{\mu,M} = \left(\{M_\sigma^{(n)}\}_{\sigma \in \mc D_d},\mu_n \right)_{n\ge 1}.
\end{equation}
This representation is not unique: different choices of \(\mu_n\) and \(M_\sigma^{(n)}\) may define the same POVM \(E_n\). Moreover, different POVMs may induce the same distribution of estimators. We say two POVMs $\{E_n\}_{n\ge 1},\ \{E'_n\}_{n\ge 1}$ are \textit{i.i.d.\ statistically equivalent} if, for any $\rho$ and $n\ge 1$, we have 
\[
\Tr(E_n(A) \rho^{\otimes n}) = \Tr(E'_n(A) \rho^{\otimes n}),\quad \forall A \in \mathfrak B(\mc D_d).
\]

\subsection*{Covariant protocols} 
We will focus our attention on \emph{covariant} protocols. 
Intuitively, a tomography protocol is covariant if rotating the true state by a unitary \(U\) simply rotates the output estimate
by the same unitary.  This will provide substantially more structure on the space of tomography protocols.
\medskip
\begin{definition}
    We say that a quantum state tomography protocol $\{E_n\}_{n\ge 1}$ is \emph{covariant}, if for any $A \in \mathfrak B(\mc D_d)$, $U \in \U(d)$,
\[
E_n(U A U^\dagger) = U^{\otimes n} E_n(A) (U^\dagger)^{\otimes n},\quad UAU^\dagger
    :=
    \{U\sigma U^\dagger:\sigma\in A\}.
\]
\end{definition}
If \(E_n\) has a density representation
\[
    E_n(A)=\int_A M_\sigma^{(n)}\,d\mu_n(\sigma),
\]
then covariance is equivalently expressed as
\begin{equation}\label{eq:covariant-density}
    \int_{UAU^\dagger} M_\sigma^{(n)}\,d\mu_n(\sigma)
    =
    \int_A
    U^{\otimes n}M_\sigma^{(n)}(U^\dagger)^{\otimes n}
    \,d\mu_n(\sigma).
\end{equation}
In general, the density \(M_\sigma^{(n)}\) itself need not satisfy a pointwise
covariance relation. The next lemma shows that, for covariant protocols, one
can choose a density realization for which such a relation holds.
\medskip
\begin{lemma}\label{lem:covariance-density}
    Suppose the protocol $\{E_n\}_{n\ge 1}$ is covariant, then there exist unitarily invariant probability measures $\mu_n$ on $\mc D_d$ and measurable familty \(\{M_\sigma^{(n)}\}_{\sigma \in \mc D_d}\) with 
    \[
    E_n(A) = \int_A M_\sigma^{(n)} d\mu_n(\sigma), \quad \forall A \in \mathfrak B(\mc D_d),
    \]
    and $\mu_n$-almost surely,
    \[
    M_{U\sigma U^\dagger}^{(n)} = U^{\otimes n} M_\sigma^{(n)} (U^\dagger)^{\otimes n},\quad \forall U\in \U(d).
    \]
\end{lemma}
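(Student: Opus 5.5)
The plan is to replace the base measure by a unitarily invariant one, take the Radon--Nikodym density with respect to it, and then average out whatever non-covariance remains. Let $\mu_n(A):=\frac{1}{\dim\mathcal H_n}\Tr E_n(A)$ be the trace measure from \eqref{eq:prob-measure-generic}. Covariance of $E_n$ gives
\[
\mu_n(UAU^\dagger)=\tfrac{1}{\dim\mathcal H_n}\Tr\bigl(U^{\otimes n}E_n(A)(U^\dagger)^{\otimes n}\bigr)=\mu_n(A),
\]
so this $\mu_n$ is already unitarily invariant. Since $E_n\ll\mu_n$, we get a measurable density $\sigma\mapsto N_\sigma\in\mc L(\mathcal H_n)_+$ with $E_n(A)=\int_A N_\sigma\,d\mu_n(\sigma)$.

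By \eqref{eq:covariant-density}, together with the invariance of $\mu_n$ and the change of variables $\sigma\mapsto U\sigma U^\dagger$, for each fixed $U$ the family $\sigma\mapsto U^{\otimes n}N_{U^\dagger\sigma U}(U^\dagger)^{\otimes n}$ is also a density of $E_n$ with respect to $\mu_n$. By uniqueness of Radon--Nikodym derivatives it agrees with $N_\sigma$ for $\mu_n$-a.e.\ $\sigma$. The difficulty is that the exceptional null set depends on $U$, while the lemma asks for a single full-measure set on which the relation holds for all $U$ at once.

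To get this, I will define the Haar-averaged density
\[
M_\sigma^{(n)}:=\int_{\U(d)}V^{\otimes n}N_{V^\dagger\sigma V}(V^\dagger)^{\otimes n}\,d\mu_{\Haar}(V).
\]
\begin{enumerate}
\item \emph{Well-definedness and measurability.} The map $(V,\sigma)\mapsto V^{\otimes n}N_{V^\dagger\sigma V}(V^\dagger)^{\otimes n}$ is jointly measurable and takes values in positive operators, so the integral is defined in $[0,\infty]$ entrywise. By Fubini and invariance of $\mu_n$, $\int\Tr M_\sigma^{(n)}\,d\mu_n=\dim\mathcal H_n<\infty$. Hence $M_\sigma^{(n)}$ is finite $\mu_n$-a.e., and I set it to $0$ on the exceptional set.
\item \emph{Exact covariance.} Left invariance of Haar measure (substituting $V\mapsto UV$) gives $M_{U\sigma U^\dagger}^{(n)}=U^{\otimes n}M_\sigma^{(n)}(U^\dagger)^{\otimes n}$ exactly, wherever the integral is finite. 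The exceptional set $\{\sigma:\Tr M_\sigma^{(n)}=\infty\}$ is itself unitarily invariant by the same computation, so the relation holds for every $U$ on a single conull set.
\item \emph{$M^{(n)}$ is still a density of $E_n$.} For Borel $A$, Fubini gives
\[
\int_A M_\sigma^{(n)}\,d\mu_n=\int_{\U(d)}\int_A V^{\otimes n}N_{V^\dagger\sigma V}(V^\dagger)^{\otimes n}\,d\mu_n(\sigma)\,d\mu_{\Haar}(V).
\]
For fixed $V$, the inner integral equals $V^{\otimes n}E_n(V^\dagger AV)(V^\dagger)^{\otimes n}=E_n(A)$, using invariance of $\mu_n$ and covariance of $E_n$. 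Averaging over $V$ then leaves $E_n(A)$.
\end{enumerate}

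I expect the main obstacle to be the measure-theoretic bookkeeping: joint measurability of $(V,\sigma)\mapsto N_{V^\dagger\sigma V}$, which holds because conjugation is continuous and $N$ is Borel, and the use of Tonelli for the positive-operator integrands. The averaging itself is routine.
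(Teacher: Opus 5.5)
Your proposal is correct and follows the paper's proof: the trace measure $\mu_n$ is unitarily invariant by covariance, and the Haar-averaged density (your $V$ is the paper's $W^\dagger$) is exactly covariant by invariance of Haar measure, and still integrates to $E_n(A)$ by Fubini, invariance of $\mu_n$, and covariance of $E_n$. Your extra bookkeeping (joint measurability, a.e.\ finiteness of the average, and the unitarily invariant exceptional set) fills in details the paper leaves implicit.
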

\begin{proof}
    Since $E_n$ is covariant, we have that
    \[
    \mu_n(\cdot):= \frac{1}{d^n} \Tr(E_n(\cdot))
    \]
    is unitarily invariant. We denote $\wt M_\sigma^{(n)}$ as the POVM density with respect to $\mu_n$, i.e., 
    \[
    E_n(A) = \int_A \wt M_\sigma^{(n)} d\mu_n(\sigma).
    \]
    Define
    \begin{align*}
        M_\sigma^{(n)} := \int_{\U(d)} (W^\dagger)^{\otimes n} \wt M_{W \sigma W^\dagger}^{(n)} W^{\otimes n} d\mu_{\Haar}(W).
    \end{align*}
    Via the left invariance of the Haar measure, we have, $\mu_n$-almost surely, that $M_{U\sigma U^\dagger}^{(n)} = U^{\otimes n} M_\sigma^{(n)} (U^\dagger)^{\otimes n}$ for all $U\in \U(d)$ and 
    \begin{align*}
        \int_A M_\sigma^{(n)} d\mu_n(\sigma) & = \int_A \int_{\U(d)} (W^\dagger)^{\otimes n} \wt M_{W \sigma W^\dagger}^{(n)} W^{\otimes n} d\mu_{\Haar}(W) d\mu_n(\sigma) \\
        & = \int_{\U(d)} (W^\dagger)^{\otimes n} \int_A \wt M_{W \sigma W^\dagger}^{(n)}  d\mu_n(\sigma) W^{\otimes n} d\mu_{\Haar}(W) \\
        & = \int_{\U(d)} (W^\dagger)^{\otimes n} E_n(WAW^\dagger) W^{\otimes n} d\mu_{\Haar}(W) \\
        & = E_n(A),
    \end{align*}
    where the first equality is the definition of $M_\sigma^{(n)}$; the second equality follows from exchanging the integral; the third equality uses unitary invariance of $\mu_n$ and the last equality follows from the covariance of $E_n$. Thus $M_\sigma^{(n)}$ is also a POVM density for $E_n$, concluding the proof.
\end{proof}
Recall that two protocols are i.i.d.\ statistically equivalent if
they induce the same distribution of the estimators for every input of the form \(\rho^{\otimes n}\). A simple observation is that every protocol is statistically equivalent to a permutation invariant protocol. Thus we have the following characterization:
\medskip
\begin{lemma}\label{lem:permutation-invariance-density}
    Suppose the protocol $\{E_n\}_{n\ge 1}$ is covariant, then there exist unitarily invariant probability measures $\mu_n$ on $\mc D_d$ and measurable familty \(\{M_\sigma^{(n)}\}_{\sigma \in \mc D_d}\) such that $\mu_n$-almost surely,
    \[
    P_\pi M_\sigma^{(n)} = M_\sigma^{(n)} P_\pi,\quad M_{U\sigma U^\dagger}^{(n)} = U^{\otimes n} M_\sigma^{(n)} (U^\dagger)^{\otimes n},\quad \forall U\in \U(d),\ \pi\in S_n,
    \]
and for any $A \in \mathfrak B(\mc D_d)$ and $\rho \in \mc D_d$,
\begin{equation}\label{eq:POVM-same-outcome}
    \Tr(E_n(A) \rho^{\otimes n}) = \int_A \Tr(M_\sigma^{(n)} \rho^{\otimes n})d\mu_n(\sigma).
\end{equation}
\end{lemma}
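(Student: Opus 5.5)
Starting from Lemma~\ref{lem:covariance-density}, I would fix unitarily invariant probability measures $\mu_n$ and a covariant density $\wt M_\sigma^{(n)}$ that represent $E_n$. I then average over the symmetric group:
\[
    M_\sigma^{(n)} := \frac{1}{n!}\sum_{\pi\in S_n} P_\pi\, \wt M_\sigma^{(n)}\, P_\pi^\dagger .
\]
Three properties need checking. First, $M_\sigma^{(n)}$ is positive and depends measurably on $\sigma$, since it is a finite average of positive, measurable families. Second, it commutes with every $P_{\pi'}$, because $P$ is a group representation, so left multiplication by $P_{\pi'}$ only reindexes the sum. Third, it normalizes, $\int M_\sigma^{(n)}\,d\mu_n=\frac1{n!}\sum_\pi P_\pi \mathbf 1 P_\pi^\dagger=\mathbf 1$, so it is a genuine POVM density with respect to the same $\mu_n$.

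For covariance, the key fact is that $U^{\otimes n}$ commutes with each $P_\pi$. On the $\mu_n$-full set where $\wt M_{U\sigma U^\dagger}^{(n)}=U^{\otimes n}\wt M_\sigma^{(n)}(U^\dagger)^{\otimes n}$ holds for all $U$, we therefore get
\[
    M_{U\sigma U^\dagger}^{(n)}=\frac1{n!}\sum_\pi P_\pi U^{\otimes n}\wt M_\sigma^{(n)}(U^\dagger)^{\otimes n}P_\pi^\dagger = U^{\otimes n}M_\sigma^{(n)}(U^\dagger)^{\otimes n}.
\]
One subtlety is that the exceptional null set must be the same for all $U$ simultaneously, or at least invariant under conjugation. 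I would handle this by first making the density covariant everywhere: restrict to the full-measure invariant set produced in Lemma~\ref{lem:covariance-density}, whose Haar-averaged definition already gives a pointwise identity, and redefine the density as $\mathbf 1/\mu_n$-irrelevant values off that set.

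For the statistics \eqref{eq:POVM-same-outcome}, I would use that $\rho^{\otimes n}$ commutes with every $P_\pi$. This gives $\Tr(P_\pi\wt M_\sigma^{(n)} P_\pi^\dagger\rho^{\otimes n})=\Tr(\wt M_\sigma^{(n)}\rho^{\otimes n})$, hence $\Tr(M_\sigma^{(n)}\rho^{\otimes n})=\Tr(\wt M^{(n)}_\sigma\rho^{\otimes n})$ pointwise. Integrating over $A$ then yields $\Tr(E_n(A)\rho^{\otimes n})$.

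The only delicate step is the measure-theoretic handling of the "almost surely for all $U$" quantifier. Everything else is a direct averaging computation, and the new density defines a permutation-invariant POVM that is i.i.d.\ statistically equivalent to $E_n$.
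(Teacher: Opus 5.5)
Your proposal is correct and follows the paper's proof essentially verbatim. Both arguments symmetrize the covariant density from Lemma~\ref{lem:covariance-density} over $S_n$, use $[P_\pi,U^{\otimes n}]=0$ to preserve covariance, and use $[P_\pi,\rho^{\otimes n}]=0$ to leave the i.i.d.\ statistics unchanged. Your normalization check is a small addition the paper leaves implicit. Your null-set workaround is harmless but unnecessary: the set of $\sigma$ at which covariance holds for all $U$ is automatically conjugation-invariant, so your direct computation on that set already suffices.
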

\begin{proof}
    Using Lemma~\ref{lem:covariance-density}, there exist unitarily invariant $\mu_n$ and $\{\wt M_\sigma^{(n)}\}_{\sigma \in \mc D_d}$ such that $\mu_n$-almost surely,
    \[
    \wt M_{U\sigma U^\dagger}^{(n)} = U^{\otimes n} \wt M_\sigma^{(n)} (U^\dagger)^{\otimes n},\quad \forall U\in \U(d).
    \]
    Now define 
    \[
    M_\sigma^{(n)} = \frac{1}{n!} \sum_{\pi \in S_n} P_\pi \wt M_\sigma^{(n)} P_\pi^\dagger.
    \]
    Using the fact that $P(\pi)$ and $U^{\otimes n}$ commute, we have $\mu_n$-almost surely,
    \[
    P_\pi M_\sigma^{(n)} = M_\sigma^{(n)} P_\pi,\quad M_{U\sigma U^\dagger}^{(n)} = U^{\otimes n} M_\sigma^{(n)} (U^\dagger)^{\otimes n},\quad \forall U\in \U(d),\ \pi\in S_n.
    \]
    Moreover, since $\rho^{\otimes n}$ is permutation invariant, we have $\Tr(E_n(A) \rho^{\otimes n}) = \int_A \Tr(M_\sigma^{(n)} \rho^{\otimes n})d\mu_n(\sigma)$, concluding the proof.
\end{proof}
We next record the form of a unitarily invariant probability measure $\mu_n$ on $\mc D_d$. Let $$s: \mc D_d\to \Delta_d^\downarrow,\ s(\sigma) := \mathrm{spec}^\downarrow(\sigma)$$ be the spectrum map, \(\nu_n\) is defined as the probability measure on $\Delta_d^\downarrow$:
\begin{equation}\label{eq:measure-on-simplex}
    \nu_n(\Sigma) = \mu_n(s^{-1}(\Sigma)),\quad \forall \Sigma \subseteq \Delta_d^\downarrow.
\end{equation}
Then \(\mu_n\) can be reconstructed by first sampling the spectrum
\(x\sim\nu_n\), then sampling \(U\sim\mu_{\Haar}\), and finally setting
\(\sigma=U\diag(x)U^\dagger\). 
Equivalently, for every measurable integrable function \(F: \mc D_d \to \mb R\),
\begin{equation}\label{eq:mu-disintegration}
    \int_{\mc D_d} F(\sigma)\,d\mu_n(\sigma)
    =
    \int_{\Delta^\downarrow_d}\int_{\U(d)}
    F(U \diag(x)U^\dagger)\,
    d\mu_{\Haar}(U)\,d\nu_n(x).
\end{equation}
Finally, we present a characterization of covariant protocols up to statistical equivalence:
\medskip
\begin{theorem}\label{thm:characterization-covariant}
Suppose the protocol $\{E_n\}_{n\ge 1}$ is covariant, then there exist unitarily invariant probability measures $\mu_n$ on $\mc D_d$ with decomposition~\eqref{eq:mu-disintegration} and POVM density $\{M_\sigma^{(n)}\}_{\sigma \in \mc D_d}$ of the form
    \begin{equation}
        M_{U\diag(x)U^\dagger}^{(n)}
    =
    U_{\mathrm{Schur}}^\dagger
    \left(
        \sum_{\lambda\vdash_d n} \dim \mc Q_\lambda
        |\lambda\rangle\langle\lambda|
        \otimes \mathbf 1_{\mc P_\lambda}
        \otimes
        q_\lambda(U)K_{\lambda,x}q_\lambda(U^\dagger)
    \right)
    U_{\mathrm{Schur}},
    \end{equation}
such that for any $A \in \mathfrak B(\mc D_d)$ and $\rho \in \mc D_d$, we have
\[
\Tr(E_n(A) \rho^{\otimes n}) = \int_{A} \Tr(M_\sigma^{(n)} \rho^{\otimes n})d\mu_n(\sigma).
\]
For each $\lambda \vdash_d n$ and $x \in \Delta_d^\downarrow$, we have $K_{\lambda,x} \in \mc L(\mc Q_\lambda)_+$ and 
\begin{equation}
    \int_{\Delta_d^\downarrow}
    \Tr K_{\lambda,x}d\nu_n(x) = 1,\quad \forall \lambda \vdash_d n,
\end{equation}
where $\nu_n$ is defined by \eqref{eq:measure-on-simplex}. 
\end{theorem}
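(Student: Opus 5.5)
Starting from Lemma~\ref{lem:permutation-invariance-density}, we fix a unitarily invariant $\mu_n$ and a density $\{\wt M_\sigma^{(n)}\}$. This density is $\mu_n$-a.s.\ permutation invariant and pointwise covariant, and reproduces all i.i.d.\ statistics via \eqref{eq:POVM-same-outcome}. Since $\wt M_\sigma^{(n)}$ commutes with every $P(\pi)$, \eqref{eq:permutation-invariant-general} gives operators $X_{\lambda,\sigma}\in\mc L(\mc Q_\lambda)_+$ with
\[
U_{\mathrm{Schur}}\wt M_\sigma^{(n)}U_{\mathrm{Schur}}^\dagger=\sum_{\lambda\vdash_d n}|\lambda\rangle\langle\lambda|\otimes\mathbf 1_{\mc P_\lambda}\otimes X_{\lambda,\sigma}.
\]
Positivity of $X_{\lambda,\sigma}$ follows by compressing to a vector of the form $|\lambda\rangle\otimes u\otimes v$. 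By \eqref{eq:schur-combined-action}, covariance becomes $X_{\lambda,U\sigma U^\dagger}=q_\lambda(U)X_{\lambda,\sigma}q_\lambda(U)^\dagger$, for $\mu_n$-a.e.\ $\sigma$ and all $U$.

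The next step is to parametrize by spectrum. For $x\in\Delta_d^\downarrow$ set
\[
K_{\lambda,x}:=\frac{1}{\dim\mc Q_\lambda}X_{\lambda,\diag(x)}.
\]
Writing $\sigma=U\diag(x)U^\dagger$ and using covariance, $X_{\lambda,\sigma}=\dim\mc Q_\lambda\, q_\lambda(U)K_{\lambda,x}q_\lambda(U)^\dagger$. This is exactly the stated block form, and it is independent of the choice of $U$ by covariance.

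The main obstacle is that covariance holds only $\mu_n$-almost surely, while the diagonal slice $\{\diag(x)\}$ may be $\mu_n$-null. I would resolve this by redefining the density at the outset through the Haar-averaged formula of Lemma~\ref{lem:covariance-density}, which is covariant for every $\sigma$ outside a unitarily invariant null set. On that null set (and on the set of $x$ whose orbits it contains), I would set $K_{\lambda,x}$ to a fixed normalized choice; this does not affect integrals against $\mu_n$. Measurability in $x$ follows from measurability of $\sigma\mapsto\wt M^{(n)}_\sigma$ composed with $x\mapsto\diag(x)$. Equality of statistics then follows from \eqref{eq:POVM-same-outcome}, since the new density agrees with $\wt M^{(n)}$ $\mu_n$-a.e.

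For the normalization, integrate the completeness relation $\int M_\sigma^{(n)}d\mu_n=\mathbf 1$ using the disintegration \eqref{eq:mu-disintegration}. In the $\lambda$ block this gives
\[
\int_{\Delta_d^\downarrow}\dim\mc Q_\lambda\int_{\U(d)}q_\lambda(U)K_{\lambda,x}q_\lambda(U)^\dagger\,d\mu_{\Haar}(U)\,d\nu_n(x)=\mathbf 1_{\mc Q_\lambda}.
\]
By Lemma~\ref{lemma:Schur-orthogonality}, the inner integral equals $\Tr(K_{\lambda,x})\mathbf 1/\dim\mc Q_\lambda$, so the left side is $\int\Tr K_{\lambda,x}\,d\nu_n(x)\cdot\mathbf 1_{\mc Q_\lambda}$. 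Hence $\int_{\Delta_d^\downarrow}\Tr K_{\lambda,x}\,d\nu_n(x)=1$ for every $\lambda$.

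One point needs care here: completeness $\int M\,d\mu_n=\mathbf 1$ must hold for the permutation-averaged density. It does, because $\frac{1}{n!}\sum_\pi P_\pi\mathbf 1P_\pi^\dagger=\mathbf 1$, so the averaged density is a genuine POVM density, not merely statistically equivalent to one.
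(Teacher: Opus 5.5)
Your proposal is correct and follows the paper's own argument: pass to the permutation-invariant covariant density of Lemma~\ref{lem:permutation-invariance-density}, read off the $\lambda$-blocks in the Schur basis, define $K_{\lambda,x}$ as the block at $\diag(x)$ divided by $\dim\mc Q_\lambda$, and obtain the normalization from completeness plus Schur orthogonality. Your treatment of the null-set issue (covariance holds only $\mu_n$-a.e.\ while the diagonal slice may be $\mu_n$-null, resolved via the Haar average and the $\nu_n$-null spectral image of a unitarily invariant null set), together with your check that the averaged density still satisfies completeness, fills in details the paper leaves implicit; just choose the filler on that null set to be a multiple of $\mathbf 1_{\mc Q_\lambda}$, so the formula stays independent of the choice of $U$ when $x$ is degenerate.
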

\begin{proof}
 By Lemma~\ref{lem:permutation-invariance-density}, we may replace \(E_n\),
    without changing its statistics on i.i.d.\ inputs, by a protocol with a
    density \(M_\sigma^{(n)}\) satisfying both permutation invariance and
    unitary covariance:
    \[
    P_\pi M_\sigma^{(n)} = M_\sigma^{(n)} P_\pi,\quad M_{U\sigma U^\dagger}^{(n)} = U^{\otimes n} M_\sigma^{(n)} (U^\dagger)^{\otimes n},\quad \forall U\in \U(d),\ \pi\in S_n,
    \]
Recall that if $\{M_\sigma^{(n)}\}_{\sigma \in \mc D_d}$ is permutation invariant, it can be characterized as~\eqref{eq:permutation-invariant-general}
\[
M_\sigma^{(n)} =  U_{\mathrm{Schur}}^\dagger
    \left(
        \sum_{\lambda\vdash_d n}
        |\lambda\rangle\langle\lambda|
        \otimes \mathbf 1_{\mc P_\lambda}
        \otimes K_\lambda(\sigma)
    \right)
    U_{\mathrm{Schur}},
\]
where $\{K_\lambda(\sigma)\}_{\sigma \in \mc D_d}$ is a measurable family acting on $\mc Q_\lambda$ for each $\lambda \vdash_d n$. 

The covariance of \(M_\sigma^{(n)}\), together with the Schur--Weyl action
    of \(U^{\otimes n}\), gives
    \begin{equation}\label{eq:block-density-covariance}
        K_\lambda(U\sigma U^\dagger)
        =
        q_\lambda(U)K_\lambda(\sigma)q_\lambda(U^\dagger).
    \end{equation}
    Define
\[
    K_{\lambda,x}
    :=
    \frac{1}{\dim \mc Q_{\lambda}} \, K_\lambda(\diag(x)),\quad \lambda\vdash_d n,\ x \in \Delta_d^\downarrow.
\]
Then, for \(\sigma=U\diag(x)U^\dagger\),
\begin{equation}\label{eq:covariant-K-decomposition}
    K_\lambda(\sigma)
    =
    \dim \mc Q_{\lambda} q_\lambda(U)K_{\lambda,x}q_\lambda(U^\dagger).
\end{equation}
Consequently,
\begin{equation}\label{eq:general-covariant-density-K}
\begin{aligned}
    M_{U\diag(x)U^\dagger}^{(n)}
    =
    U_{\mathrm{Schur}}^\dagger
    \left(
        \sum_{\lambda\vdash_d n} \dim \mc Q_{\lambda}
        |\lambda\rangle\langle\lambda|
        \otimes \mathbf 1_{\mc P_\lambda}
        \otimes
        q_\lambda(U)K_{\lambda,x}q_\lambda(U^\dagger)
    \right)
    U_{\mathrm{Schur}} .
\end{aligned}
\end{equation}
 It remains to identify the normalization condition. Indeed, comparing the \(\lambda\)-blocks in Schur basis gives
\begin{equation}\label{eq:K-lambda-x-normalization-twirl}
   \int_{\mc D_d} K_\lambda(\sigma) d\mu_n(\sigma) = \dim \mc Q_{\lambda}\int_{\Delta_d^\downarrow}\int_{\U(d)}
    q_\lambda(U)K_{\lambda,x}\,q_\lambda(U^\dagger)\,
    d\mu_{\Haar}(U)\,d\nu_n(x)
    =
    \mathbf 1_{\mc Q_\lambda},
    \qquad
    \forall \lambda\vdash_d n.
\end{equation}
Equivalently, by Schur orthogonality, see Lemma~\ref{lemma:Schur-orthogonality}, we have
\begin{equation}\label{eq:K-lambda-x-trace-normalization}
    \int_{\Delta_d^\downarrow}
    \Tr K_{\lambda,x}d\nu_n(x)
    = 1,
    \qquad
    \forall \lambda\vdash_d n.
\end{equation}
\end{proof}
In summary, a general covariant tomography protocol is i.i.d.\ statistically equivalent to the following protocol:
\begin{tcolorbox}[
    colback=white,
    colframe=black,
    boxrule=0.8pt,
    arc=1pt,
    left=6pt,
    right=6pt,
    top=6pt,
    bottom=6pt
]
Given \(n\)-copies of \(\rho\):
\begin{enumerate}
    \item First apply the Schur transform to \(\rho^{\otimes n}\).
    \item In Schur basis, perform the POVM of the form 
    \[\sum_{\lambda\vdash_d n} \dim \mc Q_{\lambda}
        |\lambda\rangle\langle\lambda|
        \otimes \mathbf 1_{\mc P_\lambda}
        \otimes
        q_\lambda(U)K_{\lambda,x}q_\lambda(U^\dagger)\]
    where $x \sim \nu_n$, $U \sim \mu_{\Haar}$ and $\int_{\Delta_d^\downarrow}
    \Tr K_{\lambda,x}d\nu_n(x)
    = 1$.
    \item Collect the measurement outcomes \(\widehat{x}\) and \(\widehat{U}\), and output an estimator
    \[
        \sigma
        =
        \widehat{U}\operatorname{diag}(\widehat{x})\widehat{U}^{\dagger}.
    \]
\end{enumerate}
\end{tcolorbox}

\section{Large deviation principle and Keyl's conjecture}\label{sec:LDP}
In this section, we formally introduce the notion of large deviation principle, and review the conjecture proposed in \cite{Keyl_2006}. First, we introduce the notion of \textit{consistency}. Recall that any quantum state tomography protocol induces the estimator distribution
\begin{equation}\label{eq:probability-tomography-general}
    \widehat\mu^{\rho}_n(A)= \Tr(E_n(A) \rho^{\otimes n}) = \int_A \Tr(M_\sigma^{(n)} \rho^{\otimes n})d\mu_n(\sigma).
\end{equation}
A good tomography protocol should concentrate its estimator near the true state. This motivates the following notion:
\medskip
\begin{definition}\label{def:consistency}
    We say that the protocol $\left(\{M_\sigma^{(n)}\}_{\sigma \in \mc D_d},\mu_n \right)_{n\ge 1}$ is consistent, if for any $\varepsilon >0$ and $\rho \in \mc D_d$,
    \begin{equation}\label{eq:neighborhood}
        \lim_{n\to\infty} \widehat\mu^{\rho}_n\left( N_\varepsilon(\rho)\right) = 1,\quad N_\varepsilon(\rho):= \{\tau \in \mc D_d: \|\tau - \rho\|_1 < \varepsilon\},
    \end{equation}
    where $\widehat\mu^{\rho}_n(\cdot)$ is defined by~\eqref{eq:probability-tomography-general}. 
\end{definition}
In other words, consistency is the ordinary probabilistic requirement that the estimator converges in probability to the true density matrix, which is equivalent to finiteness of sample complexity.

The following formal definition from \cite[Definition III.6]{hollander2000large} characterizes the exponential rate at which the protocol assigns probability to a particular estimate $\sigma$ of the true state $\rho$, informally introduced in \eqref{rate}. 
\medskip
\begin{definition}\label{def:TomRateFunc}
    We say that the protocol $\left(\{M_\sigma^{(n)}\}_{\sigma \in \mc D_d},\mu_n \right)_{n\ge 1}$ satisfies a large deviation principle (LDP) with rate function $I(\cdot \| \rho): \mc D_d \to [0,\infty]$ for any $\rho \in \mc D_d$, if $I(\cdot \| \rho)$ satisfies
    \begin{enumerate}
        \item $I(\cdot \| \rho) \not\equiv \infty$; 
        \item $I(\cdot \| \rho)$ is lower semicontinuous: for any sequence $\sigma_n \to \sigma$ $\liminf_{n\to \infty}I(\sigma_n \| \rho) \ge I(\sigma\| \rho)$;
        \item $\{\sigma \in \mc D_d: I(\sigma \| \rho) \in [0,c]\}$ is compact for any $c \ge 0$,
    \end{enumerate}
    and we have 
    \begin{itemize}
        \item for any open set $O \subseteq \mc D_d$, 
        $$\limsup_{n\to \infty} -\frac{1}{n}\log \widehat\mu^{\rho}_n(O) \le \inf_{\sigma \in O} I(\sigma \| \rho).$$
        \item For any closed set $F \subseteq \mc D_d$,
        $$\liminf_{n\to \infty} -\frac{1}{n}\log \widehat\mu^{\rho}_n(F) \ge \inf_{\sigma \in F} I(\sigma \| \rho).$$
    \end{itemize}
\end{definition}
We emphasize that our notion of consistency is purely probabilistic: the estimator distribution must converge to a point mass at the true state. This is weaker than requiring the LDP rate function to vanish only at the true state.
Indeed, in \cite[Definition 2.2]{Keyl_2006}, Keyl includes the following unique-zero condition in the definition of
an LDP rate function for an estimation scheme:
\begin{equation}\label{eq:unique-zero}
    I(\sigma\|\rho)=0
    \quad\Longleftrightarrow\quad
    \sigma=\rho .
\end{equation}
We show that this condition, together with an LDP, implies consistency:
\medskip
\begin{proposition}
\label{lemma:unique-zero-implies-consistency}
Suppose that the protocol satisfies an LDP with rate function
\(I(\cdot\|\rho)\). Assume that, for every \(\rho\in\mathcal D_d\),
\[
    I(\sigma\|\rho)=0
    \quad\Longleftrightarrow\quad
    \sigma=\rho.
\]
Then the protocol is consistent.
\end{proposition}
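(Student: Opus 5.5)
Fix \(\rho\in\mathcal D_d\) and \(\varepsilon>0\). The plan is to apply the LDP upper bound to the closed set
\[
F:=\mathcal D_d\setminus N_\varepsilon(\rho)=\{\tau\in\mathcal D_d:\|\tau-\rho\|_1\ge\varepsilon\}.
\]
We will show that \(I(\cdot\|\rho)\) is bounded below by a strictly positive constant on \(F\). Then \(\widehat\mu^\rho_n(F)\) decays exponentially, so \(\widehat\mu^\rho_n(N_\varepsilon(\rho))=1-\widehat\mu^\rho_n(F)\to1\), which is exactly consistency in the sense of Definition~\ref{def:consistency}.

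\textbf{Key step: \(c:=\inf_{\sigma\in F}I(\sigma\|\rho)>0\).} If \(F=\emptyset\) there is nothing to prove, so assume \(F\neq\emptyset\). Since \(\mathcal D_d\) is compact and \(F\) is closed, \(F\) is compact. By Definition~\ref{def:TomRateFunc}, \(I(\cdot\|\rho)\) is lower semicontinuous, so it attains its infimum on \(F\) at some \(\sigma_*\in F\); if \(c=+\infty\) the claim is trivial. If \(c=0\), then \(I(\sigma_*\|\rho)=0\), and the unique-zero hypothesis forces \(\sigma_*=\rho\). This contradicts \(\|\sigma_*-\rho\|_1\ge\varepsilon>0\). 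Hence \(c>0\).

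\textbf{Conclusion.} The closed-set bound of the LDP gives
\[
\liminf_{n\to\infty}-\tfrac1n\log\widehat\mu^\rho_n(F)\ge c>0 .
\]
Take any \(0<c'<c\), and \(c'=1\) if \(c=\infty\). Then \(\widehat\mu^\rho_n(F)\le e^{-nc'}\) for all sufficiently large \(n\), so \(\widehat\mu^\rho_n(F)\to0\). Since \(\widehat\mu^\rho_n\) is a probability measure (because \(E_n(\mathcal D_d)=\mathbf 1\)), we get \(\widehat\mu^\rho_n(N_\varepsilon(\rho))\to1\). As \(\rho\) and \(\varepsilon\) were arbitrary, the protocol is consistent.

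There is no serious obstacle here. The only point needing care is that the infimum over \(F\) is attained, which follows from lower semicontinuity on a compact set. The compact level sets in Definition~\ref{def:TomRateFunc} would also suffice, but compactness of \(\mathcal D_d\) already gives this. Note that the direction \(\sigma=\rho\Rightarrow I=0\) of the unique-zero condition is not needed; only the implication \(I=0\Rightarrow\sigma=\rho\) is used.
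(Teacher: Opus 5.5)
Your proposal is correct and follows the paper's own argument: $F$ is compact, lower semicontinuity makes $I(\cdot\|\rho)$ attain its minimum on $F$, the unique-zero hypothesis makes that minimum strictly positive, and the closed-set LDP bound then forces $\widehat\mu^\rho_n(F)\to 0$. Your separate handling of $c=+\infty$ (replacing it by a finite $c'$) is slightly more careful than the paper's bound $\exp\bigl(-\tfrac{n}{2}\inf_{F} I\bigr)$, which is not meaningful as written when that infimum is infinite.
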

\begin{proof}
Fix \(\rho\in\mathcal D_d\) and \(\varepsilon>0\). Let
\[
    F_\varepsilon(\rho)
    :=
    \{\sigma\in\mathcal D_d:\|\sigma-\rho\|_1\ge\varepsilon\}.
\]
This set is compact and does not contain \(\rho\). Since
\(I(\cdot\|\rho)\) is lower semicontinuous, it attains its minimum on
\(F_\varepsilon(\rho)\). By the unique-zero assumption, this minimum cannot be
zero. Hence
\[
    \inf_{\sigma\in F_\varepsilon(\rho)}
    I(\sigma\|\rho)>0.
\]
The closed-set LDP bound gives
\[
    \liminf_{n\to\infty}
    -\frac1n
    \log
    \widehat\mu_n^{\rho}
    \bigl(F_\varepsilon(\rho)\bigr)
    \ge
    \inf_{\sigma\in F_\varepsilon(\rho)}
    I(\sigma\|\rho) > 0
\]
Therefore, for sufficiently large \(n\),
\[
    \widehat\mu_n^{\rho}
    \bigl(F_\varepsilon(\rho)\bigr) \le \exp(-\frac{n}{2}\inf_{\sigma\in F_\varepsilon(\rho)}
    I(\sigma\|\rho))
    \longrightarrow 0,
\]
which is exactly consistency in Definition~\ref{def:consistency}.
\end{proof}

In fact, the consistency condition defined by Definition~\ref{def:consistency} is strictly weaker than the unique-zero property~\eqref{eq:unique-zero}. To see this, suppose $E_n$ is a protocol with rate function satisfying the unique-zero property. Then for a fixed $\tau \in \mc D_d$, define
\[
\widetilde E_n(A)
    :=
    \left(1-\frac1n\right)E_n(A)
    +
    \frac1n\,\mathbf 1_{\{\tau\in A\}}\,
    \mathbf 1_{\mathcal H_n},
    \qquad
    A\in\mathfrak B(\mathcal D_d).
\]
One can verify that $\widetilde E_n$ is consistent, but the rate function always admits $\tau$ as a zero, breaking the unique-zero property. Indeed, the unique-zero condition rules out
subexponentially small probabilities of reporting wrong states, while ordinary consistency does not; nevertheless, our main theorem shows that Keyl's covariant exponent remains optimal under this weaker consistency requirement.
\medskip
\begin{definition}[Tomographic rate functions]
A function
\[
    I:\mathcal D_d\times\mathcal D_d\to[0,\infty]
\]
is called a tomographic rate function if there exists a consistent tomography
protocol such that, for every true state
\(\rho\in\mathcal D_d\), the estimator laws
\((\widehat\mu_n^{\rho})_{n\ge1}\) satisfy a LDP with rate function
\(I(\cdot\|\rho)\).
\end{definition}
We denote 
\begin{itemize}
    \item \(\mathcal E\) as the set of all tomographic rate functions. 
    \item \(\widehat{\mathcal E}\) as the set of all tomographic rate functions satisfying unique-zero property \eqref{eq:unique-zero}. 
    \item \(\mathcal E^0\subseteq\mathcal E\) as the subset consisting of those \(I\in\mathcal E\) such that, for every \(\sigma\in\mathcal D_d\), the map $\rho\mapsto I(\sigma\|\rho)$ is lower semicontinuous.
    \item \(\mathcal E^c\subseteq\mathcal E\) as the subset consisting of those rate functions arising from covariant protocols.
\end{itemize}
\medskip
\begin{conjecture}[Keyl's conjecture]\label{conj:Keyl}
    Fix any $\sigma,\rho \in \mc D_d$, define
    \begin{equation}
     \mathcal I_{\mathrm{Id}}^c(\sigma \| \rho) = \sup_{I \in \mc E^c} I(\sigma\| \rho),\quad \mathcal I_{\mathrm{Id}}^0(\sigma \| \rho) = \sup_{I \in \mc E^0} I(\sigma\| \rho),\quad \mathcal I_{\mathrm{Id}}(\sigma \| \rho) = \sup_{I \in \widehat{\mc E}} I(\sigma\| \rho).
\end{equation}
Then
\begin{equation}\label{eq:Keyl-conjecture}
    S^{\mathrm{Keyl}}(\sigma \| \rho)
    =
    \mathcal I_{\mathrm{Id}}^c(\sigma \| \rho)
    =
    \mathcal I_{\mathrm{Id}}^0(\sigma \| \rho)
    \le
    \mathcal I_{\mathrm{Id}}(\sigma \| \rho)
    =
    D(\sigma \| \rho),
\end{equation}

\noindent where $D(\sigma \| \rho)$ is the relative entropy, and $S^{\mathrm{Keyl}}(\sigma \| \rho)$ is defined in~\eqref{eq:Keyl-rate-intro}.
\end{conjecture}
We explain more details about $S^{\mathrm{Keyl}}(\sigma \| \rho)$. For any $x \in \Delta_d^\downarrow,\ \tau \in \mc D_d$, let $\Delta_0(\tau):=1$ and let $\Delta_k(\tau)$ be the determinant of the \(k\times k\) principal minor of $\tau$, defined in \eqref{eq:determinant-principle-minor}. Set
\begin{equation}\label{eq:principle-ratio-vector}
    \Theta_k(\tau):= \lim_{\varepsilon \to 0+}\frac{\Delta_k(\tau + \varepsilon \textbf{1})}{\Delta_{k-1}(\tau+ \varepsilon \textbf{1})},\quad 1\le k \le d,
\end{equation}
We introduce the following function
\begin{equation}\label{eq:Keyl-rate}
    L_{\mathrm{Keyl}}(x,\tau) = 
         -\sum_{k=1}^d x_k \log \Theta_k(\tau),
\end{equation}
with the convention $0 \log 0 = 0$ and $-u \log 0 = \infty$ for $u > 0$.

Then by Abel's summation formula, we also have $$S^{\mathrm{Keyl}}(\sigma \| \rho) = -H(x) + L_{\mathrm{Keyl}}(x,U^\dagger \rho U),\quad \sigma = U\diag(x)U^\dagger. $$ 
Note that the definition of $S^{\mathrm{Keyl}}(\sigma \| \rho)$ does not depend on the choice of $U$ with $\sigma = U\diag(x)U^\dagger, x \in \Delta_d^\downarrow$, see~\cite[Lemma 4.15]{Keyl_2006}. 

For commuting states, $S^{\mathrm{Keyl}}(\sigma \| \rho) = D(\sigma \| \rho)$. Furthermore, it was shown that $S^{\mathrm{Keyl}}(\sigma \| \rho) < D(\sigma \| \rho)$ for some noncommuting states $\sigma,\rho$, see the discussions after Theorem 3.3 in \cite{Keyl_2006}. Here we show a strengthened result:

\begin{proposition}[Comparison with relative entropy]\label{prop:Keyl-vs-relative-entropy}
For all \(\sigma,\rho\in\mathcal D_d\),
\[
    S^{\mathrm{Keyl}}(\sigma\|\rho) \le D(\sigma\|\rho)
\]
as an inequality in \([0,\infty]\). Moreover,
\begin{enumerate}
    \item If \(D(\sigma\|\rho)<\infty\), then
    \[
        S^{\mathrm{Keyl}}(\sigma\|\rho)
        =
        D(\sigma\|\rho) \iff [\sigma,\rho]=0.
    \]
    \item If \(D(\sigma\|\rho)=+\infty\), then
    \[
        S^{\mathrm{Keyl}}(\sigma\|\rho)=D(\sigma\|\rho) = +\infty \iff \supp(\sigma)\cap\ker(\rho)\neq\{0\}.
    \]
\end{enumerate}
\end{proposition}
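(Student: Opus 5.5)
Work in the eigenbasis of $\sigma$: replace $\rho$ by $\tau:=U^\dagger\rho U$, so $\sigma=\diag(x)$ with $x\in\Delta_d^\downarrow$, and write $\Pi_k$ for the projector onto the first $k$ basis vectors. Using the Abel-summed forms \eqref{irel} and \eqref{ikeyl}, the difference is
\[
D(\sigma\|\rho)-S^{\mathrm{Keyl}}(\sigma\|\rho)=\sum_{k=1}^d (x_k-x_{k+1})\Bigl[\log\det(\Pi_k\tau\Pi_k)-\Tr\Pi_k(\log\tau)\Pi_k\Bigr],
\]
where $\det$ is taken on the range of $\Pi_k$ and all coefficients $x_k-x_{k+1}$ are nonnegative. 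It therefore suffices to show, for each $k$, the compression inequality
\[
\Tr\Pi_k\log(\tau)\Pi_k\le \log\det(\Pi_k\tau\Pi_k).
\]
For $\tau>0$ this follows from operator concavity of $\log$ via the Jensen operator inequality for compressions, $\Pi_k\log(\tau)\Pi_k\le\log(\Pi_k\tau\Pi_k)$ on $\operatorname{ran}\Pi_k$; taking traces gives it. A more elementary route is to diagonalize $\Pi_k\tau\Pi_k=\sum_j \mu_j|f_j\rangle\langle f_j|$ and apply scalar Jensen $\langle f|\log\tau|f\rangle\le\log\langle f|\tau|f\rangle$ to each $f_j$. For singular $\tau$, regularize with $\tau+\varepsilon\mathbf 1$ (the same regularization used to define $\Theta_k$) and let $\varepsilon\to0$ by monotonicity. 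This handles the inequality in $[0,\infty]$, provided one checks that $\infty$ on the Keyl side forces $\infty$ on the relative-entropy side, which is the content of part 2.

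\textbf{Equality case with $D<\infty$.} If $[\sigma,\rho]=0$, choose $U$ diagonalizing both, so $\tau$ is diagonal and each term vanishes. Conversely, equality forces every term with $x_k>x_{k+1}$ to be zero. Equality in scalar Jensen $\langle f|\log\tau|f\rangle=\log\langle f|\tau|f\rangle$ holds iff $f$ lies in a single eigenspace of $\tau$. Hence $\operatorname{ran}\Pi_k$ is spanned by eigenvectors of $\tau$, i.e. $\Pi_k$ commutes with $\tau$, for every $k$ that is a jump of $x$ (including $k=d$, which is trivial). Since $\sigma=\sum_k (x_k-x_{k+1})\Pi_k$ involves only those $\Pi_k$, we get $[\sigma,\tau]=0$. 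Here $D<\infty$ ensures the relevant compressions are invertible on $\operatorname{ran}\Pi_k$ when $x_k>0$, so the equality analysis takes place with finite quantities. Near-zero eigenvalues of $\tau$ outside $\supp\sigma$ only enter through terms with $x_k-x_{k+1}=0$, or through $k$ with $x_k=0$, whose coefficients vanish.

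\textbf{Part 2.} $D=\infty$ means $\supp\sigma\not\subseteq\supp\rho$. We have $S^{\mathrm{Keyl}}=\infty$ iff some $k$ with $x_k>x_{k+1}$ has $\det(\Pi_k\tau\Pi_k)=0$. The largest such $k$ is $r=\rank\sigma$, and $\operatorname{ran}\Pi_r=\supp\sigma$. Moreover $\det(\Pi_r\tau\Pi_r)=0$ iff some nonzero $v\in\supp\sigma$ has $\langle v|\tau|v\rangle=0$, i.e. $v\in\ker\tau$. For smaller jump indices $k<r$, $\det(\Pi_k\tau\Pi_k)=0$ also gives a vector in $\operatorname{ran}\Pi_k\cap\ker\tau\subseteq\supp\sigma\cap\ker\tau$. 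So $S^{\mathrm{Keyl}}=\infty$ iff $\supp(\sigma)\cap\ker(\rho)\neq\{0\}$ (after conjugating back by $U$).

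The main obstacle is the careful treatment of the convention in $\Theta_k$ and of $0\cdot\infty$ terms when $\rho$ or $\sigma$ is singular, both for the inequality and for the equality characterization. I would handle this by proving everything for $\tau+\varepsilon\mathbf 1$ and passing to the limit, splitting into the indices $k\le r$ and $k>r$.
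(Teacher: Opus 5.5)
Your proposal is correct and follows essentially the same route as the paper. Both arguments use the Abel-summed difference $\sum_k (x_k-x_{k+1})\bigl[\log\det(\Pi_k\tau\Pi_k)-\Tr\Pi_k(\log\tau)\Pi_k\bigr]$ together with the Jensen operator inequality for compressions. In both, equality forces $[\Pi_k,\tau]=0$ at every jump index, and the infinite case is handled by the kernel-vector argument of Proposition~\ref{prop:Keyl-finiteness}. Your scalar-Jensen variant on an eigenbasis of $\Pi_k\tau\Pi_k$ is a worthwhile refinement: it actually proves the equality condition that the paper only asserts for the operator Jensen inequality.
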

A proof of this result, together with some additional properties of $S^{\mathrm{Keyl}}$ are presented in Appendix \ref{section:Appendix_KeylsRateProperties}.

\section{Non-covariant protocol achieving relative entropy as rate function}\label{sec:relative-entropy}
In this section, we prove Theorem~\ref{thm:relative-entropy-envelope-intro} thus the pointwise relative entropy envelope predicted in the last equality in Conjecture~\ref{conj:Keyl}. 

Our contribution is to show that for every pair
\((\sigma,\rho)\) and every \(a<D(\sigma\|\rho)\), one can construct a non-covariant tomography protocol whose rate at \((\sigma,\rho)\) is larger than \(a\) and has unique-zero property. 

Recall that every tomographic LDP rate with unique-zero property~\eqref{eq:unique-zero} is bounded above by the quantum relative entropy, which is proved in \cite[Theorem 3.3]{Keyl_2006}. We present a proof under the consistency condition only, slightly weakening the assumption:
\medskip
\begin{proposition}[Converse bound]
Let \(I\in\mathcal E\) be the rate function of any consistent tomography
protocol.  Then for all \(\sigma,\rho\in\mathcal D_d\),
\[
    I(\sigma\|\rho)\le D(\sigma\|\rho).
\]
\end{proposition}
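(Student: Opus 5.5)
We argue by contradiction via the quantum Stein lemma, turning the tomography protocol into a hypothesis test between \(\rho^{\otimes n}\) and \(\sigma^{\otimes n}\). Fix \(\sigma,\rho\in\mathcal D_d\). If \(D(\sigma\|\rho)=+\infty\) there is nothing to prove, so assume \(D(\sigma\|\rho)<\infty\), and suppose for contradiction that \(I(\sigma\|\rho)>D(\sigma\|\rho)+2\eta\) for some \(\eta>0\). We may also assume \(\sigma\neq\rho\), since otherwise \(D=0\) and the bound \(I(\rho\|\rho)\le 0\) follows from consistency and the closed-set LDP upper bound. Indeed, \(\widehat\mu_n^\rho(\overline{N_\varepsilon(\rho)})\to1\) forces \(\inf_{\overline{N_\varepsilon(\rho)}}I(\cdot\|\rho)=0\) for every \(\varepsilon\), and lower semicontinuity then gives \(I(\rho\|\rho)=0\).

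\textbf{Step 1: a small closed ball with large rate.} By lower semicontinuity of \(I(\cdot\|\rho)\) (Definition~\ref{def:TomRateFunc}), choose \(\varepsilon>0\) such that \(\varepsilon<\|\sigma-\rho\|_1/2\) and \(I(\tau\|\rho)>D(\sigma\|\rho)+\eta\) for all \(\tau\) in the closed ball \(F:=\overline{N_\varepsilon(\sigma)}\). Such an \(\varepsilon\) exists because a lower semicontinuous function on a compact set attains its infimum, and that infimum increases to \(I(\sigma\|\rho)\) as \(\varepsilon\downarrow0\). The closed-set LDP upper bound then yields, for large \(n\),
\[
    \Tr\bigl(E_n(F)\rho^{\otimes n}\bigr)=\widehat\mu_n^\rho(F)\le e^{-n(D(\sigma\|\rho)+\eta/2)} .
\]

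\textbf{Step 2: consistency at \(\sigma\).} Consistency applied to the true state \(\sigma\) gives \(\Tr(E_n(F)\sigma^{\otimes n})\ge\widehat\mu_n^\sigma(N_\varepsilon(\sigma))\to1\). Thus the two-outcome test \(T_n:=E_n(F)\), \(0\le T_n\le\mathbf 1\), accepts \(\sigma^{\otimes n}\) with probability tending to one, while its error probability under \(\rho^{\otimes n}\) decays with exponent strictly larger than \(D(\sigma\|\rho)\).

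\textbf{Step 3: contradiction with Stein's lemma.} The strong converse of quantum Stein's lemma \cite{ogawa2000strong} states that any sequence of tests with \(\Tr(T_n\rho^{\otimes n})\le e^{-nr}\) and \(r>D(\sigma\|\rho)\) must have \(\Tr(T_n\sigma^{\otimes n})\to0\), in fact exponentially fast. This contradicts Step 2. Even the weak converse suffices: for any \(\epsilon\in(0,1)\), tests with \(\Tr(T_n\sigma^{\otimes n})\ge1-\epsilon\) satisfy \(\liminf -\frac1n\log\Tr(T_n\rho^{\otimes n})\le D(\sigma\|\rho)/(1-\epsilon)\). This can be seen from data processing of the relative entropy under the measurement \(\{T_n,\mathbf 1-T_n\}\) together with additivity, and letting \(\epsilon\to0\) gives the contradiction.

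The only delicate point is Step 1: the LDP controls \(I\) only through infima over sets, so pointwise information must be transferred to a neighborhood. This is exactly where lower semicontinuity of \(I(\cdot\|\rho)\) and the compactness of closed balls in \(\mathcal D_d\) are used. The remainder is a direct application of the known converse for quantum hypothesis testing.
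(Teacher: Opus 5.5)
Your proposal is correct and is essentially the paper's argument: you use the closed ball \(E_n(\overline{N_\varepsilon(\sigma)})\) as a test, get its acceptance probability under \(\sigma^{\otimes n}\) from consistency at \(\sigma\), bound its exponent under \(\rho^{\otimes n}\) via the closed-set LDP bound, invoke the converse of Stein's lemma, and pass to the point \(\sigma\) by lower semicontinuity. The paper does the same directly rather than by contradiction, and your separate treatment of \(\sigma=\rho\) and the restriction \(\varepsilon<\|\sigma-\rho\|_1/2\) are harmless but unnecessary.
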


\begin{proof}
If \(D(\sigma\|\rho)=+\infty\), the claim is trivial.  Assume
\(D(\sigma\|\rho)<+\infty\).  For \(\varepsilon>0\), set
\[
    F_\varepsilon:=\{\tau:\|\tau-\sigma\|_1\le \varepsilon\}.
\]
By consistency under the true state \(\sigma\),
\[
    \Tr(E_n(F_\varepsilon)\sigma^{\otimes n})\longrightarrow 1.
\]
Thus \(E_n(F_\varepsilon)\) is a sequence of tests whose type-I success
probability under \(\sigma^{\otimes n}\) tends to one.  By the converse part of quantum Stein's lemma \cite{hiai1991proper, ogawa2000strong},
\[
    \limsup_{n\to\infty}
    -\frac1n\log \Tr(E_n(F_\varepsilon)\rho^{\otimes n})
    \le
    D(\sigma\|\rho).
\]
On the other hand, applying the closed-set LDP bound to \(F_\varepsilon\)
under the true state \(\rho\) gives
\[
    \liminf_{n\to\infty}
    -\frac1n\log \Tr(E_n(F_\varepsilon)\rho^{\otimes n})
    \ge
    \inf_{\tau\in F_\varepsilon} I(\tau\|\rho).
\]
Therefore $\inf_{\tau\in F_\varepsilon} I(\tau\|\rho)
    \le
    D(\sigma\|\rho)$. Since \(I(\cdot\|\rho)\) is lower semicontinuous,
\[
    I(\sigma\|\rho)
    =
    \lim_{\varepsilon\downarrow0}
    \inf_{\tau\in F_\varepsilon} I(\tau\|\rho) \le D(\sigma\|\rho).
\]
\end{proof}
We conclude the lower bound by proving a more general result, which embeds any Stein exponent into a rate function of a tomography protocol:
\medskip
\begin{theorem}[Embedding Stein exponents into tomography]
\label{thm:testing-to-tomography}
Let \(\sigma_0,\rho_0\in\mathcal D_d\).  Suppose that
\(s\in[0,\infty]\) is achievable by a sequence of binary tests $\{M_n\}_{n\ge 1}$:
\begin{equation}\label{eq:achievable-exponent}
        \Tr(M_n\sigma_0^{\otimes n})\longrightarrow 1,\quad \liminf_{n\to\infty}
    -\frac1n
    \log\Tr(M_n\rho_0^{\otimes n})
    \ge s.
\end{equation}
Then, for every finite \(a<s\), there exists a consistent tomography protocol satisfying an LDP with rate function $I_a$ such that $I_a(\sigma\|\rho)=0 \iff
    \sigma = \rho$ and
\[
    I_a(\sigma_0\|\rho_0)>a.
\]
\end{theorem}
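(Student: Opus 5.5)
The plan follows the construction sketched in the outline. Fix $a<s$ finite and choose a block length $m$ and a threshold so that blocks carry enough exponent. Split $n$ copies into a first part of size $n_1=\lceil \eta n\rceil$ (small $\eta>0$) and a remaining part divided into $k=\lfloor (n-n_1)/m\rfloor$ blocks of $m$ copies each. On the first part run a standard consistent tomography protocol with unique-zero LDP; for instance Keyl's protocol, whose rate $S^{\mathrm{Keyl}}$ satisfies the LDP and the unique-zero property by Proposition~\ref{prop:Keyl-vs-relative-entropy}, since $S^{\mathrm{Keyl}}\ge$ classical relative entropy of spectra $\ge 0$ with equality only at $\sigma=\rho$. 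Its estimator $\widehat\sigma$ then obeys an LDP with rate $\eta\, S^{\mathrm{Keyl}}(\cdot\|\rho)$. On each block apply the test $\{M_m,\mathbf 1-M_m\}$, obtaining i.i.d.\ Bernoulli outcomes with success probability $p_m(\rho)=\Tr(M_m\rho^{\otimes m})$, and let $f$ be the empirical success frequency. By Cramér's theorem $f$ satisfies an LDP with rate $\frac{1-\eta}{m}\,h(\cdot\|p_m(\rho))$, where $h$ is binary relative entropy, and the pair $(\widehat\sigma,f)$ satisfies a joint LDP with the sum of the two rates, by independence.

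The output is defined by a post-processing map $\Phi(\widehat\sigma,f)$. We output $\widehat\sigma$ unless $\widehat\sigma$ is within $\delta$ of $\sigma_0$ while $f<1/2$, or $\widehat\sigma$ is far from $\sigma_0$ (distance $\ge 2\delta$) while $f>1/2$; in these mismatched cases we output a fixed $\omega_*\ne\sigma_0,\rho_0$. A simpler variant outputs $\omega_*$ exactly when $\|\widehat\sigma-\sigma_0\|_1\le\delta$ and $f\le 1/2$, and I would try this first. The LDP for the output then comes from the measurable contraction principle of \cite{Mariani2018} (Appendix~\ref{app:LDP}), because $\Phi$ is discontinuous only on threshold sets. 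The rate is
\[
I_a(\tau\|\rho)=\inf\{\eta S^{\mathrm{Keyl}}(\widehat\sigma\|\rho)+\tfrac{1-\eta}{m}h(f\|p_m(\rho)):\Phi(\widehat\sigma,f)=\tau\},
\]
modulo lower semicontinuous regularization at the threshold boundaries. I would check the hypotheses of that principle, namely that the discontinuity set carries no rate-infimum mass incorrectly, by choosing thresholds generically.

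Consistency is checked as follows. Under true state $\rho$, $\widehat\sigma\to\rho$. If $\rho$ is within $\delta$ of $\sigma_0$, for small $\delta$ continuity of $p_m$ near $\sigma_0$ gives $p_m(\rho)>1/2$ for large $m$, since $p_m(\sigma_0)\to1$. Hence $f>1/2$ with high probability and the output is $\widehat\sigma$. To make $\delta$ uniform, it may be necessary to let $\delta$ be chosen after $m$.

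For the value at $(\sigma_0,\rho_0)$, producing output $\sigma_0$ requires $\widehat\sigma=\sigma_0$ and $f>1/2$. Under $\rho_0$ we have $p_m(\rho_0)\le e^{-m(s-\epsilon)}$, so $\frac{1-\eta}{m}h(1/2\|p_m(\rho_0))\approx \frac{1-\eta}{2}(s-\epsilon)$. This is below $a$ if $a>s/2$, so the threshold $1/2$ is too crude. I would instead use a threshold $1-\kappa$ with small $\kappa$, giving exponent $\approx(1-\eta)(1-\kappa)(s-\epsilon)>a$, and keep consistency by taking $m$ large enough that $p_m(\sigma_0)>1-\kappa/2$.

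For unique zero, $I_a(\tau\|\rho)=0$ forces $\widehat\sigma=\rho$ and $f=p_m(\rho)$. The output is then $\rho$ unless a mismatch occurs at $\rho$, which the $\delta$/threshold choice must exclude. The mismatch case $\widehat\sigma\approx\sigma_0$ with $p_m(\rho)<1-\kappa$ is exactly what must be ruled out, and the mismatch region must be shaped so that at $f=p_m(\rho)$ and $\widehat\sigma=\rho$ no redirection happens. The main obstacle is this: simultaneously securing the unique-zero property and the measurable contraction hypotheses at the discontinuity boundaries. I would handle it by choosing the $\sigma_0$-neighborhood radius $\delta(m)$ so that $p_m>1-\kappa/2$ throughout the ball of radius $2\delta$. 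The output $\omega_*$ then only arises from nonzero-rate events.
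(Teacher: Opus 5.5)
Your final construction is sound, and it has the same architecture as the paper's proof. The construction is: redirect to $\omega_*$ exactly when $\|\widehat\sigma-\sigma_0\|_1\le\delta$ and $f\le 1-\kappa$, fixing $m$ first and then choosing $\delta=\delta(m)$ so that $p_m>1-\kappa/2$ on the $2\delta$-ball. The shared architecture is a fraction of the copies for a unique-zero baseline tomography, Stein-test blocks on the rest, the empirical success frequency, a threshold redirect to $\omega_*$, and measurable contraction.

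What differs is the redirect rule. The paper keeps $\widehat\rho_n$ iff $|\bar Z_n-p_{\widehat\rho_n}|\le\eta$, where $p_\tau=\Tr(M\tau^{\otimes\ell})$. Here $\eta$ is a tolerance, not your sample fraction; the paper's fraction is $\gamma$. This is a global check that the block frequency matches what the tomographic estimate predicts. What it buys:
\begin{itemize}
\item It gives consistency and the unique zero for every $\rho$ with no neighbourhood to tune, since at the truth $(\rho,p_\rho)$ the check passes trivially.
\item Its exponent at $(\sigma_0,\rho_0)$ is $\frac{1-\gamma}{\ell}\inf_{|q-\alpha|\le\eta}d(q\|\beta)$, the counterpart of your $\frac{1-\eta}{m}h(1-\kappa\|p_m(\rho_0))$.
\item It confines the discontinuity of $\Phi$ to the frequency coordinate. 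For $\sigma\ne\omega_*$, the limits of interiors and closures in Theorem~\ref{thm:contraction-LDP} are $\{(\sigma,p):|p-p_\sigma|<\eta\}$ and $\{(\sigma,p):|p-p_\sigma|\le\eta\}$, which have equal infima by continuity in $p$.
\end{itemize}

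Your local rule instead jumps across the sphere $\|\tau-\sigma_0\|_1=\delta$. At a typical $y$ on that sphere, the limit of interiors of $\Phi^{-1}(N_{\delta'}(y))$ is $\{(y,f):f>1-\kappa\}$, while the limit of closures is $\{y\}\times[0,1]$. So $\overline J(y)>\underline J(y)$ whenever $p_m(\rho)<1-\kappa$ and the baseline rate at $y$ is finite. The appendix theorem then does not give an LDP by itself, and choosing $\delta$ generically does not remove the sphere. You need one of two extra steps:
\begin{itemize}
\item Approach $y$ from outside the ball, where nothing is redirected, and use continuity of the baseline rate. This holds for $S^{\mathrm{Keyl}}(\cdot\|\rho)$ when $\rho$ is full rank.
\item Make the redirect criterion depend continuously on $\widehat\sigma$. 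That is precisely the paper's comparison of $f$ with $p_m(\widehat\sigma)$.
\end{itemize}

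Two smaller corrections.

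First, drop your first variant. The set $\{p_m>1-\kappa\}$ is open and contains the closed $2\delta$-ball. So there are true states $\rho$ at distance more than $2\delta$ from $\sigma_0$ with $p_m(\rho)$ above the threshold. Such states would be redirected to $\omega_*$ with probability tending to one, which breaks consistency.

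Second, Proposition~\ref{prop:Keyl-vs-relative-entropy} only gives $S^{\mathrm{Keyl}}\le D$. The spectral bound $S^{\mathrm{Keyl}}\ge D(x\|r)$, with $r=\mathrm{spec}^\downarrow(\rho)$, only forces equal spectra, so it does not yield the unique zero. A direct argument: $S^{\mathrm{Keyl}}(\sigma\|\rho)=\sum_k x_k\log\bigl(x_k/\Theta_k(U^\dagger\rho U)\bigr)$. Each $\Theta_k(\tau)$ is a Schur complement and hence at most $\tau_{kk}$, so $\sum_k\Theta_k\le 1$. The log-sum inequality then gives $0$ only when $U^\dagger\rho U=\diag(x)$, i.e.\ $\sigma=\rho$. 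The paper simply takes this property from Keyl.
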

\begin{proof}
Fix \(a<s\). Choose \(\gamma\in(0,1)\) small enough such that
\(a/(1-\gamma)<s\). From the assumption~\eqref{eq:achievable-exponent}, we can choose a block length
\(\ell\) and an effect \(M: = M_{\ell}\) such that, with
\[
    \alpha=\Tr(M\sigma_0^{\otimes\ell}),
    \qquad
    \beta=\Tr(M\rho_0^{\otimes\ell}),
\]
we have
\[
    \frac{1-\gamma}{\ell}d(\alpha\|\beta)>a, 
\]
where the binary relative entropy is defined by 
\[
d(\alpha\|\beta) = D((\alpha,1-\alpha) \| (\beta, 1-\beta)) = -\alpha \log \beta - (1-\alpha) \log (1-\beta) - h_2(\alpha). 
\]
To see this, note that $\alpha$ is close to $1$ and $-\log \beta$ is greater than $a \ell$ for large $\ell$. 

By continuity of the binary relative entropy, we can choose \(\eta\) small enough so that
\[
    \frac{1-\gamma}{\ell} \inf_{|q-\alpha|\le\eta}d(q\|\beta)>a .
\]
Now split \(n\) copies into
\[
    m_n=\lfloor\gamma n\rfloor,
    \qquad
    k_n=\left\lfloor\frac{n-m_n}{\ell}\right\rfloor .
\]
On the first \(m_n\) copies, run any baseline consistent tomography whose estimator
\(\widehat \rho_n\) satisfies an LDP with rate function \(K\) and
\[
    K(\sigma\|\rho)=0 \iff \sigma=\rho .
\]
For example, one may choose Keyl's protocol. On the remaining \(k_n\) blocks, measure \(\{M,\mathbf1-M\}\), obtain
\(Z_{n,1},\ldots,Z_{n,k_n}\), and set
\[
    \bar Z_n=k_n^{-1}\sum_i Z_{n,i}.
\]
Then
\[
    Y_n=(\widehat \rho_n,\bar Z_n)
\]
satisfies an LDP with rate
\begin{equation}\label{eq:rate-joint-distribution}
        R(\sigma,p \| \rho)=\gamma K(\sigma\|\rho)+\frac{1-\gamma}{\ell}d(p\|p_\rho).
\end{equation}
Recall for any state $\tau$, $p_\tau: = \Tr(M \tau^{\otimes \ell})$. Define a post-processing map $\Phi: \mc D_d \times [0,1] \to \mc D_d$
\[
\Phi_{\omega_*}(\tau,p)=
    \begin{cases}
    \tau, & |p - p_\tau|\le\eta,\\
    \omega_*, & |p - p_\tau|>\eta,
     \end{cases}
\]
where we choose \(\omega_*\neq\sigma_0\). The final estimator is given by 
\begin{equation}\label{eq:estimator-relative-entropy}
    \widehat\theta_n= \Phi(\widehat \rho_n,\bar{Z}_n).
\end{equation}
Using the general contraction principle in Theorem~\ref{thm:contraction-LDP}, and the details are covered in Appendix~\ref{app:LDP}, we can show that $\widehat\theta_n$ satisfies a LDP with rate function 
\[
I_a(\sigma\| \rho) = \inf\left\{ R(\sigma,q \|\rho):\ q\in[0,1], |q-p_\sigma| \le \eta \right\}, \quad \sigma\neq\omega_*,
\]
where $R(\sigma,q \|\rho)$ is defined in \eqref{eq:rate-joint-distribution}. At $\sigma = \omega_*$, we have 
\[
I_a(\sigma \|\rho)= \min\bigg\{ \inf\left\{ R(\sigma,q \|\rho):\ q\in[0,1], |q-p_{\sigma}|\le\eta\right\},\ \inf\left\{ R(\tau,q \|\rho):\ q\in[0,1],\ \tau \in \mc D_d,\ |q-p_{\tau}|\ge\eta\right\} \bigg\}.
\]
Note that \(I_a\) has the unique-zero property. Away from \(\omega_*\),
the term \(K(\sigma\|\rho)\) forces \(\sigma = \rho\). At \(\sigma = \omega_*\), because \(R_\rho\) has unique zero \((\rho,p_\rho)\), $I_a(\omega_* \| \rho) = 0$ if and only if $\omega_* = \rho$.

At the target pair $(\sigma_0,\rho_0)$, with $\sigma_0 \neq \omega_*$, we have
\[
\begin{aligned}
    I_a(\sigma_0\|\rho_0)
    &\ge
    \frac{1-\gamma}{\ell}
    \inf_{|p-\alpha|\le\eta}d(p\|\beta)
    >a,
\end{aligned}
\]
which concludes the proof.
\end{proof}

\section{Optimality of Keyl's rate function for covariant protocols}\label{sec:main}
In this section, we prove Theorem~\ref{thm:main-introduction} thus the optimality of Keyl's rate for covariant protocols predicted in the first equality in Conjecture~\ref{conj:Keyl}. 


Namely, we show that if a covariant protocol is consistent and its estimator distributions satisfy an LDP, then its rate function cannot exceed Keyl's rate at any point. Thus Keyl's rate is a universal pointwise upper bound on the large-deviation rate function achievable by covariant tomography protocols.
\medskip
\begin{theorem}[Converse bound for covariant tomography]\label{thm:main-covariant}
Let \(\{E_n\}_{n\ge 1}\) be a covariant quantum state tomography
protocol. Assume that it is consistent in the sense of Definition~\ref{def:consistency}, and for each \(\rho\in\mathcal D_d\), the estimator
distributions \(\{\widehat\mu_n^\rho\}_{n\ge1}\) satisfy a large deviation
principle on \(\mathcal D_d\) with rate function
\(I(\cdot\|\rho)\). Then, for every
\(\rho,\sigma\in\mathcal D_d\),
\begin{equation}\label{eq:main-covariant-bound}
    I(\sigma\|\rho)
    \le
    S^{\mathrm{Keyl}}(\sigma\|\rho).
\end{equation}
\end{theorem}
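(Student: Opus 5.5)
By lower semicontinuity of \(I(\cdot\|\rho)\) and the open-set LDP bound, it suffices to show the following. For every \(\sigma=U_0\diag(x)U_0^\dagger\) and every small \(\varepsilon>0\),
\[
    \liminf_{n\to\infty}\frac1n\log\widehat\mu_n^\rho\bigl(N_\varepsilon(\sigma)\bigr)\ \ge\ -S^{\mathrm{Keyl}}(\sigma\|\rho)-o_\varepsilon(1).
\]
Indeed, the open-set bound gives \(\inf_{N_\varepsilon(\sigma)}I\ge\) the left side with a sign flip only in the wrong direction, so we use the closed-set bound on \(\overline{N_\varepsilon(\sigma)}\) together with \(I(\sigma\|\rho)=\lim_{\varepsilon\downarrow 0}\inf_{\overline{N_\varepsilon(\sigma)}}I\), as in the converse bound proof.

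By covariance, replacing \(\rho\) by \(U_0^\dagger\rho U_0\), we may assume \(U_0=\mathbf 1\) and \(\sigma=\diag(x)\). We may also assume \(S^{\mathrm{Keyl}}<\infty\). We then pass to the normal form of Theorem~\ref{thm:characterization-covariant}. Define the localized block operator
\[
    A_{\lambda,\varepsilon}:=\dim\mc Q_\lambda\int \mathbf 1_{N_\varepsilon(\diag(x))}\bigl(U\diag(y)U^\dagger\bigr)\,q_\lambda(U)K_{\lambda,y}q_\lambda(U)^\dagger\,d\mu_{\Haar}(U)\,d\nu_n(y)\ \ge 0 .
\]
Then, for every state \(\tau\),
\[
    \widehat\mu_n^\tau(N_\varepsilon)=\sum_\lambda\dim\mc P_\lambda\Tr\bigl(A_{\lambda,\varepsilon}\,q_\lambda(\tau)\bigr).
\]
Since \(N_\varepsilon(\diag(x))\) is invariant under conjugation by \(\mathrm T(d)\), the operator \(A_{\lambda,\varepsilon}\) commutes with \(q_\lambda(\mathrm T(d))\). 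By \eqref{eq:twirl-torus}, it is therefore block diagonal in the weight spaces, \(A_{\lambda,\varepsilon}=\sum_\alpha P_{\lambda,\alpha}A_{\lambda,\varepsilon}P_{\lambda,\alpha}\).

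\textbf{Step 1 (consistency selects a good block and weight).} Apply consistency with true state \(\diag(x)\), using \(N_\varepsilon(\diag(x))\); we have \(\widehat\mu_n^{\diag(x)}(N_\varepsilon)\to1\). By \eqref{eq:diagonal-decomposition},
\[
    \widehat\mu_n^{\diag(x)}(N_\varepsilon)=\sum_\lambda\dim\mc P_\lambda\sum_\alpha x^\alpha\,\Tr\bigl(P_{\lambda,\alpha}A_{\lambda,\varepsilon}\bigr).
\]
The Keyl--Werner bound \eqref{eq:keyl-werner-single} and Lemma~\ref{lemma:Schur-concentration} let us discard the blocks with \(\|\bar\lambda-x\|_1\ge\delta\) up to \(o(1)\). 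A similar classical type argument discards the weights with \(\|\alpha/n-x\|_1\ge\delta\): bound \(\Tr(P_{\lambda,\alpha}A)\le\Tr(P_{\lambda,\alpha}A')\), where \(A'\) is the unlocalized operator, whose trace is controlled by normalization, and use \(\dim\mc P_\lambda\le e^{nH(\bar\lambda)}\). Since there are only polynomially many pairs \((\lambda,\alpha)\), some pair with \(\bar\lambda\approx x\) and \(\alpha/n\approx x\) satisfies
\[
    \dim\mc P_\lambda\,x^\alpha\,\Tr\bigl(P_{\lambda,\alpha}A_{\lambda,\varepsilon}\bigr)\ \ge\ \mathrm{poly}(n)^{-1}.
\]
Using \(\dim\mc P_\lambda\le e^{nH(\bar\lambda)}\), this gives
\[
    \Tr\bigl(P_{\lambda,\alpha}A_{\lambda,\varepsilon}\bigr)\ \ge\ \mathrm{poly}(n)^{-1}\exp\bigl(-nH(x)-n\sum_i x_i\log x_i-o_\delta(n)\bigr)=\exp\bigl(-o_\delta(n)\bigr).
\]
This is Lemma~\ref{lem:consistent-implication}.

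\textbf{Step 2 (principal-minor comparison).} For the selected pair we bound
\[
    \widehat\mu_n^\rho(N_\varepsilon)\ \ge\ \dim\mc P_\lambda\Tr\bigl(A_{\lambda,\varepsilon}q_\lambda(\rho)\bigr)\ \ge\ \dim\mc P_\lambda\Tr\bigl(P_{\lambda,\alpha}A_{\lambda,\varepsilon}P_{\lambda,\alpha}\,P_{\lambda,\alpha}q_\lambda(\rho)P_{\lambda,\alpha}\bigr).
\]
The second inequality uses only the weight-diagonality of \(A_{\lambda,\varepsilon}\). Next we need a lower bound
\[
    P_{\lambda,\alpha}\,q_\lambda(\rho)\,P_{\lambda,\alpha}\ \ge\ c_n\prod_{k}\Theta_k(\rho)^{\alpha_k}P_{\lambda,\alpha}-(\text{error}).
\]
To get it, write the Cholesky/LDU factorization \(\rho=L^\dagger DL\) with \(L\) upper unitriangular and \(D=\diag(\Theta_k(\rho))\); regularize first by \(\rho+\epsilon\mathbf 1\). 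Then
\[
    q_\lambda(\rho)=q_\lambda(L)^\dagger q_\lambda(D)q_\lambda(L).
\]
By \eqref{eq:triangular-weight}, \(q_\lambda(L)\) maps \(\mc Q_{\lambda,\alpha}\) to \(v+(\text{higher weights})\), keeping the \(\alpha\)-component of a vector \(v\in\mc Q_{\lambda,\alpha}\) unchanged. Hence
\[
    \langle v,q_\lambda(\rho)v\rangle\ \ge\ \Theta(\rho)^\alpha\|v\|^2 .
\]
This is the principal-minor lower bound Lemma~\ref{lem:priciple-minor-lower-bound}, and it is exact rather than approximate. Combining with Step 1,
\[
    \widehat\mu_n^\rho(N_\varepsilon)\ \ge\ e^{nH(x)-o(n)}\,\Theta(\rho)^{\alpha}\,e^{-o(n)}
    =\exp\Bigl(-n\bigl[-H(x)+L_{\mathrm{Keyl}}(x,\rho)\bigr]-o(n)\Bigr).
\]
Here we used \(\alpha/n\approx x\) together with continuity of \(L_{\mathrm{Keyl}}(\cdot,\rho)\) when \(\Theta_k(\rho)>0\). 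This is the claimed bound, and letting \(\delta,\varepsilon\to0\) finishes the proof.

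\textbf{Main obstacle.} I expect the hardest part to be Step 1's control of the off-target weights, together with boundary cases. Specifically, one must show that weights \(\alpha\) far from \(x\), while \(\bar\lambda\approx x\), contribute negligibly to \(\widehat\mu_n^{\diag(x)}\). The factor \(x^\alpha\) is at most \(e^{-n(H(\alpha/n)+D(\alpha/n\|x))}\), but we must beat \(\dim\mc P_\lambda\Tr P_{\lambda,\alpha}A\) with \(\Tr A\) only bounded through \(\int\Tr K\,d\nu_n=1\). My first attempt would bound \(\Tr(P_{\lambda,\alpha}A_{\lambda,\varepsilon})\le\dim\mc Q_\lambda\) and use \(H(\bar\lambda)\approx H(x)\) to get the factor \(e^{-nD(\alpha/n\|x)-n(H(\alpha/n)-H(x))}\). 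This requires care, since \(H(\alpha/n)\ge H(\bar\lambda)\) holds by majorization. The rank-deficient cases \(x_k=0\) or \(\Theta_k(\rho)=0\) are handled by the regularization \(\rho+\epsilon\mathbf 1\) and the conventions in \eqref{eq:Keyl-rate}.
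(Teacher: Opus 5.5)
Your proposal is correct and follows essentially the paper's route. It uses the closed-ball/lower-semicontinuity reduction (Lemma~\ref{lem:local-bound-to-rate-bound}), covariance to reduce to \(\sigma=\diag(x)\), the torus-invariant localized seed, consistency at \(\diag(x)\) with a polynomial pigeonhole, and the Cholesky principal-minor bound of Lemma~\ref{lem:priciple-minor-lower-bound}. The only difference is organizational, and it works: you select a pair \((\lambda,\alpha)\) in a single pigeonhole, discarding far weights via \(A_{\lambda,\varepsilon}\le\mathbf 1_{\mc Q_\lambda}\), \(\dim\mc P_\lambda\le e^{nH(\bar\lambda)}\le e^{nH(\alpha/n)}\) (majorization) and Pinsker. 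The paper instead first selects a block (Lemma~\ref{lem:consistent-implication}) and then extracts the near-\(x\) weights inside it (Lemma~\ref{lem:robust-Keyl-comparison-torus-invariant}). The ``main obstacle'' you flag is already resolved by exactly that majorization bound.
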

To prove the theorem, suppose we are given an arbitrary covariant quantum state tomography protocol. Following the general characterization in Theorem~\ref{thm:characterization-covariant}, we can assume that the protocol is given by $\left(\{M_\sigma^{(n)}\}_{\sigma \in \mc D_d},\mu_n \right)_{n\ge 1}$ where $\mu_n$ is unitarily invariant and $\{M_\sigma^{(n)}\}_{\sigma \in \mc D_d}$ has the form
    \begin{equation}
        M_{U\diag(x)U^\dagger}^{(n)}
    =
    U_{\mathrm{Schur}}^\dagger
    \left(
        \sum_{\lambda\vdash_d n} \dim \mc Q_\lambda
        |\lambda\rangle\langle\lambda|
        \otimes \mathbf 1_{\mc P_\lambda}
        \otimes
        q_\lambda(U)K_{\lambda,x}q_\lambda(U^\dagger)
    \right)
    U_{\mathrm{Schur}}.
    \end{equation}
with $\int_{\Delta_d^\downarrow}
    \Tr K_{\lambda,x}d\nu_n(x) = 1,\ \forall \lambda \vdash_d n$, where $\nu_n$ is defined by \eqref{eq:mu-disintegration}. The corresponding estimator distribution is denoted by
\(\widehat\mu_n^{\rho,K}\):
\begin{equation}\label{eq:probability-estimator-covariant}
    \begin{aligned}
    \widehat\mu^{\rho,K}_n(A)& := \int_A \Tr(M_\sigma^{(n)} \rho^{\otimes n}) d\mu_n(\sigma) \\
    & = \sum_{\lambda \vdash_d n} \dim \mc P_\lambda \dim \mc Q_\lambda\Tr(q_\lambda(\rho)\iint_{U\diag(x)U^\dagger \in A} q_\lambda(U) K_{\lambda, x} q_\lambda(U^\dagger) d\nu_n(x) d\mu_{\mathrm{Haar}}(U)).
\end{aligned}
\end{equation}
A local analysis of the probability of a neighborhood of an estimator is crucial. To this end, we introduce the localized seed operator 
\begin{equation}\label{eq:averaged-seed}
    K_{\lambda,x}^\varepsilon:= \iint_{W \diag(y) W^\dagger \in N_\varepsilon(\diag(x))} q_\lambda(W) K_{\lambda, y} q_\lambda(W^\dagger) d\nu_n(y) d\mu_{\mathrm{Haar}}(W).
\end{equation}
One can check that \(K_{\lambda,x}^\varepsilon\) is positive,
torus-invariant, and satisfies
\[
    \Tr K_{\lambda,x}^\varepsilon\le 1.
\]

The following implication of consistency is crucial:
\medskip
\begin{lemma}\label{lem:consistent-implication}
    If the protocol is consistent, i.e., for any $\rho$ and $\varepsilon > 0$, $\lim_{n\to \infty}\widehat\mu^{\rho,K}_n(N_\varepsilon(\rho)) =1$. Then for any $x\in \Delta_d^\downarrow$ and $\varepsilon > 0$, there exists a Young sequence $\lambda^{(n)}$ with $\lambda^{(n)}/n \to x$ and $\ell(\lambda^{(n)}) \le \rank(x)$, such that
    \begin{equation}
        \lim_{n\to \infty} -\frac{1}{n} \log \Tr(q_{\lambda^{(n)}}(\diag(x)) K_{\lambda^{(n)},x}^\varepsilon) = H(x).
    \end{equation}
\end{lemma}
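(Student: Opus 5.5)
The plan is to test consistency at the diagonal state $\rho=\diag(x)$ and then squeeze a single Schur block from both sides. Since $\sigma=U\diag(y)U^\dagger\in N_\varepsilon(\diag(x))$ is exactly the localization used in \eqref{eq:averaged-seed}, formula \eqref{eq:probability-estimator-covariant} gives
\[
    \widehat\mu^{\diag(x),K}_n\bigl(N_\varepsilon(\diag(x))\bigr)
    =\sum_{\lambda\vdash_d n}\dim\mc P_\lambda\dim\mc Q_\lambda\,
    T_\lambda,
    \qquad
    T_\lambda:=\Tr\bigl(q_\lambda(\diag(x))K^\varepsilon_{\lambda,x}\bigr).
\]
By consistency this sum tends to $1$.

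\textbf{Upper bound on each block.} By \eqref{eq:diagonal-decomposition}, $q_\lambda(\diag(x))=\sum_{\alpha}x^\alpha P_{\lambda,\alpha}$ with $\alpha\prec\lambda$. Because $x$ is decreasing, $x^\alpha\le x^\lambda$ for every such $\alpha$; this follows from majorization and Schur-concavity of $\alpha\mapsto\sum_i\alpha_i\log x_i$ on the relevant orthant, and it is trivial when some $x_i=0$. Hence $\|q_\lambda(\diag(x))\|\le x^\lambda$. Together with $K^\varepsilon_{\lambda,x}\ge0$ and $\Tr K^\varepsilon_{\lambda,x}\le1$ this gives
\[
    T_\lambda\le x^\lambda=\exp\bigl(-n[H(\bar\lambda)+D(\bar\lambda\|x)]\bigr).
\]
If $\ell(\lambda)>\rank(x)$, then $x^\lambda=0$ and the block contributes nothing. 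Combining with \eqref{eq:dimension-estimates} and \eqref{eq:specht-dimension-estimates}, each term of the sum is at most $(n+1)^{d(d-1)/2}e^{-nD(\bar\lambda\|x)}$.

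\textbf{Localizing the mass.} If $\|\bar\lambda-x\|_1\ge\delta$, Pinsker's inequality gives $D(\bar\lambda\|x)\ge\delta^2/2$. There are at most $(n+1)^d$ Young diagrams, so the total contribution of such $\lambda$ is at most $\mathrm{poly}(n)e^{-n\delta^2/2}\to0$. Therefore, for $n$ large, at least $1/2$ of the mass lies on $G_{n,\delta}(x)$. Since $|G_{n,\delta}(x)|\le(n+1)^d$, pigeonholing produces $\lambda\in G_{n,\delta}(x)$ with
\[
    \dim\mc P_\lambda\dim\mc Q_\lambda\,T_\lambda\ge\tfrac12(n+1)^{-d}.
\]
Using $\dim\mc P_\lambda\le e^{nH(\bar\lambda)}$ and $\dim\mc Q_\lambda\le(n+1)^{d(d-1)/2}$, this yields $T_\lambda\ge\mathrm{poly}(n)^{-1}e^{-nH(\bar\lambda)}$.

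\textbf{Choosing a sequence $\lambda^{(n)}$.} The upper bound already gives $-\frac1n\log T_\lambda\ge H(\bar\lambda)$. So for every $\delta>0$ there is $n_\delta$ such that, for all $n\ge n_\delta$, some $\lambda\in G_{n,\delta}(x)$ satisfies
\[
    \Bigl|-\tfrac1n\log T_\lambda-H(\bar\lambda)\Bigr|\le\frac{C\log(n+1)}{n}.
\]
Take $\delta_j=1/j$ and, by a standard diagonal argument with increasing thresholds $n_{\delta_j}$, choose $\delta_n\downarrow0$ slowly enough that the statement holds with $\delta=\delta_n$ for every large $n$. Let $\lambda^{(n)}$ be the corresponding diagram. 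Then $\lambda^{(n)}/n\to x$ and $\ell(\lambda^{(n)})\le\rank(x)$. By continuity of $H$ on the simplex, $-\frac1n\log T_{\lambda^{(n)}}\to H(x)$.

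\textbf{Main obstacle.} The only delicate point is the per-block bound $\|q_\lambda(\diag(x))\|\le x^\lambda$. It requires the majorization fact for weights, together with care about zero entries of $x$, which are handled by the conventions $0^0=1$ and $0^m=0$. The diagonal choice of $\delta_n$ is routine.
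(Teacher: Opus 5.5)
Your proof is correct. It shares the paper's skeleton: test consistency at $\rho=\diag(x)$, discard blocks with $\bar\lambda$ far from $x$, pigeonhole over polynomially many diagrams, then sandwich $T_\lambda$. The key estimates, however, are obtained differently.

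The paper discards far blocks in two steps. It first shows that the $\lambda$-marginal of the joint law equals the Schur-sampling distribution $\mathbb P_{n,\diag(x)}(\lambda)$, using Schur orthogonality and $\int\Tr K_{\lambda,y}\,d\nu_n(y)=1$. It then invokes Lemma~\ref{lemma:Schur-concentration}, which rests on the Keyl--Werner bound. For the matching upper bound on $T_\lambda$ it uses $\dim\mc P_\lambda\dim\mc Q_\lambda T_\lambda\le 1$ together with the lower bound on $\dim\mc P_\lambda$ in \eqref{eq:specht-dimension-estimates}.

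You replace all of this with the single per-block estimate $T_\lambda\le\|q_\lambda(\diag(x))\|\,\Tr K^\varepsilon_{\lambda,x}\le x^\lambda$. This one bound does three things:
\begin{enumerate}
\item It re-derives a Keyl--Werner-type bound for the localized terms, so you need only Pinsker and counting.
\item It gives the sharper inequality $-\frac1n\log T_\lambda\ge H(\bar\lambda)+D(\bar\lambda\|x)$ without the Specht lower bound.
\item It disposes of blocks with $\ell(\lambda)>\rank(x)$ automatically, since $x^\lambda=0$ there.
\end{enumerate}
Your version is thus more self-contained, and it makes the diagonal choice of $\delta_n$ explicit, which the paper asserts in one line. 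The paper's version reuses standard spectrum-estimation results.

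One correction to your justification of $x^\alpha\le x^\lambda$. The map $\alpha\mapsto\sum_i\alpha_i\log x_i$ is linear and not symmetric, so Schur-concavity is not the right reason. Two correct arguments are available:
\begin{enumerate}
\item For $x>0$, use rearrangement ($\sum_i\alpha_i c_i\le\sum_i\alpha^\downarrow_i c_i$ for decreasing $c$) followed by Abel summation against the partial-sum inequalities of $\alpha\prec\lambda$. Equivalently, $\{\alpha:\alpha\prec\lambda\}$ is the convex hull of permutations of $\lambda$, and a linear functional with decreasing coefficients is maximized at the decreasing vertex $\lambda$. Zero entries of $x$ then follow by continuity.
\item Bypass the point entirely with $\|q_\lambda(\diag(x))\|\le s_\lambda(\diag(x))\le\dim\mc Q_\lambda\,x^\lambda$ from \eqref{eq:schur-polynomial-upper-lower}. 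This costs only a polynomial factor, which your argument absorbs.
\end{enumerate}
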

\begin{proof}
We extend the probability measures $\widehat\mu^{\rho,K}_n$ on $\mc D_d$ to probability measures on $\Delta_d^\downarrow \times \mc D_d$. For any $\lambda \vdash_d n$ and $A \in \mathfrak B(\mc D_d)$, we define
\[
    \widehat\mu^{\rho,K}_n(\lambda,A):= \dim \mc P_\lambda \dim \mc Q_\lambda\Tr(q_\lambda(\rho)\iint_{U\diag(x)U^\dagger \in A} q_\lambda(U) K_{\lambda, x} q_\lambda(U^\dagger) d\nu_n(x) d\mu_{\mathrm{Haar}}(U)).
\]
Then by definition \eqref{eq:probability-estimator-covariant}, we have $\widehat\mu_n^{\rho,K}(A)=\sum_{\lambda\vdash_d n} \widehat\mu^{\rho,K}_n(\lambda,A)$. 

Note that the marginal distribution of \(\lambda\) is given by $\mb P_{n,\rho}$, defined in \eqref{eq:probability-Young}. Indeed, taking \(A=\mc D_d\), we get
\[
\begin{aligned}
    \widehat\mu_n^{\rho,K}(\lambda,\mc D_d)
    &= \dim \mc P_\lambda \dim \mc Q_\lambda
    \Tr\left(
        q_\lambda(\rho)
        \int_{\Delta_d^\downarrow}\int_{\U(d)}
        q_\lambda(U)K_{\lambda,y} q_\lambda(U^\dagger)
        d\mu_{\mathrm{Haar}}(U)d\nu_n(y)
    \right) \\
    & = \dim \mc P_\lambda \Tr(q_\lambda(\rho)) = \mb P_{n,\rho}(\lambda),
\end{aligned}
\]
where we used Lemma~\ref{lemma:Schur-orthogonality} and the normalization $\int_{\Delta_d^\downarrow} \Tr(K_{\lambda,y})d\nu_n(y) = 1$. In particular, 
\begin{equation}\label{eq:comparison-Young-distribution}
    \widehat\mu^{\rho,K}_n(\lambda,A) \le P_{n,\rho}(\lambda),\quad A \in \mathfrak B(\mc D_d).
\end{equation}
For any $\delta > 0$, recall that
\begin{equation}
    G_{n,\delta}(x):= \{\lambda: \|\lambda/n  -x\|_1 < \delta, \ \ell(\lambda) \le \rank(x)\}.
\end{equation}
Via Lemma~\ref{lemma:Schur-concentration}, we have
\[
\lim_{n\to \infty}\mb P_{n,\rho}(G^c_{n,\delta}(x)) = 0.
\]
Therefore, for $A = N_\varepsilon(\diag(x))$, we have 
\[
\begin{aligned}
     \widehat\mu^{\diag(x),K}_n(N_\varepsilon(\diag(x))) & = \sum_{\lambda\vdash_d n} \widehat\mu^{\diag(x),K}_n(\lambda, N_\varepsilon(\diag(x))) \\
     & = \sum_{\lambda \in G_{n,\delta}(x)} \widehat\mu^{\diag(x),K}_n(\lambda, N_\varepsilon(\diag(x))) + \sum_{\lambda \in G^c_{n,\delta}(x)} \widehat\mu^{\diag(x),K}_n(\lambda, N_\varepsilon(\diag(x))) \\
     & \le \sum_{\lambda \in G_{n,\delta}(x)} \widehat\mu^{\diag(x),K}_n(\lambda, N_\varepsilon(\diag(x))) + \mb P_{n,\rho}(G^c_{n,\delta}(x))
\end{aligned}
\]
where the last inequality follows from \eqref{eq:comparison-Young-distribution}. Recall that $\widehat\mu^{\diag(x),K}_n(N_\varepsilon(\diag(x))) \to 1$, then for each $n > 0$, we choose $\delta_n \to 0$ such that
\[
\lim_{n \to \infty}\sum_{\lambda \in G_{n,\delta_n}(x)} \widehat\mu^{\diag(x),K}_n(\lambda, N_\varepsilon(\diag(x))) = 1. 
\]
Since \[
\widehat\mu^{\diag(x),K}_n(\lambda, N_\varepsilon(\diag(x))) =  \dim \mc P_\lambda \dim \mc Q_\lambda \Tr(q_\lambda(\diag(x)) K_{\lambda,x}^\varepsilon),
\]
we have 
\[
\sum_{\lambda \in G_{n,\delta_n}(x)} \dim \mc P_\lambda \dim \mc Q_\lambda \Tr(q_\lambda(\diag(x)) K_{\lambda,x}^\varepsilon) \ge 1 - o(1).
\]
Via Lemma~\ref{lemma:rep-data}, for each $n\ge 1$, there exists $\lambda^{(n)} \in G_{n,\delta_n}(x)$, such that 
\[
\Tr(q_{\lambda^{(n)}}(\diag(x)) K_{{\lambda^{(n)}},x}^\varepsilon) \ge e^{-n H(\overline{{\lambda^{(n)}}})} \mathrm{poly}(n)^{-1}.
\]
On the other hand, we always have \[
\dim \mc P_{\lambda^{(n)}} \dim \mc Q_{\lambda^{(n)}} \Tr(q_{\lambda^{(n)}}(\diag(x)) K_{{\lambda^{(n)}},x}^\varepsilon) \le 1 \implies \Tr(q_{\lambda^{(n)}}(\diag(x)) K_{{\lambda^{(n)}},x}^\varepsilon) \le e^{-n H(\overline{{\lambda^{(n)}}})} \mathrm{poly}(n).
\]
Taking $\lim_{n \to \infty} -\frac{1}{n} \log$ on both sides concludes the proof. 
\end{proof}
The following lower bound is crucial for the weight analysis:
\medskip
\begin{lemma}[Principal minor lower bound] \label{lem:priciple-minor-lower-bound}
Let \(\tau \ge 0\), then for every \(\alpha\in\mathrm{Wt}_d(\lambda)\),
\[
    P_{\lambda,\alpha}q_\lambda(\tau)P_{\lambda,\alpha}
    \ge
    \Theta(\tau)^\alpha P_{\lambda,\alpha},
    \qquad
    \Theta(\tau)^\alpha:=\prod_{k=1}^d \Theta_k(\tau)^{\alpha_k}.
\]
\end{lemma}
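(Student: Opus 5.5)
Our approach is to use a Cholesky-type factorization of \(\tau\), reduce first to the positive definite case, and then pass to a limit. Assume first \(\tau>0\). Write \(\tau = L^\dagger D L\) with \(L\) upper triangular unipotent (ones on the diagonal) and \(D=\diag(\Theta_1(\tau),\ldots,\Theta_d(\tau))\). This is the \(UDU^\dagger\)-type Cholesky decomposition adapted to leading principal minors. With this convention the leading \(k\times k\) block of \(\tau\) equals \((L^\dagger)_{k\times k}D_{k\times k}L_{k\times k}\), so its determinant is \(\prod_{j\le k}\Theta_j(\tau)\). The pivots are therefore exactly the ratios \(\Delta_k/\Delta_{k-1}\), matching \eqref{eq:principle-ratio-vector} in the positive definite case, where no \(\varepsilon\)-regularization is needed.

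Since \(q_\lambda\) is a representation (and a \(*\)-representation on \(\U(d)\), hence \(q_\lambda(X^\dagger)=q_\lambda(X)^\dagger\) by polynomiality and analytic continuation from \(\U(d)\) to \(\GL(d)\)), we get
\[
q_\lambda(\tau)=q_\lambda(L)^\dagger\, q_\lambda(D)\, q_\lambda(L).
\]
By \eqref{eq:diagonal-decomposition}, \(q_\lambda(D)=\sum_\beta \Theta(\tau)^\beta P_{\lambda,\beta}\ge 0\), which keeps only nonnegative terms. Now take \(v\in\mathcal Q_{\lambda,\alpha}\). By \eqref{eq:triangular-weight}, \(q_\lambda(L)v = v + w\), where \(w\) lies in weight spaces \(\beta\succ\alpha\), \(\beta\neq\alpha\). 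The \(\alpha\)-component is exactly \(v\) because \(L\) is unipotent: the diagonal part acts trivially and the raising operators strictly increase the weight. Hence
\[
\langle v, q_\lambda(\tau) v\rangle=\sum_\beta \Theta(\tau)^\beta\|P_{\lambda,\beta}q_\lambda(L)v\|^2\ge \Theta(\tau)^\alpha\|v\|^2 .
\]
This is the claimed operator inequality on \(\mathcal Q_{\lambda,\alpha}\).

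For general \(\tau\ge0\), apply the result to \(\tau+\varepsilon\mathbf 1>0\) and let \(\varepsilon\to0^+\). The left side \(P_{\lambda,\alpha}q_\lambda(\tau+\varepsilon\mathbf 1)P_{\lambda,\alpha}\) is continuous in \(\varepsilon\), since \(q_\lambda\) is polynomial. On the right, \(\Theta_k(\tau+\varepsilon\mathbf 1)\to\Theta_k(\tau)\) by definition \eqref{eq:principle-ratio-vector}. We must check that this limit exists and is finite: \(\Delta_k(\tau+\varepsilon\mathbf 1)/\Delta_{k-1}(\tau+\varepsilon\mathbf 1)\) is a Schur complement, bounded by \(\tau_{kk}+\varepsilon\) and monotone in \(\varepsilon\), and it is a rational function with a limit. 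The convention \(0^0=1\) handles components with \(\alpha_k=0\), where \(\Theta_k\) may vanish.

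The main obstacle is getting the triangularity conventions right. One needs \(\tau=L^\dagger DL\) with \(L\) upper triangular, so that \eqref{eq:triangular-weight} applies to \(q_\lambda(L)\) acting on the right, and so that the pivots are ratios of \emph{leading} minors. Indeed \((L^\dagger DL)_{k\times k}\) depends only on the leading blocks because \(L\) is upper triangular. A secondary care point is justifying the identity \(q_\lambda(L^\dagger)=q_\lambda(L)^\dagger\) for non-unitary \(L\). This follows since both sides are polynomial in the entries of \(L,\bar L\) and agree on \(\U(d)\), hence on \(\GL(d)\) by Zariski density (the holomorphic extension argument).
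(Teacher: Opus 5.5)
Your proof is correct and is essentially the paper's argument: the unipotent Cholesky factorization $\tau=L^\dagger\diag(\Theta(\tau))L$ with $L$ upper triangular, the observation that $q_\lambda(L)v$ has $\alpha$-component exactly $v$ for $v\in\mc Q_{\lambda,\alpha}$, orthogonality of the weight spaces, and the regularization $\tau+\varepsilon\mathbf 1$ followed by $\varepsilon\to0^+$. The extra details you supply are points the paper leaves implicit: that the pivots equal $\Delta_k/\Delta_{k-1}$, that the limit defining $\Theta_k$ exists, and that $q_\lambda(L^\dagger)=q_\lambda(L)^\dagger$ (for the last, the clean justification is that both sides are polynomials in $\bar L$ alone and agree on $\U(d)$).
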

\begin{proof}
Assume $\tau>0$ first. Using the Cholesky decomposition, there exists an upper triangular matrix $L$ with diagonal entries equal to \(1\), such that 
\[
    \tau=L^\dagger \diag(\Theta(\tau)) L.
\]
Let \(v\in\mathcal Q_{\lambda,\alpha}\). By the triangular weight property~\eqref{eq:triangular-weight},
\[
    q_\lambda(L)v
    =
    v+\sum_{\beta\succ\alpha} v_\beta,
    \qquad
    v_\beta\in\mathcal Q_{\lambda,\beta}.
\]
Since \(q_\lambda(\diag(\Theta(\tau)))\) is diagonal in the weight-space decomposition and
different weight spaces are orthogonal,
\[
\begin{aligned}
    \langle v,q_\lambda(\tau)v\rangle
    &=
    \left\langle
        q_\lambda(L)v,
        q_\lambda(\diag(\Theta(\tau)))q_\lambda(L)v
    \right\rangle  \\
    &=
    \Theta(\tau)^\alpha \|v\|^2
    +
    \sum_{\beta\succ\alpha}
    \Theta(\tau)^\beta \|v_\beta\|^2 \\
    &\ge
    \Theta(\tau)^\alpha\|v\|^2.
\end{aligned}
\]
This proves the operator inequality. When $\tau \ge 0$, replacing $\tau$ by $\tau + \varepsilon \textbf{1}$ and then sending $\varepsilon \to 0+$ concludes the proof.
\end{proof}
The following universal upper bound is derived:
\medskip
\begin{lemma}[Comparison lemma]
\label{lem:robust-Keyl-comparison-torus-invariant}
Let \(x\in\Delta_d^\downarrow\). Suppose \(\lambda^{(n)}\vdash_d n\)
satisfies
\[
    \frac{\lambda^{(n)}}{n}\longrightarrow x,\quad \ell(\lambda^{(n)}) \le \rank(x)
\]
and suppose \(K_{\lambda^{(n)}} \in \mc L(\mc Q_{\lambda^{(n)}})_+\) satisfying $\Tr K_{\lambda^{(n)}}\le 1$ and 
\[
   q_{\lambda^{(n)}}(T) K_{\lambda^{(n)}}q_{\lambda^{(n)}}(T^\dagger) =K_{\lambda^{(n)}},\ \forall T \in \mathrm{T}(d).  
\]
Assume that
\begin{equation}\label{eq:robust-self-consistency}
    \limsup_{n\to\infty}
    -\frac1n
    \log
    \Tr\left(
        q_{\lambda^{(n)}}(\diag(x))K_{\lambda^{(n)}}
    \right)
    \le
    H(x).
\end{equation}
Then, for every \(\tau\in\mc D_d\),
\begin{equation}\label{eq:robust-Keyl-comparison-conclusion}
    \limsup_{n\to\infty}
    -\frac1n
    \log
    \Tr\left(
        q_{\lambda^{(n)}}(\tau)
        K_{\lambda^{(n)}}
    \right)
    \le
    L_{\mathrm{Keyl}}(x,\tau),
\end{equation}
where $L_{\mathrm{Keyl}}(x,\tau)$ is defined in~\eqref{eq:Keyl-rate}.
\end{lemma}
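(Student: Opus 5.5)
Since $K:=K_{\lambda^{(n)}}$ is torus invariant, the twirling identity \eqref{eq:twirl-torus} gives $K=\sum_{\alpha}P_{\lambda,\alpha}KP_{\lambda,\alpha}$, writing $\lambda=\lambda^{(n)}$. Set $w_\alpha:=\Tr(P_{\lambda,\alpha}K)\ge 0$, so that $\sum_\alpha w_\alpha=\Tr K\le 1$. Then for any $\tau\ge0$,
\[
    \Tr\bigl(q_\lambda(\tau)K\bigr)=\sum_{\alpha\in\mathrm{Wt}_d(\lambda)}\Tr\bigl(P_{\lambda,\alpha}q_\lambda(\tau)P_{\lambda,\alpha}K\bigr)\ \ge\ \sum_{\alpha}\Theta(\tau)^\alpha w_\alpha ,
\]
by Lemma~\ref{lem:priciple-minor-lower-bound}. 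For $\tau=\diag(x)$ we have equality by \eqref{eq:diagonal-decomposition}: $\Tr(q_\lambda(\diag(x))K)=\sum_\alpha x^\alpha w_\alpha$. The plan is to show that the hypothesis \eqref{eq:robust-self-consistency} forces $K$ to carry non-negligible weight on weights $\alpha$ with $\alpha/n\approx x$ (in the ordered sense). On such weights, $\Theta(\tau)^\alpha\approx e^{-nL_{\mathrm{Keyl}}(x,\tau)}$, which yields the claim.

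\textbf{Step 1 (localization of the weight).} The number of weights is at most $(n+1)^d$, so there is an $\alpha^{(n)}$ with $x^{\alpha^{(n)}}w_{\alpha^{(n)}}\ge \mathrm{poly}(n)^{-1}\Tr(q_\lambda(\diag(x))K)\ge e^{-n(H(x)+o(1))}$. Since $w_\alpha\le1$, this gives $x^{\alpha^{(n)}}\ge e^{-n(H(x)+o(1))}$. Along a subsequence, $\alpha^{(n)}/n\to a\in\Delta_d$, and we obtain $-\sum_k a_k\log x_k\le H(x)$. Moreover $\alpha^{(n)}\prec\lambda^{(n)}$, so $a\prec x$ in the limit, and $a_k=0$ whenever $x_k=0$ (otherwise the left side is $+\infty$). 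Now use the identity $-\sum_k a_k\log x_k=H(a)+D(a\|x)$. Schur concavity of $H$ together with $a\prec x$ gives $H(a)\ge H(x)$, so we get $D(a\|x)\le0$, i.e.\ $a=x$. Thus $\alpha^{(n)}/n\to x$ along every subsequence, and we also obtain $w_{\alpha^{(n)}}\ge e^{-no(1)}$ because $x^{\alpha^{(n)}}\le1$ up to the factor $e^{-n(H(x)+o(1))}$. More precisely, since $x^{\alpha^{(n)}}=e^{-n(H(x)+o(1))}$ once $\alpha^{(n)}/n\to x$ (using the support condition), we get $w_{\alpha^{(n)}}\ge e^{-o(n)}$.

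\textbf{Step 2 (comparison).} For general $\tau\in\mathcal D_d$,
\[
    \Tr(q_\lambda(\tau)K)\ge\Theta(\tau)^{\alpha^{(n)}}w_{\alpha^{(n)}},\qquad\text{so}\qquad -\tfrac1n\log\Tr(q_\lambda(\tau)K)\le-\sum_k\tfrac{\alpha^{(n)}_k}{n}\log\Theta_k(\tau)+o(1).
\]
When all $\Theta_k(\tau)>0$ for $k\le\rank(x)$, this converges to $L_{\mathrm{Keyl}}(x,\tau)$, and the coordinates with $x_k=0$ contribute nothing because $\alpha^{(n)}_k=0$. The last fact follows from $\alpha^{(n)}\prec\lambda^{(n)}$ and $\ell(\lambda^{(n)})\le\rank(x)$: nonzero weight entries can appear only in positions below $\ell(\lambda)$ after sorting, but unsorted positions could differ. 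I expect this to be the main obstacle. The fix is to use $x^{\alpha^{(n)}}>0$ from Step 1, which forces $\alpha^{(n)}_k=0$ whenever $x_k=0$. If some $\Theta_k(\tau)=0$ with $x_k>0$, then $L_{\mathrm{Keyl}}=\infty$ and there is nothing to prove. Convergence of $\tfrac{\alpha^{(n)}_k}{n}\log\Theta_k(\tau)$ requires only finite logs on the support, which we have. Taking $\limsup$ gives \eqref{eq:robust-Keyl-comparison-conclusion}. The delicate points are the subsequence argument in Step 1 and the support bookkeeping.
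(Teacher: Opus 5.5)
Your proof is correct and follows essentially the paper's route: torus invariance reduces $\Tr(q_\lambda(\tau)K)$ to the weight blocks, and the hypothesis together with $\alpha\prec\lambda^{(n)}$ and Schur concavity of $H$ forces non-negligible mass on weights with $\alpha/n\approx x$. Lemma~\ref{lem:priciple-minor-lower-bound} then supplies the factor $\Theta(\tau)^\alpha\approx e^{-nL_{\mathrm{Keyl}}(x,\tau)}$ on those blocks. The only difference is that you pigeonhole onto a single dominant weight $\alpha^{(n)}$ and argue by compactness, rather than summing over the neighbourhood $G^\lambda_{n,\delta}(x)$ with the moduli $\omega_{x,\tau}(\delta)$ and $c_\delta$. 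This has a side benefit: $x^{\alpha^{(n)}}>0$ gives $\alpha^{(n)}_k=0$ for $k>\rank(x)$, which majorization alone does not, e.g.\ $(n-1,1)\prec(n,0)$. The paper's assertion that every $\alpha\in G^\lambda_{n,\delta}(x)$ satisfies $\rank(\alpha/n)\le r$ needs exactly this restriction to be valid.
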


\begin{proof}
For brevity, we write $\lambda = \lambda^{(n)}$ and 
\[
r := \rank(x) = \#\{i:x_i>0\}.
\]
We assume $\rank(\tau)\ge r$, otherwise \(L_{\mathrm{Keyl}}(x,\tau)=+\infty\) and \eqref{eq:robust-Keyl-comparison-conclusion} clearly holds. For any fixed $\delta>0$, we introduce 
\begin{equation}\label{eq:quantity-omega}
    \omega_{x,\tau}(\delta):= \sup\left\{\left|\sum_{k=1}^r (y_k - x_k)\log \Theta_k(\tau) \right|:\ y\in \Delta_d,\ \rank(y) \le r,\ \|y-x\|_1 < \delta \right\}.
\end{equation}
If \(\Theta_k(\tau)=0\) for some \(k\le r\), then
\(L_{\mathrm{Keyl}}(x,\tau)=+\infty\), and there is nothing to prove. Hence we
may assume that
\[
    \Theta_k(\tau)>0,
    \qquad k=1,\ldots,r.
\]
Under this assumption the quantities \(\log\Theta_k(\tau)\) are finite, and
\[
    \lim_{\delta\to0+}\omega_{x,\tau}(\delta)=0.
\]
Denote 
\begin{equation}\label{eq:weight-close-to-x}
    G_{n,\delta}^\lambda(x):= \left\{\alpha \prec \lambda: \left\|\frac{\alpha}{n} - x\right\|_1 < \delta\right\}.
\end{equation}
We claim
\begin{equation}\label{eq:Keyl-comparison-weight-claim}
    \limsup_{n\to \infty} -\frac{1}{n} \log \bigg( \sum_{\alpha \in G_{n,\delta}^\lambda(x)} \Tr(P_{\lambda,\alpha} K_\lambda)\bigg) \le \omega_{x,\diag(x)}(\delta).
\end{equation}
In fact, for any $\alpha \in G_{n,\delta}^\lambda(x)$, $\alpha/n$ satisfies the constraint in \eqref{eq:quantity-omega}, then by definition, we have
\begin{equation}\label{eq:quantity-omega-implication}
\left|\sum_{k=1}^r \left(\frac{\alpha_k}{n} - x_k\right) \log \Theta_k(\tau) \right| \le \omega_{x,\tau}(\delta).
\end{equation}
Choosing $\tau = \diag(x)$ in \eqref{eq:quantity-omega-implication}, we have $\Theta_k(\tau) = x_k$ thus taking the exponential we have
\begin{equation}\label{eq:monomial-upper-bound-high}
    x^\alpha \le \exp(-nH(x) + n\, \omega_{x,\diag(x)}(\delta)),\quad \alpha \in G_{n,\delta}^\lambda(x).
\end{equation}
If $\alpha \notin G_{n,\delta}^\lambda(x)$, denote $\overline{\alpha} = \alpha /n$, we have $\|\alpha/n - x\|_1 \ge \delta$. Note that such $\alpha$ may not exist and in this case we set any quantity involving $\sum_{\alpha \notin G_{n,\delta}^\lambda(x)}$ to be zero. Without loss of generality, we assume for all $n$, such $\alpha$ exists. Then via Schur concavity of von Neumann entropy, we have $H(\overline{\alpha}) \ge H(\overline{\lambda})$, and
\[
\begin{aligned}
    \sum_{k=1}^r \left(\frac{\alpha_k}{n} - x_k\right) \log x_k & = H(x) - H(\overline{\alpha})-D(\overline{\alpha}\| x)\\ 
    & \le H(x) - H(\overline{\lambda})-D(\overline{\alpha}\| x).
\end{aligned}
\]
Since $\overline \lambda \to x$, as $n$ large enough, we can make $H(x) - H(\overline{\lambda})$ arbitrarily close to zero. On the other hand, via Pinsker's inequality, $D(\overline{\alpha}\| x)$ is bounded away from zero. Thus whenever $n$ is large enough, there exists $c_\delta > 0$ such that 
\[
\sum_{k=1}^r \left(\frac{\alpha_k}{n} - x_k\right) \log x_k  \le -c_\delta,
\]
which implies 
\begin{equation}\label{eq:monomial-upper-bound-low}
    x^\alpha = \exp(-nH(x) + n\sum_{k=1}^r \left(\frac{\alpha_k}{n} - x_k\right) \log x_k) \le \exp(-nH(x) - n\, c_\delta),\quad \alpha \notin G_{n,\delta}^\lambda(x).
\end{equation}
Then we have
\[
\begin{aligned}
    \Tr(q_\lambda(\diag(x)) K_\lambda)& = \sum_{\alpha \prec \lambda} x^\alpha \Tr(P_{\lambda,\alpha} K_\lambda) \\
    & =  \sum_{\alpha \in G_{n,\delta}^\lambda(x)} x^\alpha \Tr(P_{\lambda,\alpha} K_\lambda) + \sum_{\alpha \notin G_{n,\delta}^\lambda(x)} x^\alpha \Tr(P_{\lambda,\alpha} K_\lambda) \\
    & \le \exp(-nH(x) + n\, \omega_{x,\diag(x)}(\delta)) \sum_{\alpha \in G_{n,\delta}^\lambda(x)} \Tr(P_{\lambda,\alpha} K_\lambda) + \exp(-nH(x) - n\, c_\delta).
\end{aligned}
\]
For the first equality, we used \eqref{eq:diagonal-decomposition}. The second equality follows from separating the weights into $G_{n,\delta}^\lambda(x)$ and $G_{n,\delta}^\lambda(x)^c$. For the inequality, we used \eqref{eq:monomial-upper-bound-high} and \eqref{eq:monomial-upper-bound-low} respectively, and since $\Tr(K_\lambda) \le 1$, we have $\sum_{\alpha \notin G_{n,\delta}^\lambda(x)} \Tr(K_\lambda P_{\lambda, \alpha}) \le \sum_{\alpha \prec \lambda} \Tr(K_\lambda P_{\lambda, \alpha}) \le 1$. On the other hand, since 
\[
\limsup_{n\to\infty}
    -\frac1n\log\Tr\left(
        q_{\lambda}(\diag(x))K_{\lambda}\right) \le H(x),
\]
there exists $\eta_n \to 0$, such that 
\[
\Tr(q_\lambda(\diag(x)) K_\lambda) \ge \exp(-nH(x) - n\, \eta_n).
\]
Combining the two bounds, we get 
\[
\sum_{\alpha \in G_{n,\delta}^\lambda(x)} \Tr(P_{\lambda,\alpha} K_\lambda) \ge (e^{-n\, \eta_n} - e^{-n\,c_\delta})e^{-n\,\omega_{x,\diag(x)}(\delta)}.
\]
Taking $-\frac{1}{n} \log$ on both sides concludes the proof of \eqref{eq:Keyl-comparison-weight-claim}.

Now we use the fact that $K_\lambda$ is torus-invariant, thus by \eqref{eq:twirl-torus}, $K_\lambda = \sum_{\alpha \prec \lambda} P_{\lambda,\alpha} K_\lambda P_{\lambda,\alpha}$. Then 
\[
\begin{aligned}
    \Tr(q_\lambda(\tau) K_\lambda)& = \sum_{\alpha \in G_{n,\delta}^\lambda(x)} \Tr(P_{\lambda,\alpha} q_\lambda(\tau) P_{\lambda,\alpha} K_\lambda) + \sum_{\alpha \notin G_{n,\delta}^\lambda(x)} \Tr(P_{\lambda,\alpha} q_\lambda(\tau) P_{\lambda,\alpha} K_\lambda) \\
    & \ge \sum_{\alpha \in G_{n,\delta}^\lambda(x)} \Tr(P_{\lambda,\alpha} q_\lambda(\tau) P_{\lambda,\alpha} K_\lambda) \\
    & \ge \sum_{\alpha \in G_{n,\delta}^\lambda(x)} \Theta(\tau)^\alpha\Tr(P_{\lambda,\alpha} K_\lambda),
\end{aligned}
\]
where the last inequality follows from $P_{\lambda,\alpha} q_\lambda(\tau) P_{\lambda,\alpha} \ge \Theta(\tau)^\alpha P_{\lambda,\alpha}$ shown in Lemma~\ref{lem:priciple-minor-lower-bound}. Recall the definition $\omega_{x,\tau}(\delta)$ in \eqref{eq:quantity-omega}, and $\alpha \in G_{n,x}^\lambda(\delta)$ satisfies $\|\alpha /n - x\|_1 < \delta$, thus 
\[
\begin{aligned}
    \Theta(\tau)^\alpha & = \exp(n\sum_{k=1}^r \frac{\alpha_k}{n} \log \Theta_k(\tau)) \\
    & = \exp(n\sum_{k=1}^r x_k \log \Theta_k(\tau) + n\sum_{k=1}^r (\frac{\alpha_k}{n} - x_k) \log \Theta_k(\tau) ) \\
    & \ge \exp(n\sum_{k=1}^r x_k \log \Theta_k(\tau) - n\,\omega_{x,\tau}(\delta)).
\end{aligned}
\]
Therefore, 
\[
\begin{aligned}
   & \limsup_{n\to \infty} -\frac{1}{n} \log \Tr(q_\lambda(\tau) K_\lambda) \\
   & \le -\sum_{k=1}^r x_k \log \Theta_k(\tau) + \omega_{x,\tau}(\delta) + \limsup_{n\to \infty} -\frac{1}{n} \log \bigg( \sum_{\alpha \in G_{n,\delta}^\lambda(x)} \Tr(P_{\lambda,\alpha} K_\lambda)\bigg) \\
   & \le -\sum_{k=1}^r x_k \log \Theta_k(\tau) + \omega_{x,\tau}(\delta) + \omega_{x,\diag(x)}(\delta). 
\end{aligned}
\]
Finally, sending $\delta \to 0$, we conclude the proof.
\end{proof}
To transfer the derived local bound to pointwise rate function bounds, we need the following lemma:
\medskip
\begin{lemma}[From local bounds to pointwise rate bounds]
\label{lem:local-bound-to-rate-bound}
Let \(\{\widehat\mu_n^{\rho,K}\}_{n\ge1}\) satisfy an LDP with rate
function \(I_K(\cdot\|\rho)\). Fix \(\sigma\in\mc D_d\). Suppose that
for some \(R\in[0,\infty]\), for every \(\varepsilon>0\),
\begin{equation}\label{eq:local-ball-exponential-bound}
    \limsup_{n\to\infty}
    -\frac1n\log
    \widehat\mu_n^{\rho,K}(N_\varepsilon(\sigma))
    \le R .
\end{equation}
Then
\begin{equation}\label{eq:local-bound-rate-conclusion}
    I_K(\sigma\|\rho)\le R .
\end{equation}
\end{lemma}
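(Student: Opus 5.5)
The plan is to combine the open-set lower bound of the LDP with lower semicontinuity of \(I_K(\cdot\|\rho)\). The case \(R=\infty\) is trivial, so assume \(R<\infty\). Fix \(\varepsilon>0\) and consider the closed ball
\[
    \overline N_\varepsilon(\sigma):=\{\tau\in\mc D_d:\|\tau-\sigma\|_1\le\varepsilon\}\supseteq N_\varepsilon(\sigma).
\]
By monotonicity of measures, \(\widehat\mu_n^{\rho,K}(\overline N_\varepsilon(\sigma))\ge \widehat\mu_n^{\rho,K}(N_\varepsilon(\sigma))\). Hence
\[
    \liminf_{n\to\infty}-\frac1n\log\widehat\mu_n^{\rho,K}(\overline N_\varepsilon(\sigma))
    \le
    \limsup_{n\to\infty}-\frac1n\log\widehat\mu_n^{\rho,K}(N_\varepsilon(\sigma))
    \le R,
\]
using \eqref{eq:local-ball-exponential-bound}. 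The closed-set upper bound in Definition~\ref{def:TomRateFunc} bounds the left-hand side from below by \(\inf_{\tau\in\overline N_\varepsilon(\sigma)}I_K(\tau\|\rho)\). Therefore
\[
    \inf_{\tau\in\overline N_\varepsilon(\sigma)}I_K(\tau\|\rho)\le R
    \qquad\text{for every }\varepsilon>0 .
\]
Note that the inequality chain only needs \(\liminf\le\limsup\), so no extra assumption on the limits is required.

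Next, pass to \(\varepsilon\downarrow 0\). The set \(\overline N_\varepsilon(\sigma)\) is compact and \(I_K(\cdot\|\rho)\) is lower semicontinuous, so the infimum over it is attained at some \(\tau_\varepsilon\) with \(\|\tau_\varepsilon-\sigma\|_1\le\varepsilon\) and \(I_K(\tau_\varepsilon\|\rho)\le R\). Take \(\varepsilon=1/m\). Then \(\tau_{1/m}\to\sigma\), and lower semicontinuity gives
\[
    I_K(\sigma\|\rho)\le\liminf_{m\to\infty}I_K(\tau_{1/m}\|\rho)\le R,
\]
which is \eqref{eq:local-bound-rate-conclusion}. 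Attainment of the infimum is not actually needed: choosing \(\tau_m\) with \(I_K(\tau_m\|\rho)\le R+1/m\) works equally well.

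The only subtle point is the direction of the inequalities. The hypothesis is a lower bound on probabilities, that is, an upper bound on \(-\frac1n\log\). One might reach for the open-set LDP bound, but it points the wrong way. The closed-set bound is the one that converts a lower bound on probabilities into an upper bound on the infimum of the rate, and the enlargement from the open ball to the closed ball is what makes it applicable. Beyond this, I expect the argument to be routine.
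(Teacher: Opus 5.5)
Your proposal is correct and follows the paper's proof essentially step for step. Like the paper, you enlarge the open ball to the closed ball, apply the closed-set LDP bound to get \(\inf_{\overline N_\varepsilon(\sigma)} I_K(\cdot\|\rho)\le R\), and then conclude with near-minimizers \(\tau_m\to\sigma\) and lower semicontinuity.
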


\begin{proof}
If \(R=+\infty\), there is nothing to prove. Assume \(R<+\infty\). Let
\[
    F_\varepsilon(\sigma)
    :=
    \{\omega\in\mc D_d:\|\omega-\sigma\|_1\le\varepsilon\} = \overline{N_\varepsilon(\sigma)}.
\]
Then \(F_\varepsilon(\sigma)\) is closed and $ N_\varepsilon(\sigma)\subseteq F_\varepsilon(\sigma)$. Hence
\[
    \widehat\mu_n^{\rho,K}(F_\varepsilon(\sigma))
    \ge
    \widehat\mu_n^{\rho,K}(N_\varepsilon(\sigma)),
\]
Since $\widehat \mu_n^{\rho,K}$ satisfies a LDP, we have 
\[
\inf_{\omega\in F_\varepsilon(\sigma)}
    I_K(\omega\|\rho)\le \liminf_{n\to\infty}
    -\frac1n\log
    \widehat\mu_n^{\rho,K}(F_\varepsilon(\sigma)) \le \limsup_{n\to \infty} -\frac{1}{n}\log  \widehat\mu_n^{\rho,K}(N_\varepsilon(\sigma)) \le R.
\]
We may choose \(\omega_m\in F_{\varepsilon_m}(\sigma)\) with $\varepsilon_m\to 0$, such that
\[
    I_K(\omega_m\|\rho)\le R+\frac1m.
\]
Since \(\omega_m\to\sigma\) and \(I_K(\cdot\|\rho)\) is lower semicontinuous,
\[
    I_K(\sigma\|\rho)
    \le
    \liminf_{m\to\infty}I_K(\omega_m\|\rho)
    \le R.
\]
This proves the lemma.
\end{proof}
Finally, we are able to prove Theorem~\ref{thm:main-covariant}:
\begin{proof}[Proof of Theorem~\ref{thm:main-covariant}]
    By Lemma~\ref{lem:local-bound-to-rate-bound}, we only need to show that for any $\varepsilon > 0$,
    \[
    \limsup_{n\to\infty}
    -\frac1n\log
    \widehat\mu_n^{\rho,K}(N_\varepsilon(\sigma)) \le S^{\mathrm{Keyl}}(\sigma \| \rho).
    \]
    Suppose $\sigma = U\diag(x) U^\dagger$, then 
    \[
    N_\varepsilon(\sigma) = \{\tau: \|\tau - \sigma\|_1 < \varepsilon\} = \{\tau: \|U^\dagger \tau U - \diag(x)\|_1 < \varepsilon\} = U N_\varepsilon(\diag(x)) U^\dagger.
    \]
    By definition of $\widehat \mu_n^{\rho,K}$, and left-invariance of Haar measure, we have
    \[
    \begin{aligned}
         & \widehat\mu_n^{\rho,K}(N_\varepsilon(\sigma)) \\
         & = \sum_{\lambda \vdash_d n} \dim \mc P_\lambda \dim \mc Q_\lambda\Tr(q_\lambda(\rho)\iint_{W\diag(y)W^\dagger \in N_\varepsilon(\sigma)} q_\lambda(W) K_{\lambda, y} q_\lambda(W^\dagger) d\nu_n(y) d\mu_{\mathrm{Haar}}(W)) \\
         & = \sum_{\lambda \vdash_d n} \dim \mc P_\lambda \dim \mc Q_\lambda\Tr(q_\lambda(\rho)\iint_{U^\dagger W\diag(y)W^\dagger U \in N_\varepsilon(\diag(x))} q_\lambda(W) K_{\lambda, y} q_\lambda(W^\dagger) d\nu_n(y) d\mu_{\mathrm{Haar}}(W)) \\
         & = \sum_{\lambda \vdash_d n} \dim \mc P_\lambda \dim \mc Q_\lambda\Tr(q_\lambda(\rho)\iint_{ W\diag(y)W^\dagger \in N_\varepsilon(\diag(x))}q_\lambda(U) q_\lambda(W) K_{\lambda, y} q_\lambda(W^\dagger) q_\lambda(U^\dagger) d\nu_n(y) d\mu_{\mathrm{Haar}}(W)) \\
         & = \sum_{\lambda \vdash_d n} \dim \mc P_\lambda \dim \mc Q_\lambda\Tr(q_\lambda(U^\dagger \rho U) K_{\lambda,x}^\varepsilon),
    \end{aligned}
    \]
    where $K_{\lambda,x}^\varepsilon$ is introduced in \eqref{eq:averaged-seed} which is torus invariant. Since $\widehat\mu_n^{\rho,K}$ is consistent, by Lemma~\ref{lem:consistent-implication}, there exists a Young sequence $\lambda^{(n)}$ with \[
    \lambda^{(n)}/n \to x,\quad \ell(\lambda^{(n)}) \le \rank(x),
    \]
    such that \[
    \lim_{n\to \infty} -\frac{1}{n} \log \Tr(q_{\lambda^{(n)}}(\diag(x)) K_{\lambda^{(n)},x}^\varepsilon) = H(x).
    \]
    Then by Lemma~\ref{lem:robust-Keyl-comparison-torus-invariant},
    \[
    \limsup_{n\to \infty} -\frac{1}{n} \log \Tr(q_{\lambda^{(n)}}(U^\dagger \rho U) K_{\lambda^{(n)},x}^\varepsilon)\le L_{\mathrm{Keyl}}(x,U^\dagger \rho U).
    \]
    Hence
    \[
    \begin{aligned}
        \widehat\mu_n^{\rho,K}(N_\varepsilon(\sigma)) & =\sum_{\lambda \vdash_d n} \dim \mc P_\lambda \dim \mc Q_\lambda\Tr(q_\lambda(U^\dagger \rho U) K_{\lambda,x}^\varepsilon) \\
        & \ge  \dim \mc P_{\lambda^{(n)}} \dim \mc Q_{\lambda^{(n)}} \Tr(q_{\lambda^{(n)}} (U^\dagger \rho U) K_{\lambda^{(n)},x}^\varepsilon).
    \end{aligned}
    \]
    Finally, using Lemma~\ref{lemma:rep-data}, $\dim \mc P_{\lambda^{(n)}} \approx \exp(n H(x))$ and $\dim \mc Q_{\lambda^{(n)}}$ is polynomial in $n$,  \[
    \limsup_{n\to\infty}
    -\frac1n\log
    \widehat\mu_n^{\rho,K}(N_\varepsilon(\sigma)) \le -H(x) + L_{\mathrm{Keyl}}(x,U^\dagger \rho U) = S^{\mathrm{Keyl}}(\sigma \| \rho).
    \]
\end{proof}

\appendix

\section{Properties of Keyl's rate function
} \label{section:Appendix_KeylsRateProperties}
This appendix records some basic properties of Keyl's rate function as described in \eqref{eq:Keyl-rate-intro}. First we characterize when $S^{\mathrm{Keyl}}(\sigma\|\rho)$ is infinite.  
\begin{proposition}\label{prop:Keyl-finiteness}
Let \(\sigma,\rho\in\mathcal D_d\), and write
\[
    \sigma=U\diag(x)U^\dagger,\quad x \in \Delta_d^\downarrow, \quad x_{d+1}:= 0.
\]
Then the following are equivalent:
\begin{enumerate}
    \item \label{condition:1} \(S^{\mathrm{Keyl}}(\sigma\|\rho)=+\infty\).
    \item \label{condition:2} There exists \(k\) with \(x_k>x_{k+1}\) such that \(\Delta_k(U^\dagger\rho U)=0\).
    \item \label{condition:3}\(\rank(\sigma)>\rank(\sigma\rho).\)
    \item \label{condition:4}$\supp(\sigma)\cap\ker(\rho)\neq\{0\}$.
\end{enumerate}
\end{proposition}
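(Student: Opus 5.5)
We prove the chain of equivalences \ref{condition:1}$\Leftrightarrow$\ref{condition:2} and \ref{condition:2}$\Leftrightarrow$\ref{condition:4}$\Leftrightarrow$\ref{condition:3}. Throughout we set $\tau:=U^\dagger\rho U$ and work in the basis $\{|i\rangle\}$ diagonalizing $\sigma$. Let $\Pi_k$ be the projector onto $\mathrm{span}\{|1\rangle,\dots,|k\rangle\}$, and let $r=\rank(\sigma)$, so that $x_r>x_{r+1}=0$.

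\textbf{Step 1: \ref{condition:1}$\Leftrightarrow$\ref{condition:2}.} Use the form \eqref{eq:Keyl-rate-intro}:
\[
S^{\mathrm{Keyl}}=-H(x)-\sum_k (x_k-x_{k+1})\log\Delta_k(\tau).
\]
Here $H(x)$ is finite. Each $\Delta_k(\tau)\le 1$, since it is a principal minor of a state, so every term $-(x_k-x_{k+1})\log\Delta_k(\tau)$ lies in $[0,\infty]$. A term with $x_k=x_{k+1}$ vanishes by convention. Hence the sum is infinite iff some $k$ with $x_k>x_{k+1}$ has $\Delta_k(\tau)=0$.

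\textbf{Step 2: \ref{condition:2}$\Leftrightarrow$\ref{condition:4}.} Since $\tau\ge0$, we have $\Delta_k(\tau)=0$ iff the compression $\Pi_k\tau\Pi_k$ is singular on $\mathrm{ran}\,\Pi_k$. This happens iff some nonzero $v\in\mathrm{ran}\,\Pi_k$ has $\langle v,\tau v\rangle=0$, i.e.\ $\tau v=0$. So $\Delta_k(\tau)=0$ iff $\mathrm{ran}\,\Pi_k\cap\ker\tau\ne\{0\}$.

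These subspaces are nested in $k$. Any $k$ with $x_k>x_{k+1}$ satisfies $k\le r$, and $k=r$ itself satisfies $x_r>x_{r+1}$. Therefore \ref{condition:2} is equivalent to $\mathrm{ran}\,\Pi_r\cap\ker\tau\neq\{0\}$. Conjugating back by $U$, $\mathrm{ran}\,\Pi_r$ becomes $\supp(\sigma)$ and $\ker\tau$ becomes $\ker\rho$, which gives \ref{condition:4}.

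\textbf{Step 3: \ref{condition:3}$\Leftrightarrow$\ref{condition:4}.} Use $\rank(\sigma\rho)=\rank(\rho\sigma)$, since the two are adjoints. The image of $\rho\sigma$ is $\rho(\supp\sigma)$. By rank–nullity applied to $\rho$ restricted to $\supp(\sigma)$,
\[
\rank(\rho\sigma)=\rank(\sigma)-\dim\bigl(\supp(\sigma)\cap\ker\rho\bigr).
\]
Hence $\rank(\sigma)>\rank(\sigma\rho)$ iff the intersection is nonzero.

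The only delicate point is the bookkeeping in Step 2. One must check that only indices with $x_k>x_{k+1}$ matter, and that the monotonicity in $k$ reduces everything to the single index $k=r$. The remaining steps are linear algebra.
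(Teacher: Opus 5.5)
Your proof is correct and follows essentially the same route as the paper. You read off (1)$\Leftrightarrow$(2) from \eqref{eq:Keyl-rate-intro}, and you get (2)$\Leftrightarrow$(4) from a null vector of the leading $k\times k$ block of $U^\dagger\rho U$ supported in the first $k\le r$ coordinates, with the converse realized at $k=r=\rank(\sigma)$. Along the way you make explicit two points the paper only asserts: each term is nonnegative because $\Delta_k(U^\dagger\rho U)\le 1$, and (3)$\Leftrightarrow$(4) rests on the rank--nullity identity $\rank(\rho\sigma)=\rank(\sigma)-\dim(\supp(\sigma)\cap\ker\rho)$.
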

\begin{proof}
    From the formula in \eqref{eq:Keyl-rate-intro}, we directly see that $S^{\mathrm{Keyl}}(\sigma\|\rho) = +\infty$ if and only if Condition \ref{condition:2} holds. Note that Condition~\ref{condition:3} being equivalent to Condition~\ref{condition:4} follows from elementary linear algebra. Then we only need to show that Condition~\ref{condition:2} is equivalent to Condition~\ref{condition:4}. First, assume Condition~\ref{condition:2}: there exists $k$ with $x_k > x_{k+1}$ such that \(\Delta_k(U^\dagger\rho U)=0\), then there exists $z = (y,0,\cdots,0), y \in \mb C^k$ such that $z^* (U^\dagger \rho U)z = 0$ implying $Uz \in \ker(\rho)$. Because $x_k > x_{k+1} \ge 0$, and $z$ is supported on the first $k$ components, $Uz \in \supp(\sigma)$ implies $Uz \in \supp(\sigma)\cap\ker(\rho)$. Now assume Condition~\ref{condition:4}, choose $0\neq w \in \supp(\sigma)\cap\ker(\rho)$ and define $z:= U^\dagger w$, which has the form $z = (y,0,\cdots,0), y \in \mb C^r$ where $r = \rank(\sigma)$. Since $w \in \ker(\rho)$, we have $U^\dagger \rho U z = 0$ hence $z^*U^\dagger \rho Uz = 0$ implying $\Delta_r(U^\dagger\rho U)=0$, concluding Condition~\ref{condition:2}.
\end{proof}

Notice that the finiteness condition for $S^{\mathrm{Keyl}}(\sigma\|\rho)$ is weaker than the condition for $D(\sigma\|\rho)< +\infty$, in the sense that
\[
    \{(\sigma, \rho): S^{\rm{Keyl}}(\sigma\|\rho) = \infty\} \subsetneq \{(\sigma, \rho): D(\sigma\|\rho) = \infty\}.
\]
For example, taking $\sigma = \ket{0}\bra{0}$ and $\rho = \ket{+}\bra{+}$, we have $D(\sigma\|\rho) = \infty$, while $S^{\mathrm{Keyl}}(\sigma\|\rho) = - \log F(\sigma, \rho) < +\infty$. 

Now we are able to prove Proposition~\ref{prop:Keyl-vs-relative-entropy}:
\begin{proof}[Proof of Proposition~\ref{prop:Keyl-vs-relative-entropy}]
If $D(\sigma \| \rho) = +\infty$, then the equality condition directly follows from Proposition~\ref{prop:Keyl-finiteness}. Without loss of generality, we assume $D(\sigma \| \rho) < +\infty$.

Write $\sigma=\sum_{j=1}^d x_j |\psi_j\rangle\langle \psi_j|,\ x_1\ge \cdots\ge x_d$, and define
\[
    \Pi_k:=\sum_{j=1}^k |\psi_j\rangle\langle \psi_j|,
    \qquad
    x_{d+1}:=0.
\]
Then $\sigma=\sum_{k=1}^d (x_k-x_{k+1})\Pi_k$. Moreover, for $U = \sum_{j=1}^d |\psi_j\rangle \langle j |$, we have $\sigma = U \diag(x) U^\dagger$ and
\[
    \Delta_k(U^\dagger \rho U) =\Delta_k\left(\sum_{j,l=1}^d \langle \psi_j|\rho |\psi_l\rangle |j\rangle\langle l |\right) =  \det(\Pi_k\rho\Pi_k),
\]
where \(\Pi_k\rho\Pi_k\) is regarded as an operator fully supported on \(\Pi_k\mathcal H\) since $\supp(\sigma) \subseteq \supp(\rho)$. Hence
\[
    \log \Delta_k(U^\dagger\rho U)=\Tr\log(\Pi_k\rho\Pi_k).
\]
Therefore
\[
\begin{aligned}
    D(\sigma\|\rho)-S^{\mathrm{Keyl}}(\sigma\|\rho)
    &=
    \sum_{k=1}^d (x_k-x_{k+1})
    \Bigl[
        \Tr\log(\Pi_k\rho\Pi_k)
        -
        \Tr(\Pi_k\log\rho \Pi_k)
    \Bigr].
\end{aligned}
\]
Since \(\log\) is operator monotone, operator Jensen's inequality \cite{Hansen1979} gives
\[
    \log(\Pi_k\rho\Pi_k)\ge\Pi_k(\log\rho)\Pi_k.
\]
Taking traces, we have $D(\sigma\|\rho)\ge S^{\mathrm{Keyl}}(\sigma\|\rho)$.

Now consider equality condition. The equality condition of Jensen's inequality for the strictly concave function \(\log\) says that
\[
\log(\Pi_k\rho\Pi_k)=\Pi_k(\log\rho)\Pi_k
\]
holds if and only if $[\Pi_k,\rho]=0$. Since every coefficient \(x_k-x_{k+1}\) is nonnegative, equality
\[
    D(\sigma\|\rho)=S^{\mathrm{Keyl}}(\sigma\|\rho)
\]
holds if and only if
\[
    [\Pi_k,\rho]=0
    \qquad
    \text{for every } k \text{ with } x_k>x_{k+1}.
\]
Using $\sigma=\sum_{k=1}^d (x_k-x_{k+1})\Pi_k$, we have $[\sigma,\rho] = 0$, concluding the necessity part. For the sufficiency part, if \([\sigma,\rho]=0\), it is direct to see $S^{\mathrm{Keyl}}(\sigma\|\rho)= D(\sigma\|\rho)$.
\end{proof}

\section{General contraction principle of large deviations}\label{app:LDP}
In this section, we review a general contraction principle thus concluding the proof of LDP for the estimator defined in~\eqref{eq:estimator-relative-entropy}. 

Let \(\mc X,\mc Y\) be two Polish spaces, let \(\{\mu_n\}_{n\ge 1}\) be a sequence of
probability measures on \(\mc X\), and let $\Phi:\mc X\to \mc Y$ be a measurable map. For each $n\ge 1$, define $\nu_n:=\mu_n\circ \Phi^{-1}$ as a probability measure on \(\mc Y\). For any Borel set $A$, denote 
\[
\mathrm{int}(A),\quad \mathrm{cl}(A)
\]
as the interior set (maximal open set contained in $A$) and closure of A (minimal closed set containing $A$). 
The following result is a simpler version of Proposition 4.4 in \cite{Mariani2018}.
\medskip
\begin{theorem}\label{thm:contraction-LDP}
Suppose $\{\mu_n\}_{n\ge 1}$ satisfies a LDP with rate function $I(\cdot)$. Then for any open set \(O\subseteq \mc  Y\),
    \[
    \limsup_{n\to\infty}\left(-\frac1n \log \nu_n(O)\right)\le\inf_{y\in O} \overline J(y),
    \]
    where $N_\delta(y)$ is the open ball centered at $y$ with radius $\delta$ and
    \[
    \overline J(y):=\inf \left\{I(x):x\in\lim_{\delta\downarrow 0}\mathrm{int}\left(\Phi^{-1}\bigl(N_\delta(y)\bigr) \right)\right\}.
    \]
 If \(\{\mu_n\}\) is exponentially tight, i.e., $\inf_{\text{compact}\ K\subseteq \mc X} \limsup_{n\to \infty} \frac{1}{n} \log \mu_n(K^c) = -\infty$, then for any compact set \(K\subseteq \mc Y\),
    \[
    \liminf_{n\to\infty}\left(-\frac1n \log \nu_n(K)\right)\ge\inf_{y\in K} \underline J(y),
    \]
    where
    \[
    \underline J(y):=\inf\left\{I(x):\ x\in\lim_{\delta\downarrow 0}\mathrm{cl}\left(\Phi^{-1}\bigl(N_\delta(y)\bigr)\right)\right\}.
    \]
\end{theorem}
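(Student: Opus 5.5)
The plan is to treat the two bounds separately. The open-set bound will come from the LDP lower bound applied to open sets of $\mc X$ sitting inside preimages. The compact-set bound will come from a finite covering argument combined with the LDP upper bound on closed sets. Throughout I read $\lim_{\delta\downarrow0}$ of the monotone families as the intersection over $\delta>0$. As $\delta$ decreases, $\Phi^{-1}(N_\delta(y))$ decreases, so both $\mathrm{int}(\cdot)$ and $\mathrm{cl}(\cdot)$ of it form decreasing families.

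\textbf{Open sets.} Fix an open $O\subseteq\mc Y$ and $y\in O$ with $\overline J(y)<\infty$; otherwise there is nothing to prove. Take any $x$ in $\bigcap_{\delta>0}\mathrm{int}(\Phi^{-1}(N_\delta(y)))$, and choose $\delta$ small enough that $N_\delta(y)\subseteq O$. Then
\[
\nu_n(O)\ge \mu_n\bigl(\Phi^{-1}(N_\delta(y))\bigr)\ge \mu_n\bigl(\mathrm{int}\,\Phi^{-1}(N_\delta(y))\bigr).
\]
The last set is open and contains $x$, so the open-set LDP lower bound for $\{\mu_n\}$ gives
\[
\limsup_{n\to\infty}-\tfrac1n\log\nu_n(O)\le I(x).
\]
Taking the infimum over such $x$ and then over $y\in O$ yields the first claim.

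\textbf{Compact sets.} Fix a compact $K\subseteq\mc Y$, a level $a<\inf_K\underline J$, and $\eta>0$. The key step is to show that for each $y\in K$ there is $\delta_y>0$ with
\[
\inf\bigl\{I(x):x\in C_{\delta_y}(y)\bigr\}\ge a-\eta,\qquad C_\delta(y):=\mathrm{cl}\,\Phi^{-1}(N_\delta(y)).
\]
Suppose this fails. Then for $\delta_m\downarrow0$ there are $x_m\in C_{\delta_m}(y)$ with $I(x_m)\le a-\eta/2$. Exponential tightness together with the LDP makes $I$ a good rate function, so the level set $\{I\le a\}$ is compact. Hence some subsequence of $x_m$ converges to a point $x$. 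Because the sets $C_\delta(y)$ are closed and decreasing, $x$ lies in every $C_\delta(y)$. Lower semicontinuity then gives $I(x)\le a-\eta/2$, so $\underline J(y)<a$, which is a contradiction.

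With the $\delta_y$ in hand, cover $K$ by finitely many balls $N_{\delta_{y_j}}(y_j)$. Then
\[
\nu_n(K)\le\sum_j\mu_n\bigl(C_{\delta_{y_j}}(y_j)\bigr).
\]
Apply the closed-set LDP upper bound to each term; exponential tightness lets us use it for closed, not necessarily compact, sets. The principle of the largest term for a finite sum then gives
\[
\liminf_{n\to\infty}-\tfrac1n\log\nu_n(K)\ge a-\eta.
\]
Letting $\eta\to0$ and $a\uparrow\inf_K\underline J$ finishes the argument. The case where the infimum is $+\infty$ is handled by letting $a\to\infty$.

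The main obstacle is the interchange step: passing from ``$\inf I$ over $C_\delta(y)$ is small for every $\delta$'' to ``$\inf I$ over $\bigcap_\delta C_\delta(y)$ is small''. This is exactly where compactness of the level sets of $I$, obtained from exponential tightness, is needed. Without it the infimum could escape to infinity. I would first confirm that goodness of $I$ follows from exponential tightness plus the LDP on a Polish space, or simply use item 3 of Definition~\ref{def:TomRateFunc}, which is available in our application.
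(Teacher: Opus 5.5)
Your argument is correct, but it does not follow the paper's route. The paper gives no proof: it imports the statement as a special case of Proposition~4.4 in \cite{Mariani2018}, where contraction principles for measurable maps are derived within a $\Gamma$-convergence reformulation of large deviations. You instead work directly from the LDP bounds.

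The open-set part is the lower bound applied to the open set $\mathrm{int}\,\Phi^{-1}(N_\delta(y))\subseteq\Phi^{-1}(O)$, and it needs neither tightness nor goodness. The compact-set part rests on two steps:
\begin{enumerate}
\item Compactness of level sets lets you interchange the infimum with $\bigcap_{\delta>0}$, so $\inf_{C_\delta(y)}I\uparrow\underline J(y)$ as $\delta\downarrow0$. The subsequence step placing the limit point in every closed, decreasing $C_\delta(y)$ is fine.
\item A finite subcover of $K$, the closed-set upper bound applied to each $C_{\delta_{y_j}}(y_j)$, and the largest-term principle finish the argument.
\end{enumerate}

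The goodness of $I$ that you wanted to confirm is immediate. Take $K_\alpha$ compact with $\limsup_n\frac1n\log\mu_n(K_\alpha^c)<-\alpha$. The open-set lower bound applied to $K_\alpha^c$ gives $\inf_{K_\alpha^c}I>\alpha$, so $\{I\le\alpha\}$ is a closed subset of $K_\alpha$. In the paper's application $\mc X=\mc D_d\times[0,1]$ is compact, so this holds automatically anyway.

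Conversely, under the paper's notion of LDP (Definition~\ref{def:TomRateFunc}) the upper bound already holds for every closed set. So you do not need tightness to apply it to $C_{\delta_{y_j}}(y_j)$; that remark matters only if one starts from a weak LDP.

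Your version gives a short, self-contained appendix and shows that tightness enters in only one place, the interchange of the infimum with the intersection over $\delta$. Citing Mariani gives a more general framework, at the cost of a substantial detour.
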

Using Theorem \ref{thm:contraction-LDP}, we can show that the estimator $\widehat \theta_n$ defined in~\eqref{eq:estimator-relative-entropy} satisfies a LDP. In fact, let $\mc X = \mc D_d \times [0,1]$, $\mc Y = \mc D_d$ and recall that $p_\tau = \Tr(\tau^{\otimes \ell} M_\ell)$,
\[
\Phi_{\omega_*}(\tau,p)=
    \begin{cases}
    \tau, & |p - p_\tau|\le\eta,\\
    \omega_*, & |p - p_\tau|>\eta,
     \end{cases}
\]
If $\sigma \neq \omega_*$, then for $\delta$ very small such that $\omega_* \not\in N_\delta(\sigma)$, we have 
\[
\Phi_{\omega_*}^{-1}(N_\delta(\sigma)) = \{(\tau,p): |p - p_\tau| < \eta,\ \tau \in N_\delta(\sigma)\}, 
\]
which implies
\[
\lim_{\delta\downarrow 0}\mathrm{int}\left(\Phi_{\omega_*}^{-1}\bigl(N_\delta(\sigma)\bigr)\right) = \{(\sigma,p): |p - p_\sigma|<\eta\},\quad \lim_{\delta\downarrow 0}\mathrm{cl}\left(\Phi_{\omega_*}^{-1}\bigl(N_\delta(\sigma)\bigr)\right) = \{(\sigma,p): |p - p_\sigma|\le \eta\}
\]
If $\sigma = \omega_*$, 
\[
\Phi_{\omega_*}^{-1}(N_\delta(\omega_*)) = \{(\tau,p): |p - p_\tau| < \eta,\ \tau \in N_\delta(\omega_*)\} \bigcup \{(\tau,p): |p - p_\tau| \ge \eta,\ \tau \in \mc D_d\}.
\]
This implies
\[
\begin{aligned}
    & \lim_{\delta\downarrow 0} \mathrm{int} \left(\Phi_{\omega_*}^{-1}(N_\delta(\omega_*))\right) = \{(\omega_*,p): |p - p_{\omega_*}|<\eta\} \bigcup \{(\tau,p): |p - p_\tau| > \eta\}, \\ 
    &\lim_{\delta\downarrow 0}\mathrm{cl} \left(\Phi_{\omega_*}^{-1}(N_\delta(\omega_*))\right) = \{(\omega_*,p): |p - p_{\omega_*}|\le \eta\} \bigcup \{(\tau,p): |p - p_\tau| \ge \eta\}
\end{aligned}
\]
Recall that $(\widehat \rho_n, \overline Z_n)$ satisfies a LDP with rate function $R(\sigma,p\|\rho)$ and for any fixed $\rho$, $R(\cdot,\cdot\|\rho)$ is continuous on $\mc D_d\times [0,1]$. Thus we get the same infimum over $\lim_{\delta\downarrow 0}\mathrm{int}\left(\Phi_{\omega_*}^{-1}\bigl(N_\delta(\sigma)\bigr)\right)$ and $\lim_{\delta\downarrow 0}\mathrm{cl}\left(\Phi_{\omega_*}^{-1}\bigl(N_\delta(\sigma)\bigr)\right)$, implying $\overline J = \underline J$ thus a LDP for $\widehat \theta_n$.

\bibliography{optimal}

\providecommand{\etalchar}[1]{$^{#1}$}
\begin{thebibliography}{HCMT{\etalchar{+}}26}

\bibitem[AD15]{audenaert2015alpha}
K.~M.~R. Audenaert and N.~Datta.
\newblock {\em ``{$\alpha$-$z$-R{\'e}nyi relative entropies}''}.
\newblock \href{http://dx.doi.org/10.1063/1.4906367}{Journal of Mathematical Physics {\bf 56}(2):\,022202} (2015).

\bibitem[BBH21]{berta2021composite}
M.~Berta, F.~G. S.~L. Brand{\~a}o, and C.~Hirche.
\newblock {\em ``{On composite quantum hypothesis testing}''}.
\newblock \href{http://dx.doi.org/10.1007/s00220-021-04133-8}{Communications in Mathematical Physics {\bf 385}(1):\,55--77} (2021).

\bibitem[BCV21]{botero2021large}
A.~Botero, M.~Christandl, and P.~Vrana.
\newblock {\em ``{Large deviation principle for moment map estimation}''}.
\newblock \href{http://dx.doi.org/10.1214/21-EJP636}{Electronic Journal of Probability {\bf 26}:\,1--23} (2021).

\bibitem[Chr06]{Christandl2006BipartiteStates}
M.~Christandl.
\newblock {\em ``{The Structure of Bipartite Quantum States -- Insights from Group Theory and Cryptography}''}, (2006).
\newblock Available online: \url{https://arxiv.org/abs/quant-ph/0604183}.

\bibitem[CM06]{ChristandlMitchison2006}
M.~Christandl and G.~Mitchison.
\newblock {\em ``The spectra of density operators and the Kronecker coefficients of the symmetric group''}.
\newblock \href{http://dx.doi.org/10.1007/s00220-005-1435-1}{Communications in Mathematical Physics {\bf 261}(3):\,789--797} (2006).

\bibitem[FH91]{FultonHarris1991}
W.~Fulton and J.~Harris.
\newblock {\em Representation Theory: A First Course}.
\newblock volume 129 of {\em Graduate Texts in Mathematics}, \href{http://dx.doi.org/10.1007/978-1-4612-0979-9}{Springer} (1991).

\bibitem[FMVW25]{frenkel2025errorbounds}
P.~E. Frenkel, M.~Mosonyi, P.~Vrana, and M.~Weiner.
\newblock {\em ``Error bounds for composite quantum hypothesis testing and a new characterization of the weighted Kubo-Ando geometric means''}, (2025).
\newblock Available online: \url{https://arxiv.org/abs/2503.13379}.

\bibitem[GW09]{GoodmanWallach2009}
R.~Goodman and N.~R. Wallach.
\newblock {\em Symmetry, Representations, and Invariants}.
\newblock volume 255 of {\em Graduate Texts in Mathematics}, \href{http://dx.doi.org/10.1007/978-0-387-79852-3}{Springer} (2009).

\bibitem[Han79]{Hansen1979}
F.~Hansen.
\newblock {\em ``An Operator Inequality.''}.
\newblock \href{http://dx.doi.org/10.1007/BF01371046}{Mathematische Annalen {\bf 246}:\,249--250} (1979).

\bibitem[Har05]{Harrow2005SchurTransform}
A.~W. Harrow.
\newblock {\em ``{Applications of coherent classical communication and the Schur transform to quantum information theory}''}, (2005).
\newblock Available online: \url{https://arxiv.org/abs/quant-ph/0512255}.

\bibitem[Hay01]{Hayashi_2001}
M.~Hayashi.
\newblock {\em ``Asymptotics of quantum relative entropy from a representation theoretical viewpoint''}.
\newblock \href{http://dx.doi.org/10.1088/0305-4470/34/16/309}{Journal of Physics A: Mathematical and General {\bf 34}(16):\,3413–3419} (2001).

\bibitem[Hay17]{hayashi2017}
M.~Hayashi.
\newblock {\em A group theoretic approach to quantum information}.
\newblock \href{http://dx.doi.org/10.1007/978-3-319-45241-8}{Springer} (2017).

\bibitem[Hay25]{hayashi2025another}
M.~Hayashi.
\newblock {\em ``Another quantum version of sanov theorem''}.
\newblock \href{http://dx.doi.org/10.1007/s00023-025-01612-9}{Annales Henri Poincar{\'e} pages 1--22} (2025).

\bibitem[HCMT{\etalchar{+}}26]{hu2026sample}
Y.~Hu, E.~Cervero-Mart{\'\i}n, E.~Theil, L.~Man{\v{c}}inska, and M.~Tomamichel.
\newblock {\em ``Sample-optimal and memory-efficient quantum state tomography''}.
\newblock \href{http://dx.doi.org/https://doi.org/10.1103/z93d-tw3m}{Physical Review A {\bf 113}(5):\,052446} (2026).

\bibitem[HF26]{hayashi2026operational}
M.~Hayashi and K.~Fang.
\newblock {\em ``{Operational interpretation of the reverse sandwiched R{\'e}nyi divergences in composite quantum hypothesis testing}''}, (2026).
\newblock Available online: \url{https://arxiv.org/abs/2605.02203}.

\bibitem[HHJ{\etalchar{+}}17]{Haah_2017}
J.~Haah, A.~W. Harrow, Z.~Ji, X.~Wu, and N.~Yu.
\newblock {\em ``Sample-optimal tomography of quantum states''}.
\newblock \href{http://dx.doi.org/10.1109/tit.2017.2719044}{IEEE Transactions on Information Theory {\bf 63}(9):\,5628 -- 5641} (2017).

\bibitem[HM02]{HayashiMatsumoto2002}
M.~Hayashi and K.~Matsumoto.
\newblock {\em ``Quantum universal variable-length source coding''}.
\newblock \href{http://dx.doi.org/10.1103/PhysRevA.66.022311}{Physical Review A {\bf 66}:\,022311} (2002).

\bibitem[Hol00]{hollander2000large}
F.~Hollander.
\newblock {\em Large deviations}.
\newblock volume~14, \href{http://dx.doi.org/https://doi.org/10.1090/fim/014}{American Mathematical Soc.} (2000).

\bibitem[HP91]{hiai1991proper}
F.~Hiai and D.~Petz.
\newblock {\em ``{The proper formula for relative entropy and its asymptotics in quantum probability}''}.
\newblock \href{http://dx.doi.org/10.1007/BF02100287}{Communications in Mathematical Physics {\bf 143}(1):\,99--114} (1991).

\bibitem[JR25]{ji2025beyond}
K.~Ji and B.~Regula.
\newblock {\em ``{Beyond Hoeffding and Chernoff: Trading conclusiveness for advantages in quantum hypothesis testing}''}, (2025).
\newblock Available online: \url{https://arxiv.org/abs/2510.07601}.

\bibitem[Key06]{Keyl_2006}
M.~Keyl.
\newblock {\em ``Quantum state estimation and large deviations''}.
\newblock \href{http://dx.doi.org/10.1142/s0129055x06002565}{Reviews in Mathematical Physics {\bf 18}(01):\,19–60} (2006).

\bibitem[KW01]{KeylWerner2001}
M.~Keyl and R.~F. Werner.
\newblock {\em ``Estimating the spectrum of a density operator''}.
\newblock \href{http://dx.doi.org/10.1103/PhysRevA.64.052311}{Physical Review A {\bf 64}(5):\,052311} (2001).

\bibitem[Lam25]{lami2025generalised}
L.~Lami.
\newblock {\em ``Generalised quantum Sanov theorem revisited''}, (2025).
\newblock Available online: \url{https://arxiv.org/abs/2510.06340}.

\bibitem[LBCR{\etalchar{+}}24]{lipka2024quantum}
P.~Lipka-Bartosik, C.~T. Chubb, J.~M. Renes, M.~Tomamichel, and K.~Korzekwa.
\newblock {\em ``{Quantum dichotomies and coherent thermodynamics beyond first-order asymptotics}''}.
\newblock \href{http://dx.doi.org/10.1103/PRXQuantum.5.020335}{PRX Quantum {\bf 5}(2):\,020335} (2024).

\bibitem[Led25]{Leditzky2025RepTheoryQI}
F.~Leditzky.
\newblock {\em ``{Representation-theoretic methods in quantum information theory}''}.
\newblock Lecture notes, (2025).
\newblock Available online: \url{https://felixleditzky.info/files/Felix%20Leditzky%20-%20Math%20595%20Representation-theoretic%20methods%20in%20quantum%20information%20theory.pdf}.

\bibitem[Mar18]{Mariani2018}
M.~Mariani.
\newblock {\em ``A {$\Gamma$}-convergence approach to large deviations''}.
\newblock \href{http://dx.doi.org/https://doi.org/10.2422/2036-2145.201301_010}{Annali della Scuola Normale Superiore di Pisa, Classe di Scienze {\bf 18}(3):\,951--976} (2018).

\bibitem[MSW22]{mosonyi2020error}
M.~Mosonyi, Z.~Szil{\'a}gyi, and M.~Weiner.
\newblock {\em ``{On the error exponents of binary state discrimination with composite hypotheses}''}.
\newblock \href{http://dx.doi.org/10.1109/TIT.2021.3125683}{IEEE Transactions on Information Theory {\bf 68}(2):\,1032--1067} (2022).

\bibitem[N{\"o}t15]{notzel2015class}
J.~N{\"o}tzel.
\newblock {\em ``{A class of permutation-invariant measurements and their relation to quantum relative entropies}''}, (2015).
\newblock Available online: \url{https://arxiv.org/abs/1507.04609}.

\bibitem[ON00]{ogawa2000strong}
T.~Ogawa and H.~Nagaoka.
\newblock {\em ``Strong converse and Stein's lemma in quantum hypothesis testing''}.
\newblock \href{http://dx.doi.org/10.1109/18.887855}{IEEE Transactions on Information Theory {\bf 46}(7):\,2428--2433} (2000).

\bibitem[OW16]{o2016efficient}
R.~O'Donnell and J.~Wright.
\newblock {\em ``Efficient quantum tomography''}.
\newblock \href{http://dx.doi.org/https://doi.org/10.1145/2897518.2897544}{Proceedings of the forty-eighth annual ACM symposium on Theory of Computing pages 899--912} (2016).

\bibitem[OW21]{ODonnell2021}
R.~O'Donnell and J.~Wright.
\newblock {\em ``{Learning and Testing Quantum States via Probabilistic Combinatorics and Representation Theory}''}.
\newblock Survey note, (2021).
\newblock Available online: \url{https://www.cs.cmu.edu/~odonnell/papers/learning-quantum-states.pdf}.

\bibitem[PSTW25]{pelecanos2025mixed}
A.~Pelecanos, J.~Spilecki, E.~Tang, and J.~Wright.
\newblock {\em ``Mixed state tomography reduces to pure state tomography''}, (2025).
\newblock Available online: \url{https://arxiv.org/abs/2511.15806}.

\bibitem[PSW25]{pelecanos2025debiased}
A.~Pelecanos, J.~Spilecki, and J.~Wright.
\newblock {\em ``The debiased Keyl's algorithm: a new unbiased estimator for full state tomography''}, (2025).
\newblock Available online: \url{https://arxiv.org/abs/2510.07788}.

\bibitem[Sag01]{Sagan2001}
B.~E. Sagan.
\newblock {\em The Symmetric Group: Representations, Combinatorial Algorithms, and Symmetric Functions}.
\newblock volume 203 of {\em Graduate Texts in Mathematics}, \href{http://dx.doi.org/10.1007/978-1-4757-6804-6}{Springer} (2001).

\bibitem[Sch01]{schur1901klasse}
I.~Schur.
\newblock {\em {\"U}ber eine Klasse von Matrizen, die sich einer gegebenen Matrix zuordnen lassen}.
\newblock  (1901).

\bibitem[SSW25]{scharnhorst2025optimal}
T.~Scharnhorst, J.~Spilecki, and J.~Wright.
\newblock {\em ``Optimal lower bounds for quantum state tomography''}, (2025).
\newblock Available online: \url{https://arxiv.org/abs/2510.07699}.

\bibitem[Wey46]{weyl1946classical}
H.~Weyl.
\newblock {\em The classical groups: their invariants and representations}.
\newblock volume~1, \href{http://dx.doi.org/https://doi.org/10.2307/j.ctv3hh48t}{Princeton university press} (1946).

\end{thebibliography}
\end{document}